\def\ps@pprintTitle{%
 \let\@oddhead\@empty
 \let\@evenhead\@empty
 \def\@oddfoot{\centerline{\thepage}}%
 \let\@evenfoot\@oddfoot}
\newcommand\independent{\protect\mathpalette{\protect\independenT}{\perp}}
\def\independenT#1#2{\mathrel{\rlap{$#1#2$}\mkern2mu{#1#2}}}
\newtheorem{theorem}{Theorem}
\newtheorem{corollary}[theorem]{Corollary}
\newtheorem{proposition}{Proposition}
\newtheorem{lemma}{Lemma}
\newtheorem{assumption}{Assumption}
\newtheorem*{fact}{Fact}
\theoremstyle{definition}
\newcolumntype{L}[1]{>{\raggedright\let\newline\\arraybackslash\hspace{0pt}}m{#1}}
\newcolumntype{C}[1]{>{\centering\let\newline\\arraybackslash\hspace{0pt}}m{#1}}
\newcolumntype{R}[1]{>{\raggedleft\let\newline\\arraybackslash\hspace{0pt}}m{#1}}
\begin{document}

\begin{frontmatter}

\title{Valuation of contingent convertible catastrophe bonds - the case for equity conversion}
\author{Krzysztof Burnecki$^a$, Mario Nicol\'o Giuricich$^b$ and  Zbigniew Palmowski$^a$\footnote{Corresponding author: \texttt{zbigniew.palmowski@pwr.edu.pl}} }

\address{$^a$Faculty of Pure and Applied Mathematics, Hugo Steinhaus Center, Wroclaw University of Science and Technology, Poland\\ $^b$The African Institute for Financial Markets and Risk Management, University of Cape Town, South Africa }

\begin{abstract}
Within the context of the banking-related literature on contingent convertible bonds, we comprehensively formalise the design and features of a relatively new type of insurance-linked security, called a contingent convertible catastrophe bond (CocoCat). We begin with a discussion of its design and compare its relative merits to catastrophe bonds and catastrophe-equity puts. Subsequently, we derive analytical valuation formulae for index-linked CocoCats under the assumption of independence between natural catastrophe and financial markets risks. We model natural catastrophe losses by a time-inhomogeneous compound Poisson process, with the interest-rate process governed by the Longstaff model. By using an exponential change of measure on the loss process, as well as a Girsanov-like transformation to  synthetically remove the correlation between the share and interest-rate processes, we obtain these analytical formulae. Using selected parameter values in line with earlier research, we empirically analyse our valuation formulae for index-linked CocoCats. An analysis of the results reveals that the CocoCat prices are most sensitive to changing interest-rates, conversion fractions and the threshold levels defining the trigger times.
\end{abstract}

\begin{keyword}
Catastrophe risk \sep Contingent convertible bond \sep Time-inhomogeneous compound Poisson process  \sep Longstaff model \sep Risk neutral measure \sep Heavy-tailed data.

\end{keyword}

\end{frontmatter}

\section{Introduction} \label{sec:introduction}
\noindent Given the pervasiveness of urbanisation in natural catastrophe-prone areas and also the untoward impacts of global warming, insurers, reinsurers and governments have been suffering from substantial natural catastrophe-related losses. Insurers typically deal with this ever-increasing risk by either reinsuring it in the reinsurance market or securitising this risk in the capital markets. Since the capital markets have access to larger, more diversified and more liquid pools of capital as opposed to the equity of reinsurers, such capital markets possess a notable advantage over reinsurance markets when it comes to the financing of catastrophe risk \citep{durbin2001managing}. Therefore, the search for ways of accessing the alternative, rich and robust sources of capital -- from the capital markets -- for the financing of contagion-risk and catastrophe-risk exposed entities has ignited a wave of innovative financial products. In this context, insurance-linked securities (ILS) have been at the fore, with the most prominent type of such products being the catastrophe (CAT) bond, a fully-collateralised debt security which pays off on the occurrence of a pre-defined catastrophic event \citep{cummins2008cat}. Other examples have been catastrophe options, catastrophe futures and catastrophe-equity (CAT-E) puts. However, the market places for each of these instruments are now extinct \citep{braun, wang2016catastrophe}, given low trading volumes. 

On the basis of the firstly the demise of such instruments' marketplaces. secondly the recent expansion in academic literature on these instruments, thirdly the increase in globally-occurring natural catastrophe risk and fourthly the growth of the ILS catastrophe-bond market \citep{swissReport2009}, it seems plausible to suggest the following. There may be a need for novel and alternative sources of funding for catastrophe-prone entities, varying away from the more traditional catastrophe bonds.

Recently in the banking industry, ``contingent capital" instruments, such as contingent convertible (Coco) bonds, have gained the support of various academics, practitioners, economists, regulators and banks as a potential avenue to reduce the need for bailouts of institutions that are classified as `too-big-to-fail' \citep{rudlinger2015contingent, flannery2016stabilizing}. Contingent capital instruments are a type of debt instrument with a loss-absorbing mechanism: that is, they are automatically converted into common equity or written down when a pre-specified trigger event occurs \citep{flannery2016stabilizing}. It is in the very specification of these contingent capital instruments where we see their application to insurance and reinsurance. According to our interviews with industry practitioners and a number of press releases online, many global insurers and also reinsurers such as the Munich Reinsurance Company, SwissReinsurance Company, Hannover Reinsurance Company and the Reinsurance Group of America, are often referred to as being `too-big-to-fail'. Given the success of contingent capital instruments, such as Coco bonds, we now specify a special type of Coco bond for insurers and reinsurers (both of which, in this research, will collectively be referred to as ``issuers"). We believe that the issuance of such a special Coco bond, which we shall call a CocoCat, will help stabilise their issuers' balance sheets in times of distress; in particular in times of extreme natural catastrophes potentially spurring on large non-independent insurance-related losses.

In view of the above, this paper is organised as follows. Section \ref{CocoCats_case} provides a brief discussion of what we believe is the first CocoCat issued, and thereafter goes on to formalise the mechanics, structure and features of such a CocoCat, being cognisant of the literature on traditional Coco bonds. Thereafter, we attempt to give a valuation framework for a specific type of CocoCat - namely one linked to an insurance loss index such as the Property Claims Services (PCS) index. We call this type of CocoCat an Index-linked (IL) CocoCat.

Section \ref{section_setup} describes the necessary joint asset, loss and interest-rate processes under the real-world (or physical) probability measure, needed to price the IL CocoCat in the context of our model. Also, an important assumption is introduced: we assume that natural catastrophe risk and financial markets risk are independent - and such an assumption allows for convenient pricing formulae.

Section \ref{section_valuation} uses the dynamics of the various processes driving the price of the IL CocoCat under the real-world measure to price it under a risk-neutral measure, by introducing a specific measure change suited to the chosen risk-neutral measure. Thereafter, analytical IL CocoCat pricing formulae are derived. In order to simplify pricing formulae and to avoid numerical simulation of the financial markets variables (namely interest-rates and share prices), we employ an exponential change of measure for the loss process and also a Girsanov-like measure change to remove the assumed correlation between interest-rates and share prices.

Section \ref{section_empirics} uses the pricing formulae derived in Section \ref{section_valuation} in order to empirically study the behaviour of the IL CocoCat prices with changes in the model's various parameters. Such an analysis is useful from a contract design perspective for the issuer, for the inclusion of pitch books of ILS structurers, but is also important to the investor in the IL CocoCat (and other types of CocoCats as well). Finally, in Section \ref{section_conclusions}, we state our conclusions and recommendations for further research into this new and interesting topic.

\section{CocoCats - the case for equity convertibles}
\label{CocoCats_case}

\subsection{Background and instrument design}
\label{background_design}
\noindent  In October 2013, a new reinsurance-hybrid security was placed in the capital markets. The Swiss Reinsurance Company (SwissRe) pioneered the creation of a CHF 175 million contingent convertible bond, primarily to sell off hurricane tail risk to a wider pool of investors. The hybrid security has a term of 32 years, pays an annual coupon of 7.5\% and redeems at par, unless triggered. Such returns are reasonable in the ILS markets, wherein ror example EU and USA-based markets investors typically demand returns of 5 to 7\%. The trigger event is unusual in the classical context of Coco (and CAT) bonds, in that it is a dual trigger: the bond triggers if either a 1-in-200 year Atlantic hurricane\footnote{According to press releases by Reuters, Bloomberg and SwissRe, the probability of the occurrence of such an event is low compared to catastrophic events upon which catastrophe bonds are more commonly based (such as 1-in-30 or 1-in-50 year events).} occurs during the term (which is unusual in traditional capital-raising exercises), or SwissRe's solvency ratio (as determined by the Swiss Solvency Test, and reported to the Swiss Financial Market Supervisory Authority at the statutory reporting date) falls below 135\%. Should either trigger event occur, investors lose their entire principal; see \citep{swissReport2013}.

Such novel hybrid securities are, indeed, appealing to both issuers and investors. In a low interest-rate environment and a rising equity market, high-yielding coupon-rates on such novel debt issuances cannot be overlooked by capital markets investors. Moreover, such issuances are attractive to these investors from the perspective of diversification, in that firstly catastrophe risks are remote from financial market risks; and secondly such novel securities differ from the more traditional insurance-linked securities. Nevertheless, it must be borne in mind that the diversification benefits in the case of SwissRe's issue are not as pronounced as in the case of a traditional CAT bond. This is due to the existence of the dual trigger, part of which is based on financial market events. But ultimately, such novel securities -- if structured differently to that of SwissRe's -- can potentially offer rare opportunities to recoup the full amount of principal invested over time should equity prices rise, \textit{ex post} the catastrophe. Finally from the issuer's perspective, such securities could help satisfy regulatory solvency requirements and could reduce probabilities of default \textit{ex post} under Solvency II frameworks in the EU and (in certain cases) the capital-requirement frameworks set out by the National Association of Insurance Commissioners in the USA. Moreover, the coupon payments on such bonds can provide a degree of tax relief for banks \citep{rudlinger2015contingent} and potentially insurers and reinsurers. But most importantly, such instruments can help issuers shift catastrophe-related tail risk off their balance sheets via a novel way and can provide certainty on the capital to be received \textit{ex post} the trigger.

We now nestle the special Coco bond issue by SwissRe into a more formal setting, and attempt to formalise what is meant by a CocoCat. Since the market for such  securities is still in its infancy and given the little scholarly attention to date, it appears clear to posit that, to the extent of our knowledge, no formal definition of a contingent convertible catastrophe (CocoCat) bond exists in the academic literature. In the corporate liabilities sphere, a Coco bond is defined to be a debt instrument\footnote{The debt instrument can be zero-coupon, a fixed-coupon or (more commonly) a floating-coupon (with a fixed spread) bond.} (for accounting purposes, categorised as an ordinary liability%
) that, upon the occurrence of a pre-defined trigger event, converts into common equity via some pre-specified conversion mechanism, or suffers a full write down. \citet{spiegeleer2012pricing} maintain that, in the context of an issuing bank, the trigger event for the Coco is most often a state of possible financial non-viability. Therefore, the purpose of Coco instruments is to stabilise the balance sheet in times of financial distress or contagion effects, and also to allow for a decrease in the systemic risk faced by large financial institutions \citep{rudlinger2015contingent}. However, it must be borne in mind that the conversion to common equity exposes the Coco bond investors to future potential losses and furthermore exposes the existing shareholders to the risk of dilution upon conversion. So, both Coco bond investors and existing shareholders have a greater vested interest in monitoring the risk budget of the financial institution, leading to better risk monitoring by both these parties \citep{de2010countercyclical}. Furthermore, given the impending risk of a dilution, existing shareholders may demand a higher required return on their equity stakes. From the issuer's perspective this higher required return will, to a large extent, by counterbalanced by the reduction in risk from the conversion feature. So, the specification of the conversion feature in a Coco bond's structure -- and also in the context of a CocoCat -- will be important from the perspective of counterbalancing the additional return required by existing shareholders.

We view a CocoCat as a special type of Coco bond. Coco bonds are characterised by two important features - the \textit{conversion trigger} and the \textit{conversion mechanism} - and we attempt to apply them within the sphere of CocoCats. As mentioned by \citet{rudlinger2015contingent}, the conversion trigger sets out one or several events that trigger the conversion mechanism of the Coco bond, while the conversion mechanism explicitly defines, in the bond covenant, what happens to the Coco bond directly after the trigger event. So therefore, we consider a CocoCat to be a Coco bond that has a trigger\footnote{A CocoCat's trigger is allowed to be of the same form as the traditional CAT bond triggers, namely a parametric, index-linked, modelled or indemnity trigger.} linked to the occurrence of a single or sequence of predefined natural catastrophes, and a conversion mechanism whereby the bond either (i) converts into common equity of the issuer (therefore increasing the size of common equity in issue), at a predefined conversion rate as specified in the bond covenant, or (ii) is written down (both principal and coupons) by a fixed percentage which is specified in the covenant. The latter conversion mechanism is reminiscent of the typical structures of various CAT bonds in issue today (for example, see \citet{cummins2009convergence}).

In view of our proposed definition for CocoCats, we offer the following remarks in light of its practical applicability in the insurance and reinsurance settings. Firstly, SwissRe's 2013 placement is loosely an example of such a CocoCat, with the trigger being indemnity-based. Secondly, we note that in the context of bank-issued Cocos with bank-related triggers, in order to protect issuing banks' reputations there is a tendency not to write down the principal amount invested or defer coupon payment \citep{bishop2009contingent}. As mentioned by some industry practitioners we interviewed, this behaviour is also evident in the CAT bond landscape. A number of CAT bonds, when triggered, are not fully written down, but instead begin to pay smaller coupons over a longer period of time (compared to the original term), with the principal repayment potentially being delayed to a time point after maturity. Therefore, by undertaking such actions it is clear that both banks and issuers of CAT bonds do not wish to send negative signals through to current and future investors.  This highlights one of the many potential benefits that CocoCats have to offer to issuers. 

We now present the benefits we foresee CocoCats to offer. Firstly, there can be more certainty in timing (i.e. debt is converted at the time of trigger) of the principal recoveries from the CocoCats since there will be no unexpected delays in principal repayments. Secondly, the structure of the CocoCat can allow for the amount of principal recuperation for the issuer, as well as the total amount to be injected into equity (belonging to the investor) to be fixed in advance. Finally and most importantly, CocoCats can accommodate for a needed reduction in ordinary liabilities and also a provision of immediate \textit{ex ante}\footnote{This \textit{ex ante} capital provision is also a pleasing advantage of a CocoCat over a CAT-E put, since the \textit{ex post} capital provision from the latter may potentially not materialise (i.e. credit risk).} liquidity to immediately pay insurance claims. Notice that the equity conversion feature of the CocoCat can allow for a possible boost in the \textit{ex post} solvency margin of the issuer under the Solvency II regime, upon financial stress caused by the impact of natural catastrophe-related and clustered claims on its balance sheet.

We point out, however, that trigger conversion mechanisms such as write downs can be penal from both the CocoCat issuer's and CocoCat investor's perspective. From the investor's perspective, a high coupon rate prior to the equity conversion will be necessary in light of the write-down risk but must, nonetheless, be commensurate with the risk of trigger of the CocoCat. While from the issuer's perspective, the attractiveness of such a CocoCat will limit the size of the capital market that can be tapped into, as certain investors may not be able or willing to tolerate the risk of a full write down of the principal. On consideration of all of the aforementioned, we propose it may be suitable for CocoCats to be issued on the basis of an equity conversion trigger - in that a certain proportion of the bond's principal is recovered in common equity of the issuer, hence potentially increasing the capital reflected in their balance sheet or in their risk assessment exercises. We subsequently continue with this impetus: from now on, we assume that the CocoCat converts into equity upon trigger.

As final evidence in respect of our case for equity conversion-based CocoCats, we motivate the benefits of issuing CocoCats over traditional CAT bonds. Much of this evidence was pointed out in \citet{georgiopoulos2016valuation}, which we use as a basis for the simple design of our proposed CocoCat. Firstly, CocoCats afford issuers the opportunity to transfer insurance risk without the need to deal or trade their investments in offshore jurisdictions, such as Bermuda and the Cayman Islands, where ILSs are mostly traded. The CocoCat can be directly issued by the issuer via an underwriter or with the help of a structuring agent, in its own local (or judiciously selected foreign) debt market, in a similar fashion to the way a Coco bond is issued. Also, investors can more easily trade in CocoCats compared to CAT bonds, and may not need a qualified investor clause to trade in them \citep{georgiopoulos2016valuation}. Another attractive feature for the issuer of the CocoCat is that the setup of a special purpose vehicle (and also the total return swap required for a CAT bond setup) to issue the debt and act as a type of collateral for the debt is not needed, therefore reducing the instrument's setup expenses. Thirdly because of the non-existence of a special purpose vehicle for their issuance, unlike traditional ILSs, CocoCats do not require a special reinsurance intermediary for the promotion, issue and sale of the debt - an investment bank can underwrite the issue \citep{georgiopoulos2016valuation}. Finally, the trigger mechanisms of CocoCats can easily be based on third-party catastrophe risk models, and specialised in-house model development will, therefore, not be required. We point out that this is also an advantage of issuing an index-linked CAT bond over an indemnity or parametric one, but that basis risk\footnote{Basis risk is the risk of a potential mismatch between the cashflows of the protection instrument and the losses it is supposed to be hedging. We point out that taking on basis risk may be too costly, \textit{ex post}, for the issuer.} can result for both a CocoCat with an index-linked trigger, and an index-linked CAT bond. However, we emphasise that in the case of an insurer or a reinsurer acting as the issuer, the point of a CocoCat is not complete protection against catastrophe-related insured loss payouts, but rather partial financial protection stemming from the capital markets, a markedly larger market compared to the reinsurance market alone.

Therefore, on the basis of the above information, we propose the formal structure for the CocoCat to be as illustrated in Figure \ref{fig_1}. In reference to Figure \ref{fig_1}, the operation of a CocoCat is discussed below.
\begin{enumerate}
\item[2.1.1] The investors transfer the bond's principal to the issuer, where this is then reflected immediately as an ordinary liability in the balance sheet. The proceeds will then be invested by the issuer in liquid treasuries (possibly at a haircut), with discounted mean terms approximately on par with that of the CocoCat in order to avoid credit risk.
\item[2.1.2] The issuer may organise to swap the fixed return for a floating return, especially if the CocoCat's coupons are floating in nature. A floating return is often used so as to base investors' expected returns on a reference interest-rate \citep{muller2000weather}.
\item[2.1.3] Prior to the trigger of the CocoCat, the issuer will use the floating return as well as insurance profits to pay the coupon on the CocoCat, on a pre-specified tenor, to the investors. The coupon will be based on a reference interest-rate (such as LIBOR), and will also include a fixed spread to allow for the catastrophe risk.
\item[2.1.4] If the CocoCat has not been triggered, at maturity the investors will receive their full principal back in cash, together with the final coupon.
\item[2.1.5] If triggered, the CocoCat will terminate and the liability will be written off the issuer's balance sheet. The issuer will  redeem the principal from the treasuries and it will use a predefined proportion of this principal to cover earmarked excess (catastrophic) claims. The remaining proportion of the principal from the treasuries will be converted into (new) common equity, and will belong to the investors in the CocoCat, thereby increasing the issuer's total equity in issue. So, that is, the investors recover a proportion of their principal in equity of the issuer. Figure \ref{fig_2} illustrates the impact of a CocoCat's trigger on the equity and liabilities of the issuer. Notice that after the CocoCat has been triggered, there is a reduction in liabilities (arising from their repayment as a result of the catastrophe), a wipe-out of a proportion of the CocoCat's debt-based value, and a conversion of the remainder of the CocoCat's value into new common equity.
\end{enumerate}

\begin{figure}[h]
\centering
\centering
\begin{tikzpicture}
  [scale=1]
  \tikzstyle{result} = [rectangle,  text width=2.5cm, text centered, draw, fill=gray!30, thick]

  \node[result] (n0) at (-7,5) {{\bf \small{Swap counterparty}}};
  \node[result] (n1) at (-2,5) {{\bf \small{Liquid treasuries}}};
  \node[result] (n2) at (4,-5) {{\bf \small{Investors from capital markets}}};
  \node[result] (n3) at (-2,0) {{\bf \small{Issuer $\qquad$ $\qquad$}}};
  \node[result] (n4) at (4,0) {{\bf \small{Underwriter (i.e. bank)}}};

  \foreach \from/\to in {n0.north east/n1.north west}
    \draw[->, line width=0.5mm] (\from) --(\to)  node[pos=.5, sloped, above] {\footnotesize{Floating return}};
  \foreach \from/\to in {n1.south west/n0.south east}
    \draw[->, line width=0.5mm] (\from) -- (\to)  node[pos=.5,sloped,above] {\footnotesize{Fixed return}};
  \foreach \from/\to in {n1.south/n3.north}
    \draw[->, line width=0.5mm] (\from) -- (\to)  node[pos=.5,sloped,above] {\footnotesize{Floating return}};
  \foreach \from/\to in {n1.south east/n3.north east}
    \draw[->, line width=0.5mm] (\from) -- (\to)  node[pos=.5,sloped,above] {\footnotesize{Principal}};
  \foreach \from/\to in {n3.north west/n1.south west}
    \draw[->, line width=0.5mm] (\from) -- (\to)  node[pos=.5,sloped,above] {\footnotesize{Principal}};
  \foreach \from/\to in {n4.south east/n2.north east}
    \draw[->, line width=0.5mm] (\from) -- (\to)  node[pos=.5,sloped,above] {\footnotesize{Coupons}};
      \foreach \from/\to in {n4.south/n2.north}
    \draw[->, line width=0.5mm] (\from) -- (\to)  node[pos=.5,sloped,above] {\footnotesize{Principal or equity}};
  \foreach \from/\to in {n2.north west/n4.south west/}
    \draw[->, line width=0.5mm] (\from) -- (\to)  node[pos=.5,sloped,above] {\footnotesize{Principal}};
  \foreach \from/\to in {n4.west/n3.east}
    \draw[->, line width=0.5mm] (\from) -- (\to)  node[pos=.5,sloped,above] {\footnotesize{Principal}};
  \foreach \from/\to in {n3.south east/n4.south west}
    \draw[->, line width=0.5mm] (\from) -- (\to)  node[pos=.5,sloped,above] {\footnotesize{Principal or equity}};
      \foreach \from/\to in {n3.north east/n4.north west}
    \draw[->, line width=0.5mm] (\from) -- (\to)  node[pos=.5,sloped,above] {\footnotesize{Coupons}};		
\end{tikzpicture}
\caption{Proposed structure for the CocoCat, in the case of it being underwritten by a bank.} \label{fig_1}
\end{figure}
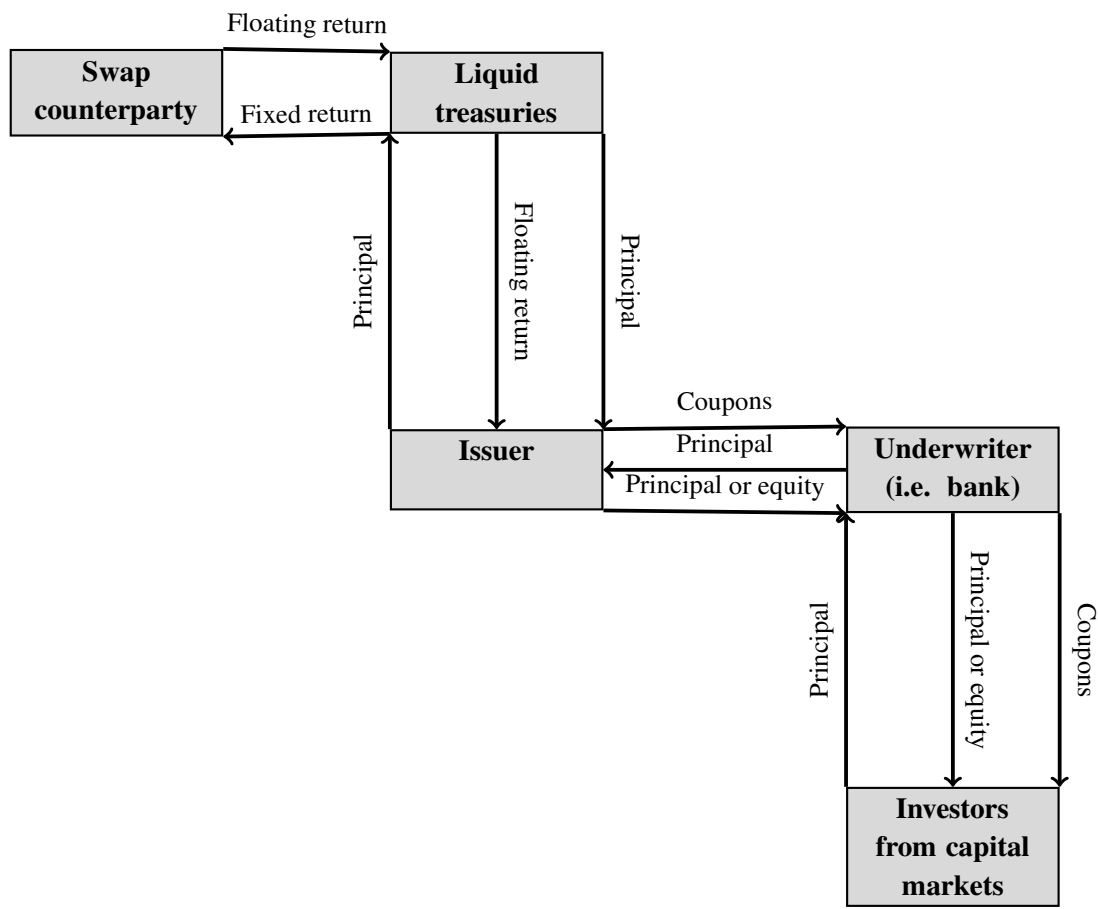

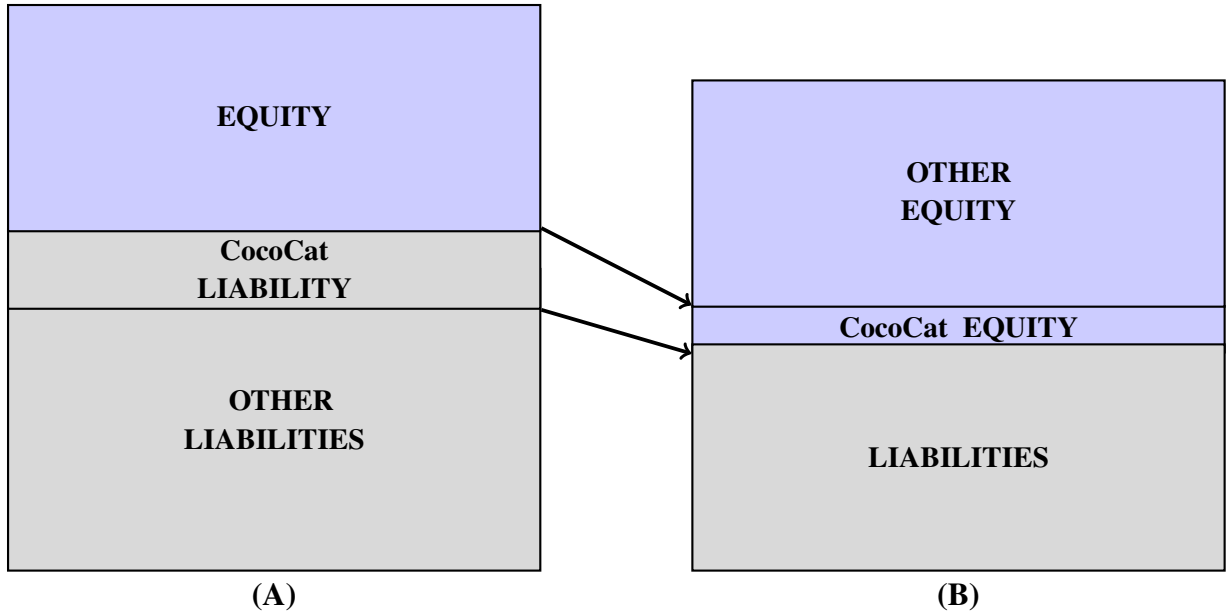
\begin{figure}[h]
\centering
\centering
\begin{tikzpicture}
  [scale=1]
  \tikzstyle{result} = [rectangle,  text width=4cm, text centered, minimum height=4cm,minimum width=7cm, draw, fill=gray!30, thick]
   \tikzstyle{results} = [rectangle,  text width=4cm, text centered, minimum height=3cm, minimum width=7cm, draw, fill=gray!30, thick]
   \tikzstyle{resultss} = [rectangle,  text width=4cm, text centered, minimum height=1cm,minimum width=7cm, draw, fill=gray!30, thick]
   \tikzstyle{resultsss} = [rectangle,  text width=6cm, text centered, minimum height=0.5cm, minimum width=7cm, draw, fill=blue!20, thick]
   \tikzstyle{resultssss} = [rectangle,  text width=6cm, text centered, minimum height=3cm, minimum width=7cm, draw, fill=blue!20, thick]

  \node[result] (n0) [label=below:\textbf{(A)}] at (-6,0) {{\bf \small{ OTHER \\ LIABILITIES}}};
  \node[resultss] (n1) at (-6,2) {{\bf \small{CocoCat \\ LIABILITY}}};
    \node[resultsss] (n2) at (3,1.2) {{\bf \small{CocoCat EQUITY}}};
     \node[resultssss] (n3) at (-6,4) {{\bf \small{EQUITY}}};
  \node[results] (n4) [label=below:\textbf{(B)}] at (3,-0.5) {{\bf \small{LIABILITIES}}};
       \node[resultssss] (n3) at (3,3) {{\bf \small{OTHER \\ EQUITY}}};
\foreach \from/\to in {n1.north east/n2.north west}
    \draw[->, line width=0.5mm] (\from) --(\to)  node[pos=.5, sloped, above] {};
\foreach \from/\to in {n1.south east/n2.south west}
    \draw[->, line width=0.5mm] (\from) --(\to)  node[pos=.5, sloped, above] {};

\end{tikzpicture}
\caption{Projected effect of the CocoCat's trigger on the equity and liabilities of the issuer: (A) provides a simplified overview of the balance sheet structure prior to the trigger of the CocoCat, while (B) provides the balance sheet overview after the trigger of the CocoCat. Notice the decrease in liabilities, as a result of the write-down.} \label{fig_2}
\end{figure}

With a simplified design structure for the CocoCat in mind we now endeavour to analyse and refine the structure more fully, with reference to the ``anatomy" of Coco bonds specified by \citet{spiegeleer2012pricing}. We firstly consider the \textit{conversion trigger}. The natural catastrophe-related trigger event used in the specification of the CocoCat need not be one of the four types of more general Coco bond triggers - accounting, market, regulatory or multi-variate - as put forward by \citet{spiegeleer2012pricing}. This is because of the different purpose and nature of the CocoCat, and also that the CocoCat is a security specific to insurers and reinsurers, and not exclusively banks. The catastrophe-based trigger itself is inherently different from explicit indicators of the financial health of the issuer (such as the solvency margin), but these types of triggers can be indicators of the overall financial health of the insurance and reinsurance industry as a whole. So, we posit that the catastrophe-based trigger is a kind of systemic trigger (which is a trigger linked to the overall financial health of the industry within which the instrument operates, as introduced by \citet{pazarbasioglu2011}).

Before proceeding, it is important to note that the trigger should follow four criteria \citep{rudlinger2015contingent}. We now discuss whether, in general, a natural catastrophe-based trigger is compliant with these four criteria.
\begin{enumerate}
\item[2.2.1] \textbf{Clarity of the trigger event.} The trigger should carry the same message in whatever jurisdiction the issuer operates. It is not possible for any one of the universally-accepted CAT bond triggers (parametric, indemnity, index-linked or modelled) to be a CocoCat trigger. Indeed, indemnity-based triggers would  meet this criterion of clarity with difficulty, given that insurance loss reporting differs from one jurisdiction to the next. Index-linked triggers (if based on a particular insurance loss index) and parametric-based triggers would meet this criterion. The trigger could also encompass an accounting-related trigger, such as in the case of SwissRe's CocoCat, but due care and attention would need to be taken since different accounting regulations apply in different jurisdictions.
\item[2.2.2]  \textbf{Objectivity of the trigger event.} The trigger should be well-documented in the CocoCat's prospectus, and known at the date of issue. There should be categorically no scope to alter the definition of the trigger or change the way that the bond is converted into equity during the term of the CocoCat.
\item[2.2.3]  \textbf{Transparency of the trigger event.} The catastrophe-related trigger should be simple to understand, and observable for both the investor and the issuer at regular intervals of time. This may be a problem for all four of the CAT bond triggers forementioned, since much of the information is proprietary to the company (especially in the case of an indemnity-based trigger) or proprietary to another catastrophe-modelling company (especially in the case of an [industry] index-linked trigger).  This criterion is satisfied in the case of an industry-index trigger. However, we argue that the issuer should endeavour, under strict confidentiality clauses, to provide the information on the evolution of the trigger process to the investors.
\item[2.2.4]  \textbf{Functionality of the trigger event.} Trigger categories for Coco bonds are defined by the functionality condition: the trigger should be an appropriate measure for the state of financial distress of the issuer, or the financial market within which the issuer operates. We argue that the functionality of natural catastrophe-based triggers does give an indication of the financial distress of a particular issuer, since the expected future loss claims and potential claim contagion specific to the issuer, in respect of these catastrophes, will be linked to the occurrence of these catastrophes. Furthermore, the occurrence of catastrophes is not under the control of the issuers, providing further justification for the choice of such objective trigger mechanisms.
\end{enumerate}

We now turn to considering the \textit{conversion mechanism} which, as mentioned before, specifies the procedure to follow at conversion, and the potential loss (of principal) to the investor in the CocoCat at conversion. We follow in the spirit of \citet{rudlinger2015contingent}, and for our proposed CocoCat structure consider the conversion fraction, price and rate.

The conversion fraction, $\zeta$, sets out the proportion of the contingent bond's face value which is converted into common equity, at a contractually-specified conversion price. Thus, if $\zeta = 1$, the full face value of the bond is converted into equity. However, for the purpose of our proposed CocoCat, we suppose that $0 < \zeta < 1$: thus, the CocoCat investor loses a proportion of $1-\zeta$ of his or her invested principal, and has the remaining proportion of $\zeta$ converted into new common equity. Since the main purpose of a CocoCat is to provide immediate, liquid funding in the event of a trigger (which is directly or indirectly linked to the claims experience of the issuer), we propose that the CocoCat debt is written down in the balance sheet, and this written down amount is earmarked to immediately cover any worse-than-expected catastrophe-related claims (directly or indirectly linked to the trigger). This has the effect of delevering the issuer's balance sheet. We do caution, however, that the choice of the conversion fraction is a subjective but critical one. Recall that the purpose of ILSs is not to provide full protection in the case of adverse experience, but rather be complementary to a comprehensive catastrophic risk-management framework. So, the issuer needs to carefully decide on $\zeta$, which ultimately defines how much capital the issuer receives, per unit nominal, \textit{ex post} the catastrophe. Factors which will impact on the magnitude of $\zeta$ are firstly the projected future catastrophe-related claims experience (which is difficult to assess with accuracy) as well as risk budget, and secondly the investor base to which the security will be marketed and issued. Finally, the impact of the consequent equity dilution needs to be accounted for.

The conversion price, $K_P$, is the key indicator for the  potential loss the CocoCat investor can incur on conversion. In fact, $K_P$ can be interpreted as the share price of the CocoCat's issuer at which the fraction, $\zeta$, of the bond's face value is converted into common equity upon trigger. \citet{rudlinger2015contingent} presents three possibilities on how the conversion price can be set, and we apply this to the context of CocoCats: (a) fixed in the CocoCat prospectus, (b) the market share price upon conversion or (c) a function of, \textit{inter alia}, the known share price at time of trigger. We argue that in the case of a CocoCat, option (a) may be unsuitable for investors. Evidence for moderate decreases in insurance firms' share prices after the occurrence of mega natural catastrophes has been found recently by \citet{hagendorff2015impact}, which is intuitively expected. So, there exists the risk of setting the conversion price higher than the market price, creating an adverse effect for the investors of the CocoCat, and consequently investors will receive less share. This could reduce the marketability and relative attractiveness of the CocoCat. Moreover, this conversion price could allow for moral hazard from the side of the issuer and also from current shareholders with substantial stakes in the company. That is, setting the conversion price at an unacceptably high level will  materially affect the equity stake CocoCat investors will recover upon trigger.

However, a key benefit of option (a) is as follows: depending on the size of the CocoCat issue this option may be preferred by current shareholders in the issuing firm because it could potentially restrain dilution of their holdings if the current share price is severely depressed as a result of the impact of natural catastrophes \citep{spiegeleer2012pricing}. In comparison, it appears that option (b) is optimal for CocoCat investors if the full principal amount converts to equity, since hardly any loss on share price differences will result for the investors. Notwithstanding this optimality, (b) is also benficial because CocoCat investors have the potential to gain material stakes in the issuing firm if the size of the CocoCat issue is large relative to the total equity in issue. Also, the scope for the moral hazard identified in option (a) is not explicitly present. Current shareholders, however, will simply have to accept a dilution of their shareholdings \citep{spiegeleer2012pricing}, which may be undesirable from their perspective. Option (c) is also a possibility in CocoCat design and can allow for more flexiblity, which could improve the attractiveness of such an issue to both investor and issuer. There are some further arguments in favour of such function-based conversion prices concerning the reduction in manipulation of share prices by the CocoCat investors - see Section 3.3 of \citet{rudlinger2015contingent}. However, we caution that this avenue may complicate pricing considerations and frameworks. In this research, we accommodate for options (a), (b) and (c), in light of the choice for $K_P$.

We end this section with some comments on the practical use of CocoCats within the context of an insurance or reinsurance company. CocoCats are primarily intended to be of assistance in the management of economic and also solvency capital for an issuer: as postulated by \citet{besson2009much}, capital's critical function is to absorb risks undertaken by the company, be they worse or more contagious than expected. We also believe that our proposed CocoCat lends itself to a situation where existing shareholders may not be called upon that often to provide additional capital in situations of worse-than-expected risk. Requiring additional funds from existing shareholders is unfavourable \citep{besson2009much}.

We also reiterate that the proposed CocoCat instrument is not intended as an ingenious financial instrument to achieve full indemnity against catastrophic losses. Rather, it is to be a complement to and also an integral part of a comprehensive and consistent catastrophe risk-management framework. In consequence, it should adhere to the framework elements put forward by \citet{pazarbasioglu2011}, those being enhanced supervision, a robust economic capital base, transparent disclosure which better informs markets, and a clear resolution regime. Most CocoCats will, for the time-being, be unstandardised, over-the-counter traded and tailor-made (to the issuer) instruments, since their market is new. So careful scrutiny is necessary in developing and managing such issues on behalf of the issuer, the investors and the insurance market regulators. Although it is specific to each issue and is a difficult task, a careful balance between the potential benefits of CocoCats to the issuer, and the rewards reaped by investors, needs to be achieved without the introduction of additional moral hazard and information asymmetries.

\subsection{Comparison to other catastrophe-linked ILSs}
\noindent CocoCats are an ILS that form a unique class of their own. They are similar to CAT-E puts, in that the issuer will sell some of their share to the investor should the trigger occur. However, CAT-E puts suffered from the drawback of credit risk. CocoCats do not, since they provide capital (\textit{ex ante} the trigger) at the outset of the contract. Moreover, CocoCats can be much longer in term than CAT-E puts - the term will depend on the trigger type.

Very much like index-linked catastrophe bonds, catastrophe swaps and CAT-E puts, index-linked CocoCats can expose the issuer to basis risk (which is not the case with industry-loss warranties and reinsurance), especially if the instrument is targeted at hedging a particular portfolio of the issuer's liabilities. But, a pleasing advantage of index-linked ILSs is that they may remove the pervasiveness of moral hazard, which is, unfortunately, an issue when it comes to reinsurance. Finally, it is possible to recover the full principal invested in a CocoCat, even if it is converted (should the equity perform well in the future), but such full recovery is not always possible for CAT bonds. Table \ref{Table_comparison_features} summarises the key points of this comparison.

\begin{table}[H]
  \center
      \caption{Brief comparison of CAT bonds, CAT-E puts and CocoCats (the superscript, *, indicates that the specification can vary beyond what is mentioned in the table).}
  \label{Table_comparison_features}
    \begin{tabular}{l|lll}
    \toprule
          $\quad$ & \textbf{\small{CAT bond}} & \textbf{\small{CAT-E put}} & \textbf{\small{CocoCat}} \\
    \midrule
\footnotesize{\textbf{Term}} & \small{3-5 years} & \small{1-5 years*} & \small{Depends on trigger} \\
\footnotesize{\textbf{Capital provision}} & \small{\textit{Ex-post}} & \small{\textit{Ex-post}} & \small{\textit{Ex-ante}} \\
\footnotesize{\textbf{Possibility of full principal}} & \small{No} & \small{N/A} & \small{Yes} \\
\footnotesize{\textbf{recovery \textit{ex-post} catastrophe}} & $\quad$ & $\quad$ & $\quad$ \\
\footnotesize{\textbf{Trigger of payment}} & \small{Index-linked} & \small{Strike vs. share} & \small{Index-linked} \\
$\quad$ & \small{Indemnity} & $\quad$ & \small{Indemnity} \\
$\quad$ & \small{Pure parametric} & $\quad$ & \small{Pure parametric} \\
$\quad$ & \small{Parametric index} & $\quad$ & \small{Parametric index} \\
$\quad$ & \small{Modelled loss} & $\quad$ & \small{Modelled loss} \\
$\quad$ & \small{Multiple triggers} & $\quad$ & \small{Multiple triggers} \\
\footnotesize{\textbf{Moral hazard}} & \small{Little if index-linked} & \small{None} & \small{None if index-linked} \\
\footnotesize{\textbf{Basis risk}} & \small{Little if pure parametric} & \small{Large - smaller if } & \small{None if pure parametric} \\
$\quad$ & $\quad$ & \small{variance-linked} & $\quad$ \\
\footnotesize{\textbf{Existence of market}} & \small{OTC and exchange} & \small{Extinct} & \small{Very small} \\
\footnotesize{\textbf{Counterparty default risk}} & \small{Low (collateralised)} & \small{High} & \small{Low} \\
\footnotesize{\textbf{Accounting treatment}} & \small{Depends on trigger} & \small{Financial instrument} & \small{Financial instrument} \\

    \bottomrule
    \end{tabular}
\end{table}

\section{Index-linked CocoCat: model} \label{sec:modelSetup}
\label{section_setup}

\subsection{Model setup, assumptions and properties}
\label{subsec:3.1}
\noindent We now turn to focusing on a particular type of CocoCat, and introduce the workings, notations and basic definitions necessary for its analytical pricing. We suppose that the trigger is index-linked, and is furthermore in line with many of the index-linked triggers that CAT bonds are based upon (see, for example, the review by \citet{cummins2008cat}, as well as \citet{haslip2010pricing,mm,nr} and \citet{gatzert2014convergence}). As an example, consider one of the most commonly-issued index-linked CAT bonds: the type that is dependent upon the Property Claims Services (PCS) industry index. The trigger, for most of these bonds, is defined to be the point in time when the accumulated losses from the PCS index exceeed some contractually-specifed threshold level. From here on, we suppose that the CocoCat is based on the PCS loss index, however, note that there is no loss of generality in terms of the type of index which can be used in the model. Other indices (such as that of PERILS in the EU) may also be used in our framework.

We loosely follow \citet{jaimungal2006catastrophe}, and commence under the real-world porbability measure. Under any probability measure the CocoCat's price depends on two emerging phenomena: financial market-related risk and catastrophe-related risk. Since the catastrophe risk will give rise to jumps, we need to work in an incomplete markets setting and moreover note that complicated changes of measure could arise. To avoid this, we make the following assumption in line with much of the previous literature on pricing catastrophe-linked financial instruments. Evidence in support of this assumption has been found by \citet{hoyt} and \citet{cummins2009convergence}, but is disputed by \citet{carayannopoulos2015diversification} and \citet{hagendorff2015impact}.

\begin{assumption} \label{ass_1}
Catastrophe-risk variables and financial markets risk variables are independent in the real-world.
\end{assumption}

This assumption is made in a myriad of research papers on catastrophe-linked instruments, including \citet{taylor}, \citet{cp}, \citet{jarrow2010simple}, \citet{braun}, \citet{mm} and \citet{nr}.
It affords us the possibility to treat catastrophe-risk variables independently from financial markets risk variables. Therefore we can split up the CocoCat pricing into two separate problems under both the real-world and, later, under the risk-neutral probability measure. Moreover the following method is a convenient way to set up the required probability space for the model. We suppose that there exist two probability spaces: for the financial markets risk variables, the space is specified by $(\Omega_F, \hat{\mathcal{F}_\infty}, \mathbb{P}_F)$, where $\hat{\mathcal{F}_\infty} \coloneqq \bigvee_{t \geq 0} \hat{\mathcal{F}_t}$ for the partial financial markets filtration $( \hat{\mathcal{F}_t} )_{t \geq 0}$. Also, $\Omega_F$ is the respective sample space and $\mathbb{P}_F$ is the real-world probability measure for the financial markets risk variables. For the catastrophe-risk variables, $(\Omega_C, \hat{\mathcal{C}_\infty}, \mathbb{P}_C)$, where $\hat{\mathcal{C}_\infty} \coloneqq \bigvee_{t \geq 0} \hat{\mathcal{C}_t}$ for the partial catastrophe risk filtration $( \hat{\mathcal{C}_t} )_{t \geq 0}$. Also, $\Omega_C$ is the respective sample space and $\mathbb{P}_C$ is the real-world probability measure for the catastrophe markets risk variables. From these probability spaces, we can construct a product space $\left(\Omega, \mathcal{G}_\infty, \mathbb{P}  \right)$, where $\Omega \coloneqq \Omega_F \times \Omega_C$, $\mathcal{G}_\infty \coloneqq \hat{\mathcal{F}_\infty} \otimes \hat{\mathcal{C}_\infty}$ and $\mathbb{P} \coloneqq \mathbb{P}_F \otimes \mathbb{P}_C$. Notice how Assumption 1 is conveniently captured in the definition of $\mathbb{P}$. On $\Omega$, we define the following two family of sets which are important for our analyses below: $\mathcal{F}_t \coloneqq \hat{\mathcal{F}}_t \times \{\phi, \Omega_C  \}$ and $\mathcal{C}_t \coloneqq \{\phi, \Omega_F  \} \times \hat{\mathcal{C}}_t $. Also, note that $\mathcal{G}_t \coloneqq \hat{\mathcal{F}}_t \otimes\hat{\mathcal{C}_t} $.

As in the case of the catastrophe swap\footnote{A catastrophe swap is a financial instrument where a protection seller receives periodic payments from a protection buyer and, in exchange, the protection buyer receives a pre-defined loss compensation payment should a pre-agreed trigger event occur.} studied by \citet{braun} and also in the case of CAT bonds and other ILSs, the CocoCat is not an insurance contract but rather a financial instrument, so it is to be priced using financial pricing techniques. As put forward in \citet{cox2004valuation}, if a liquid and large market for catastrophe-linked securities (say CocoCats) exists, then standard derivatives pricing theory (see, for example, \citet{harrison1981martingales}) implies the existence of a risk-neutral measure, so ILSs such as index-linked CocoCats can be priced. However, since index-linked CocoCats rely on a process exhibiting jumps\footnote{Index-linked securities are based on insurance loss indices such as the PCS index. Such indices are often modelled as jump processes, and we return to this point later on in the section.}, the market is incomplete and hence no unique risk-neutral measure exists \citep{embrechts2000actuarial}. Therefore, we assume the existence of a given risk-neutral pricing measure for the financial markets risk variables, $\mathbb{Q}_F$, which has been obtained from suitable calibrations of the interest-rate and stock-price processes.

However, the next question concerns what the associated risk neutral probability measure for the catastrophe-risk variables is. Since the catastrophe-risk variables will be assumed to follow a jump process, we will have several choices \citep{dj}. We consider the incomplete market framework of \citet{merton}. Such an approach has been used extensively in the literature when valuing derivatives with payoffs linked (in some way) to the occurrence of natural catastrophes, see for example \citet{bakshi2002average}, \citet{ly}, \citet{vaugirard2003pricing}, \citet{jaimungal2006catastrophe}, \citet{ly2}, \citet{mm}, \citet{nr} and \citet{chang2017integrated}.  On the grounds of the pervasiveness of Merton's framework, the following assumption is made in our work, and we use it extensively.

\begin{assumption} \label{ass_2}
Investors are risk-neutral towards the jump risk posed by the natural catastrophe-risk variables.
\end{assumption}

More fully and in the context of pricing financial instruments, Assumption \ref{ass_2} states that in the overall economy natural catastrophes can be treated as idiosyncratic risks that can be (almost) fully diversified. The catastrophe risks will pose ``non-systematic risk" and will, in consequence, carry a zero risk-premium. Therefore, the risk-neutral probability measure for the catastrophe-risk variables will coincide with the respective real-world probability measure $\mathbb{P}_C$, and the jump processes will retain their distributional characteristics when changing between measures. For further discussion in support of this, see \citet{delbaen1989martingale}, \citet{cg} and \citet{cp}. However, it must be borne in mind that recent empirical catastrophe bond pricing literature has shown that catastrophe bonds do not have a zero risk premium (see, for example, \citet{ppd}, \citet{braun2} and \citet{gurtler2016impact}); this may carry over to other catastrophe-linked ILS instruments as well. Against this backdrop, it is possible to infer that pricing models based on the zero risk-premium assumption will give rise to values higher than those pricing models which assume a non-zero risk premium. In consequence, the usage of these pricing models may require additional margins added to the calculated value, or margins added to the parameters of the distributions associated with the jump process, all at the discretion of the issuer. Despite this, we remain true to Assumption \ref{ass_2} in our work, for two reasons. Firstly, it is commensurate with actuarial pricing techniques which according to \citet{braun} prevail in practice. Secondly, and most importantly, it can be adapted to the scope of underlying state variables in the model which are not investment assets and hence not tradeable. Hence, we can use real-world data to price. This is useful given the scarcity of (and difficulty of obtaining) pricing data for many catastrophe-linked ILSs, in particular CocoCats.

We are now in a position to construct a risk neutral measure on the measurable product space $\left(\Omega, \mathcal{G}_\infty \right)$: we set $\mathbb{Q} \coloneqq \mathbb{Q}_F \otimes \mathbb{P}_C$. We consider two important random variables defined on the product space $(\Omega, \mathcal{G}_\infty, \mathbb{Q})$. For a financial markets random variable $Y_F: \Omega_F \mapsto \mathbb{R}$ (where $Y_F \in m \hat{\mathcal{F}}_\infty$) we can associate with it a unique $Y \in m{\mathcal{G}}_\infty$, with $Y: \Omega \mapsto \mathbb{R}$, such that 
\begin{align}
Y(\omega_F, \omega_C) \coloneqq Y_F(\omega_F) \label{rv_Y}
\end{align}
for $\omega_F \in \Omega_F$ and $\omega_C \in \Omega_C$. Similarly, for a catastrophe risk random variable $X_C: \Omega_C \mapsto \mathbb{R}$ (where $X_C\in m \hat{\mathcal{C}}_\infty$) we can associate with it a unique $X \in m{\mathcal{G}}_\infty$, with $X: \Omega \mapsto \mathbb{R}$ such that 
\begin{align}
X(\omega_F, \omega_C) \coloneqq X_C(\omega_C). \label{rv_X}
\end{align}
Notice that the definitions of the random variables $X$ and $Y$ provide us with an easy transformation of random variables from the individual spaces to the product spaces. Note, more generally, that $\tilde{X}(\omega_F, \omega_C)$ for some $\tilde{X} \in m\tilde{\mathcal{G}}$ is $\mathcal{C}_\infty-$measurable if and only if it does not depend on $\omega_F$. Similarly, $\tilde{Y}(\omega_F, \omega_C)$ for some $\tilde{Y} \in m\tilde{\mathcal{G}}$ is $\mathcal{F}_\infty-$measurable if and only if it does not depend on $w_C$. 

In the process of constructing the measurable product space from the two probability spaces, we note that Assumption \ref{ass_1} leads to the following proposition. Assumption \ref{ass_1} shows that we can conveniently split the expectation under the risk-neutral probability measure $\mathbb{Q}$.

\begin{proposition} \label{prop_q}
For all the integrable $\mathcal{G}_\infty$-measurable random variables $Y$ and $X$ defined in Equations (\ref{rv_Y}) and (\ref{rv_X}) respectively, it holds that
\begin{align}
\mathbb{E}^{\mathbb{Q}}[XY] &= \mathbb{E}^{\mathbb{Q}_C}[X_C] \mathbb{E}^{\mathbb{Q}_F}[Y_F].
\end{align}
\end{proposition}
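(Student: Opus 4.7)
The plan is to recognize this as a direct application of Fubini's theorem, once all the structural ingredients from the construction of the product space have been properly identified. The key observations are: (i) by Assumption \ref{ass_1} together with the paragraph defining $\mathbb{Q}$, the risk-neutral pricing measure is a product measure $\mathbb{Q} = \mathbb{Q}_F \otimes \mathbb{P}_C$ on $(\Omega,\mathcal{G}_\infty) = (\Omega_F\times\Omega_C, \hat{\mathcal{F}}_\infty \otimes \hat{\mathcal{C}}_\infty)$; (ii) by construction \eqref{rv_Y}, the variable $Y$ depends only on the $\omega_F$ coordinate, and by \eqref{rv_X} the variable $X$ depends only on the $\omega_C$ coordinate. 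Under a product measure, coordinate projections are independent, so $X$ and $Y$ are independent under $\mathbb{Q}$.

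First I would check measurability and integrability. Since $X_C \in m\hat{\mathcal{C}}_\infty$ and $Y_F \in m\hat{\mathcal{F}}_\infty$, the liftings $X$ and $Y$ are $\mathcal{G}_\infty$-measurable, hence so is $XY$. Applying Tonelli to $|X||Y|$ yields
\begin{equation*}
\int_{\Omega} |XY|\, d\mathbb{Q} \;=\; \Bigl(\int_{\Omega_F} |Y_F|\, d\mathbb{Q}_F\Bigr)\Bigl(\int_{\Omega_C} |X_C|\, d\mathbb{P}_C\Bigr),
\end{equation*}
which is finite by the integrability hypothesis on $X$ and $Y$ (the two single-coordinate integrals equal $\mathbb{E}^{\mathbb{Q}}[|Y|]$ and $\mathbb{E}^{\mathbb{Q}}[|X|]$ respectively, again by Tonelli).

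Second, I would apply Fubini's theorem to the signed integrand and reassemble:
\begin{align*}
\mathbb{E}^{\mathbb{Q}}[XY]
&= \int_{\Omega_F\times\Omega_C} X_C(\omega_C)\, Y_F(\omega_F)\; d(\mathbb{Q}_F\otimes \mathbb{P}_C)(\omega_F,\omega_C) \\
&= \int_{\Omega_F} Y_F(\omega_F) \left[\int_{\Omega_C} X_C(\omega_C)\, d\mathbb{P}_C(\omega_C)\right] d\mathbb{Q}_F(\omega_F) \\
&= \mathbb{E}^{\mathbb{P}_C}[X_C]\; \mathbb{E}^{\mathbb{Q}_F}[Y_F].
\end{align*}
Finally, I would invoke Assumption \ref{ass_2}, which asserts that the risk-neutral measure for the catastrophe-risk block coincides with the physical one, so $\mathbb{P}_C = \mathbb{Q}_C$ and the right-hand side may be rewritten as $\mathbb{E}^{\mathbb{Q}_C}[X_C]\,\mathbb{E}^{\mathbb{Q}_F}[Y_F]$, matching the claimed identity.

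There is essentially no genuine obstacle here: the assertion is really a reformulation of the fact that a product measure factorises expectations of coordinate-separable integrands. The only care required is bookkeeping — keeping track of which variable lives on which factor space, and using the correct marginal measure (note in particular that $\mathbb{Q}$ uses $\mathbb{Q}_F$ on the first factor but $\mathbb{P}_C$ on the second, with the identification $\mathbb{Q}_C = \mathbb{P}_C$ coming from Assumption \ref{ass_2}).
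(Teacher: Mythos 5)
Your proof is correct, but it reaches the conclusion by a different route from the paper. The paper's own proof works from first principles: it verifies the identity for indicator functions $X_C=\mathbb{I}_A$, $Y_F=\mathbb{I}_B$ directly from the definition of the product measure $\mathbb{Q}=\mathbb{Q}_F\otimes\mathbb{P}_C$, and then extends to all non-negative measurable functions by a Monotone Class argument. You instead cite Fubini--Tonelli for the product measure, which packages that same indicator-plus-monotone-class machinery into a single named theorem. The trade-off: the paper's version is self-contained and makes the role of the product structure explicit, while yours is shorter and -- importantly -- actually covers the case stated in the proposition, namely \emph{integrable} (signed) random variables, via the preliminary Tonelli step establishing $\mathbb{E}^{\mathbb{Q}}[|XY|]<\infty$. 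The paper's proof as written stops at non-negative measurable functions and leaves the passage to general integrable $X,Y$ implicit, so your integrability bookkeeping is a genuine improvement rather than mere pedantry. Your closing identification $\mathbb{Q}_C=\mathbb{P}_C$ via Assumption \ref{ass_2} is consistent with how the paper itself reconciles the notation $\mathbb{E}^{\mathbb{Q}_C}$ in the statement with the definition $\mathbb{Q}=\mathbb{Q}_F\otimes\mathbb{P}_C$.
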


\begin{proof}
Suppose that $X_C(\omega_C) \coloneqq \mathbb{I}_A (\omega_C)$ for some $A \in \hat{\mathcal{C}}_\infty$ and $Y_F(\omega_F) \coloneqq \mathbb{I}_B (\omega_F)$ for some $B \in \hat{\mathcal{F}}_\infty$. Therefore by construction, $X(\omega_F, \omega_C)= \mathbb{I}_{ \Omega_F \times A}(\omega_F, \omega_C)$ and $Y(\omega_F, \omega_C)= \mathbb{I}_{B \times \Omega_C }(\omega_F, \omega_C)$. Now,
\begin{align*}
\mathbb{E}^{\mathbb{Q}}[\mathbb{I}_{A \times \Omega_F} \mathbb{I}_{\Omega_C \times B }] &= \mathbb{Q} ( \{A \times \Omega_F\} \cap \{\Omega_C \times B\} ) \\
&= \mathbb{Q}(A \times B) \\
&= \mathbb{Q}_C (A) \mathbb{Q}_F (B) \\
&= \mathbb{E}^{\mathbb{Q}_C} [\mathbb{I}_A] \mathbb{E}^{\mathbb{Q}_F} [\mathbb{I}_B].
\end{align*}
By standard arguments based on the Monotone Class Theorem, the result then holds for all non-negative measurable functions. 
\end{proof}

On $\Omega$, we define the following two sets which are important for our analyses below: $\mathcal{F}_t \coloneqq \hat{\mathcal{F}}_t \times \{\phi, \Omega_C  \}$ and $\mathcal{C}_t \coloneqq \{\phi, \Omega_F  \} \times \hat{\mathcal{C}}_t $. The following proposition gives us the basis for pricing CAT bonds. But before that, let us present the following helpful lemma.

\begin{lemma} \label{lemma_usefulq}
It holds that ${\mathcal{F}_\infty} \independent_{{\mathbb{Q}}}\; {\mathcal{C}_\infty}$
\end{lemma}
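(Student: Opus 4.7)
The plan is to reduce the independence assertion to the product structure of $\mathbb{Q}$, exactly as encoded in Proposition \ref{prop_q}. By definition, it suffices to show that for every $F \in \mathcal{F}_\infty$ and $C \in \mathcal{C}_\infty$ one has $\mathbb{Q}(F \cap C) = \mathbb{Q}(F)\,\mathbb{Q}(C)$.

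First I would unpack the definitions: any $F \in \mathcal{F}_\infty$ has the form $F = \hat{F} \times \Omega_C$ with $\hat{F} \in \hat{\mathcal{F}}_\infty$, and any $C \in \mathcal{C}_\infty$ has the form $C = \Omega_F \times \hat{C}$ with $\hat{C} \in \hat{\mathcal{C}}_\infty$. Then $F \cap C = \hat{F} \times \hat{C}$, and since $\mathbb{Q} = \mathbb{Q}_F \otimes \mathbb{P}_C$, the product measure formula immediately gives
\begin{align*}
\mathbb{Q}(F \cap C) &= \mathbb{Q}_F(\hat{F})\, \mathbb{P}_C(\hat{C}), \\
\mathbb{Q}(F) &= \mathbb{Q}_F(\hat{F})\, \mathbb{P}_C(\Omega_C) = \mathbb{Q}_F(\hat{F}), \\
\mathbb{Q}(C) &= \mathbb{Q}_F(\Omega_F)\, \mathbb{P}_C(\hat{C}) = \mathbb{P}_C(\hat{C}),
\end{align*}
so that $\mathbb{Q}(F \cap C) = \mathbb{Q}(F)\,\mathbb{Q}(C)$ as required.

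Alternatively, one can obtain the same conclusion as a one-line corollary of Proposition \ref{prop_q}: take $X_C := \mathbb{I}_{\hat{C}}$ and $Y_F := \mathbb{I}_{\hat{F}}$, whose associated product-space random variables $X = \mathbb{I}_C$ and $Y = \mathbb{I}_F$ satisfy $\mathbb{E}^{\mathbb{Q}}[XY] = \mathbb{E}^{\mathbb{Q}_C}[X_C]\,\mathbb{E}^{\mathbb{Q}_F}[Y_F]$, i.e.\ $\mathbb{Q}(F \cap C) = \mathbb{Q}(F)\,\mathbb{Q}(C)$.

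There is no real obstacle here; the result is essentially a restatement of the fact that $\mathbb{Q}$ has been defined as a product measure, which is exactly how Assumption \ref{ass_1} was encoded into the construction of the probability space. The only mild subtlety worth flagging is to verify that both $\mathcal{F}_\infty$ and $\mathcal{C}_\infty$ are genuine sub-$\sigma$-algebras of $\mathcal{G}_\infty$ (so that the independence statement is meaningful), which follows because the ``cylinder'' constructions $\hat{\mathcal{F}}_\infty \times \{\phi, \Omega_C\}$ and $\{\phi, \Omega_F\} \times \hat{\mathcal{C}}_\infty$ are closed under the usual $\sigma$-algebra operations and are contained in the product $\sigma$-algebra $\hat{\mathcal{F}}_\infty \otimes \hat{\mathcal{C}}_\infty = \mathcal{G}_\infty$.
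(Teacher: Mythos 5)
Your proof is correct and follows essentially the same route as the paper: both arguments unpack $F$ and $C$ as cylinder sets $\hat{F}\times\Omega_C$ and $\Omega_F\times\hat{C}$ and then factor $\mathbb{Q}(F\cap C)$ using the product structure $\mathbb{Q}=\mathbb{Q}_F\otimes\mathbb{P}_C$ (the paper phrases this step as an application of Proposition \ref{prop_q} to the indicators, which is exactly your second, ``one-line corollary'' variant). No gaps.
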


\begin{proof}
Suppose that $F \in \mathcal{F}_\infty$ and $C \in \mathcal{C}_\infty$. Then we can write $F = F' \times \Omega_C$ for $F' \in \hat{\mathcal{F}}_\infty$ and $C = \Omega_F \times C'$ for $C' \in \hat{\mathcal{C}}_\infty$. Now
\begin{align*}
\mathbb{Q}(F \cap C) &= \mathbb{E}^{\mathbb{Q}}[\mathbb{I}_F\mathbb{I}_C]\\
&= \mathbb{E}^{\mathbb{Q}}[\mathbb{I}_{F' \times \Omega_C}\mathbb{I}_{\Omega_F \times C'}] \\
&= \mathbb{E}^{\mathbb{Q}_F} [\mathbb{I}_{F'}] \mathbb{E}^{\mathbb{Q}_C} [\mathbb{I}_{C'}]\\
&= \mathbb{Q}_F(F')\mathbb{Q}_C(C')
\end{align*}
where the second last line follows by Proposition \ref{prop_q}. Now, note that
\begin{align*}
\mathbb{Q}_F(F')\mathbb{Q}_C(C') &= \mathbb{Q}_F(F')\mathbb{Q}_C(\Omega_C) \mathbb{Q}_F(\Omega_F) \mathbb{Q}_C(C') \\
&= \mathbb{Q}(F' \times \Omega_C) \mathbb{Q}(\Omega_F \times C').
\end{align*}
\end{proof}

\begin{proposition}\label{prop_split}
Suppose that $\mathcal{G}_\infty$-measurable random variables $Y$ and $X$ are defined as in Equations (\ref{rv_Y}) and (\ref{rv_X}) respectively. Then, it holds that
\begin{align*}
\mathbb{E}^{{\mathbb{Q}}}[{X}{Y} | {\mathcal{G}_t}]&= \mathbb{E}^{{\mathbb{Q}}}[{X}| {\mathcal{F}_t}] \mathbb{E}^{{\mathbb{Q}}}[{Y}| {\mathcal{C}_t}].
\end{align*}
\end{proposition}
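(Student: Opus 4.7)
The strategy is to verify the claim directly from the defining property of conditional expectation, with Lemma 1 (independence of $\mathcal{F}_\infty$ and $\mathcal{C}_\infty$ under $\mathbb{Q}$) doing the structural work and Proposition 1 providing the split of an unconditional product expectation. The right-hand side of the claim is $\mathcal{G}_t$-measurable because $\mathcal{F}_t, \mathcal{C}_t \subseteq \mathcal{G}_t$, so it suffices to check that both sides integrate to the same value against $\mathbb{I}_G$ for $G$ ranging over a $\pi$-system generating $\mathcal{G}_t$.

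I would take as my $\pi$-system the rectangles $G = A \cap B$ with $A \in \mathcal{F}_t$ and $B \in \mathcal{C}_t$; these generate $\mathcal{G}_t = \mathcal{F}_t \vee \mathcal{C}_t$. Following the template used at the end of Proposition 1, I would first reduce to indicators $X = \mathbb{I}_{\Omega_F \times C'}$ and $Y = \mathbb{I}_{F' \times \Omega_C}$ for some $C' \in \hat{\mathcal{C}}_\infty$ and $F' \in \hat{\mathcal{F}}_\infty$, and then extend to general integrable $X, Y$ by the Monotone Class Theorem. For such indicators and a test rectangle $G = (A' \times \Omega_C) \cap (\Omega_F \times B')$, the integrand $\mathbb{I}_G XY$ factors across the two coordinates, so Proposition 1 immediately reduces $\mathbb{E}^{\mathbb{Q}}[\mathbb{I}_G X Y]$ to the product $\mathbb{Q}_F(A' \cap F') \cdot \mathbb{Q}_C(B' \cap C')$.

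The next step is to re-express each scalar factor as the expectation of a lifted conditional expectation. On each factor space I would apply the tower property to introduce $\mathbb{E}^{\mathbb{Q}_F}[\,\cdot\,|\hat{\mathcal{F}}_t]$ and $\mathbb{E}^{\mathbb{Q}_C}[\,\cdot\,|\hat{\mathcal{C}}_t]$, then lift these quantities back to the product space via the identifications in equations (5) and (6). Lemma 1 is essential here: because a random variable supported on one factor is $\mathbb{Q}$-independent of every sub-$\sigma$-algebra supported on the other factor, the product-space conditional expectations reduce to the corresponding one-sided conditionals and lifts behave consistently, so the reassembled expression equals $\mathbb{E}^{\mathbb{Q}}\!\left[\mathbb{I}_G \cdot \mathbb{E}^{\mathbb{Q}}[X|\mathcal{F}_t]\, \mathbb{E}^{\mathbb{Q}}[Y|\mathcal{C}_t]\right]$.

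The main obstacle is keeping the pairing bookkeeping straight: each lifted conditional expectation must be correctly associated with its factor after the independence-driven split, and one needs to confirm that the rectangles above really do form a $\pi$-system generating $\mathcal{G}_t$ so that uniqueness of conditional expectation applies. Once equality holds on this $\pi$-system for indicator $X, Y$, the Monotone Class Theorem upgrades the identity to all integrable $\mathcal{G}_\infty$-measurable $X, Y$ of the required product form, finishing the proof.
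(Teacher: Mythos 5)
Your proposal is correct and is essentially the paper's own argument: both proofs verify the two defining properties of conditional expectation by checking partial averaging on the $\pi$-system of rectangles $F\times C$ (with $F\in\mathcal{F}_t$, $C\in\mathcal{C}_t$) generating $\mathcal{G}_t$, factorise the resulting integrals using the independence of the lifted filtrations (Lemma~\ref{lemma_usefulq}) together with Proposition~\ref{prop_q}, and extend by a Monotone Class argument. The only difference is one of bookkeeping, which you correctly flag as the delicate point: you reduce $X,Y$ to indicators and detour through the factor spaces via the tower property and lifting (which would require the additional, though routine, fact that lifts of factor-space conditional expectations are versions of the product-space ones), whereas the paper stays on the product space throughout and simply applies partial averaging to $\mathbb{E}^{\mathbb{Q}}[X\,|\,\mathcal{F}_t]$ over $F\times\Omega_C\in\mathcal{F}_t$ and to $\mathbb{E}^{\mathbb{Q}}[Y\,|\,\mathcal{C}_t]$ over $\Omega_F\times C\in\mathcal{C}_t$, which is slightly more economical.
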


\begin{proof}
By the partial averaging property of the conditional expectation,
\begin{align*}
\int_A {X}{Y}\mathrm{d}{\mathbb{Q}} = \int_A \mathbb{E}^{{\mathbb{Q}}}[{X}{Y} | {\mathcal{G}_t}] \mathrm{d}{\mathbb{Q}}
\end{align*}
\noindent for all $A \in \mathcal{G}_t$. We now need to show that 
\begin{align}
\int_{A} \mathbb{E}^{{\mathbb{Q}}}[{X}| {\mathcal{F}_t}] \, \mathbb{E}^{{\mathbb{Q}}}[{Y}| {\mathcal{C}_t}] \, \mathrm{d}{\mathbb{Q}} &= \int_{A}{X}{Y} \mathrm{d}{\mathbb{Q}}. \label{eq:pilambda}
\end{align}
\noindent However, by a Monotone Class argument it is enough to show that Equation (\ref{eq:pilambda}) holds for $A \coloneqq F \times C$, where $F \in {\mathcal{F}_t}$ and $C \in {\mathcal{C}_t} $. Now, Equation (\ref{eq:pilambda}) can be rewritten as
\begin{align*}
 \int_{{\Omega}} \underbrace{\mathbb{E}^{{\mathbb{Q}}}\,[{X}| {\mathcal{F}_t}] \, \mathbb{I}_{F \times \Omega}}_{{\mathcal{F}_t}\text{-measurable}}\,  \underbrace{\mathbb{E}^{{\mathbb{Q}}}[{Y}| {\mathcal{C}_t}]\, \mathbb{I}_{\Omega \times C}}_{{\mathcal{C}_t}\text{-measurable}}\; \mathrm{d} {\mathbb{Q}},
\end{align*}
\noindent which, by the independence of the filtrations ${\mathcal{F}_t}$ and ${\mathcal{C}_t}$ under ${\mathbb{Q}}$ (see Lemma \ref{lemma_usefulq}), is equal to
\begin{align*}
& \quad \left( \int_{{\Omega}} \mathbb{E}^{{\mathbb{Q}}}\,[{X}| {\mathcal{F}_t}] \, \mathbb{I}_{F \times \Omega_C}\; \mathrm{d}{\mathbb{Q}} \right) \left( \int_{{\Omega}} \mathbb{E}^{{\mathbb{Q}}}[{Y}| {\mathcal{C}_t}]\, \mathbb{I}_{\Omega_F \times C} \; \mathrm{d}{\mathbb{Q}} \right) \\
&= \left( \int_{F \times \Omega_C} \mathbb{E}^{{\mathbb{Q}}}[{X} | {\mathcal{F}_t}]\; \mathrm{d}{\mathbb{Q}} \right) \, \left( \int_{\Omega_F \times C} \mathbb{E}^{{\mathbb{Q}}}[ {Y} | {\mathcal{C}_t}] \;\mathrm{d}{\mathbb{Q}} \right) \\
&= \left( \int_{F \times \Omega_C} {X}\; \mathrm{d}{\mathbb{Q}} \right) \, \left( \int_{\Omega_F \times C} {Y}\; \mathrm{d}{\mathbb{Q}} \right) \\
&= \int_{F \times C} {X} {Y} \mathrm{d}{\mathbb{Q}},
\end{align*}
\noindent where the last line follows by the independence of ${X}$ and ${Y}$ under $\mathbb{Q}$.
\end{proof}

Now, before we present our theorem on CAT bond pricing, we provide a helpful corollary to Proposition \ref{prop_split}.

\begin{corollary} \label{prop_smaller}
With the same notation as in Proposition \ref{prop_q}, the following hold:
\begin{enumerate}
\item[(i)] $\mathbb{E}^{\mathbb{Q}}[X | \mathcal{G}_s  ] = \mathbb{E}^{\mathbb{Q}}[X | {\mathcal{F}}_s  ] \quad \forall s > 0; \quad \label{eqn_prop2}$ and
\item[(ii)] $\mathbb{E}^{\mathbb{Q}}[Y | \mathcal{G}_s  ] = \mathbb{E}^{\mathbb{Q}}[Y | {\mathcal{C}}_s  ] \quad \forall s > 0.$
 \end{enumerate} 
\end{corollary}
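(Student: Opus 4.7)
The plan is to derive Corollary \ref{prop_smaller} as a direct specialisation of Proposition \ref{prop_split}, by collapsing one of its two factors to the constant function~$1$. The constant~$1$ qualifies as either type of lifted random variable: I may write $1(\omega_F,\omega_C)=1_F(\omega_F)$ with $1_F\equiv 1\in m\hat{\mathcal{F}}_\infty$, so it satisfies the template~(\ref{rv_Y}); equally, I may write $1(\omega_F,\omega_C)=1_C(\omega_C)$ with $1_C\equiv 1\in m\hat{\mathcal{C}}_\infty$, so it satisfies~(\ref{rv_X}). Integrability is automatic, so in each case the hypotheses of Proposition~\ref{prop_split} are fulfilled.

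For part~(i), I would insert $Y\equiv 1$ into Proposition~\ref{prop_split}. The right-hand side collapses because $\mathbb{E}^{\mathbb{Q}}[1\,|\,\mathcal{C}_s]=1$, giving
\begin{align*}
\mathbb{E}^{\mathbb{Q}}[X\,|\,\mathcal{G}_s] = \mathbb{E}^{\mathbb{Q}}[X\cdot 1\,|\,\mathcal{G}_s] = \mathbb{E}^{\mathbb{Q}}[X\,|\,\mathcal{F}_s]\,\mathbb{E}^{\mathbb{Q}}[1\,|\,\mathcal{C}_s] = \mathbb{E}^{\mathbb{Q}}[X\,|\,\mathcal{F}_s],
\end{align*}
which is the asserted identity. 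Part~(ii) is the mirror image: taking $X\equiv 1$ and invoking $\mathbb{E}^{\mathbb{Q}}[1\,|\,\mathcal{F}_s]=1$ yields
\begin{align*}
\mathbb{E}^{\mathbb{Q}}[Y\,|\,\mathcal{G}_s] = \mathbb{E}^{\mathbb{Q}}[1\,|\,\mathcal{F}_s]\,\mathbb{E}^{\mathbb{Q}}[Y\,|\,\mathcal{C}_s] = \mathbb{E}^{\mathbb{Q}}[Y\,|\,\mathcal{C}_s].
\end{align*}

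Given Proposition~\ref{prop_split}, the corollary reduces to a one-line calculation in each case, so there is no substantive obstacle. The only bookkeeping point worth flagging is the verification that the constant~$1$ genuinely fits the templates~(\ref{rv_Y}) and~(\ref{rv_X}); this is immediate, since a constant is trivially a lift from either marginal space. No measure-theoretic input beyond the proposition (and implicitly Lemma~\ref{lemma_usefulq}, on which it rests) is needed.
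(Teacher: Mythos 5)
Your proof is correct and takes essentially the same route as the paper, which likewise obtains part (i) by setting $Y=1$ and part (ii) by setting $X=1$ in the conditional factorisation of Proposition \ref{prop_split}. If anything, your version is slightly more careful: the paper's one-line proof cites Proposition \ref{prop_q} (evidently a slip for Proposition \ref{prop_split}) and invokes Assumption \ref{ass_2}, which plays no role here, whereas you correctly identify the conditional product formula as the only ingredient and verify that the constant $1$ fits both lifting templates.
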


\begin{proof}
Part (i) follows by setting $Y=1$ in Proposition \ref{prop_q} and by an application of Assumption \ref{ass_2}, while part (ii) follows by setting $X=1$.
\end{proof}

Before presenting the processes capturing the behaviour of the financial markets and catastrophe-risk variables, we introduce some notation. Let:
\begin{itemize}
\item $T>0$ denote the term of the IL CocoCat. For IL CocoCats, we suppose that the term will be in line with that of commonly-issued index-linked CAT bonds. However, for parametric CocoCats based on very rare tail risks (such as the SwissRe CocoCat mentioned in Section \ref{background_design}) it makes sense for the term to be much longer.
\item $Z$ be the principal amount invested in the CocoCat, which the investor will receive back should the CocoCat not trigger during its term.
\item $V_{0}$ denote the price of the IL CocoCat at issue date, $t_0 \coloneqq 0$.
\item $\{S_t, t \geq 0\}$ be the share price of the issuing firm.
\item $\{L_t, t \geq 0\}$ be an aggregate loss process capturing the behaviour of the index upon which the IL CocoCat is based. The aggregate loss process is assumed to have a frequency component specified by a (possibly non-homogenous) Poisson process $N = \{N_t, t \geq 0 \}$ with deterministic intensity specified by the real-valued function $\lambda_t$, and a sequence of iid severity-component continuous random variables $\{ X_k, k \in \mathbb{N} \}$ (independent of the frequency component), each with distribution function $F_X$ and density $f_X$. We assume that $\int_0^T\lambda_s\;\mathrm{d} s <+\infty$.
\item $\mathbb{T} \coloneqq \{t_1, t_2, ... , t_{N-1}, t_N = T \}$ denote the set of $N$ coupon-paying dates.
\item $\Delta$ denote the constant yearly time period between coupon payment dates $t_{i-1}$ and $t_i$ for $i \in \left\{1, 2, ..., n \right\}$. According to \citet{jarrow2010simple}, in the context of CAT bonds this should be either one (1/12 year), three (3/12 year) or six months (6/12 year).
\item $R(t,t_{i-1},t_{i-1} + \Delta)$ be the $\Delta$-year simple forward LIBOR rate per annum, at time $t \geq 0$.
\item $\{r_t, t \geq 0 \}$ be the riskless spot rate process per annum, continuously compounded.
\item $c \geq 0$ be the constant spread for the IL CocoCat (i.e. the catastrophe risk premium).
\item $\zeta$ be the contractually-specified conversion fraction for the IL CocoCat, as introduced before.
\item $D > 0$ be the threshold level for the trigger, specified in the IL CocoCat's prospectus.
\item $\tau = \inf \{0 \leq t \leq T: L_t \geq D \}$, the first time the trigger level is met or exceeded. $D$ is called the contractually-specified threshold level of the IL CocoCat.
\item $K_P$ be the pre-specified conversion price as introduced in Section \ref{background_design}. As mentioned before, $K_P$ can be a pre-specified constant $K$ (we consider this case later on), but it can also be equal to $S_{\tau}$ or $f\left(S_{\tau}  \right)$ for some real-valued function $f$.
\end{itemize}

Based on all of the above notations, our modelling assumptions are encompassed by the following system of stochastic differential equations (SDEs) and identities under the real-world measure $\mathbb{P}$:

\begin{align}
\mathrm{d} r_t &= \bar{\theta}_r (\bar{m}_r - \sqrt{r_t})\mathrm{d}t + \sigma_r \sqrt{r_t}\mathrm{d}W^1_t, \label{eq:1} \\
S_t &= S_t^\mathcal{C}S_t^\mathcal{F}, \label{eq:2} \\
S_t^\mathcal{C} &= \exp \left( -\alpha L_t + \alpha \kappa \int_{0}^t \lambda_u \mathrm{d}u\right) , \label{eq:7} \\
S_t^\mathcal{F} &= S_{0} \exp \left\{Y_t\right\} , \label{eq:8} \\
dY_t &= \mu_S Y_t\mathrm{d}t + \sigma_S Y_t \mathrm{d} W^2_t, \label{eq:3} \\
\mathrm{d} \langle W^1_t, W^2_t  \rangle &= \rho \mathrm{d}t, \label{eq:4} \\
L_t &= \sum_{k=1}^{N_t} X_k. \label{eq:5}
\end{align}

\subsection{Remarks on the model}
\label{section_remarks}
\vspace{3mm}
\noindent \textbf{Interest-rate}\\
\noindent We select the quadratic term structure model of \citet{longstaff1989nonlinear}, in light of some of the shortcomings of the \citet{cir} (CIR) model (see \citet{ahn2002quadratic} and \citet{lo2016pricing} for empirical evidence in favour of quadratic term structure models), but also because it remains the Longstaff process under any Girsanov transformation with a constant kernel (see Theorem \ref{prop_DSR}).
In fact, the entire analysis which follows could be also accomplished for any other interest-rate model as long as the latter property is satisfied. For example, the Vasicek single-factor model of \citet{vasicek} and the Hull-White single-factor model (and consequently the extended Vasicek single-factor model) of \citet{hull1990pricing}\footnote{Also see \citet{hull1993one}.} could be considered.

The Longstaff model takes on the form as shown in Equation (\ref{eq:1}).
It is a two-parameter model wherein the yield is non-linear in $r_t$, while the CIR model has three parameters and a linear yield in the rate $r_t$.
Under $\mathbb{P}$, $\bar{\theta}_r$ and $\bar{m}_r$ are model parameters, $\sigma_r$ is the instantaneous volatility and $W^1_t$ is a standard Brownian motion. Note that $\bar{m}_r = \nicefrac{\sigma_r^2}{4\bar{\theta}_r}$ and $\bar{\theta}_r, \sigma_r > 0$.

\begin{theorem} \label{prop_DSR}
Consider the dynamics of the Longstaff model under any probability measure, $\bar{\mathbb{P}}$:
\begin{align}
\mathrm{d} r_t &= \hat{\theta}_r (\hat{m}_r - \sqrt{r_t})\mathrm{d}t + \hat{\sigma}_r \sqrt{r_t}\mathrm{d}\bar{W}_t, \label{general_DSR}
\end{align}
where $\bar{W}_t$ is a standard Brownian motion under $\bar{\mathbb{P}}$, $\hat{\theta}_r > 0, \hat{\sigma}_r > 0$ and $\hat{m}_r = \nicefrac{(\hat{\sigma}_r)^2}{4\hat{\theta}_r}$. Then if $\bar{\bar{\mathbb{P}}}$ is defined by a Girsanov transform with constant kernel, $\gamma$, i.e. for all $t > 0$,
\begin{align*}
\left.\frac{\mathrm{d}\bar{\bar{\mathbb{P}}}}{\mathrm{d}\bar{\mathbb{P}}}\right\vert_{\hat{\mathcal{F}}_t} &= \hat{\eta}(t),
\end{align*}
for
\begin{align*}
\hat{\eta}(t):= e^{\gamma \bar{W_t} - \frac{1}{2}\gamma^2t},
\end{align*}
then
the dynamics of the interest-rate process under $\bar{\bar{\mathbb{P}}}$ still follows the Longstaff model, that is
\begin{align}
\mathrm{d} r_t &= \tilde{\theta}_r \left(\tilde{m}_r - \sqrt{r_t}\right)\mathrm{d}t + \hat{\sigma}_r \sqrt{r_t}\mathrm{d}\bar{\bar{W}}_t, \nonumber
\end{align}
\noindent where $\bar{\bar{W}}_t \coloneqq \bar{W}_t - \gamma t$ is a standard Brownian motion under $\bar{\bar{\mathbb{P}}}$ and
\begin{align}
\tilde{\theta}_r &= (\hat{\theta}_r - \gamma \hat{\sigma}_r)\nonumber\\
\tilde{m}_r &= \frac{\hat{m}_r \hat{\theta}_r }{\hat{\theta}_r -  \gamma \hat{\sigma}_r}.\nonumber
\end{align}
\noindent Moreover, the property that $\tilde{m}_r = \nicefrac{(\hat{\sigma}_r)^2}{4\tilde{\theta}_r}$ remains.
\end{theorem}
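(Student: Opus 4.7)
The plan is to apply Girsanov's theorem to rewrite the Brownian differential, substitute into the given SDE, regroup the drift, and then verify the parameter consistency condition. Since the kernel $\gamma$ is constant, Novikov's condition is trivially satisfied (the exponential moment of $\tfrac{1}{2}\gamma^2 t$ is finite), so $\hat{\eta}(t)$ is a true $\bar{\mathbb{P}}$-martingale and the measure change is legitimate on every finite horizon. Standard Girsanov theory then gives that $\bar{\bar{W}}_t := \bar{W}_t - \gamma t$ is a standard Brownian motion under $\bar{\bar{\mathbb{P}}}$, so $\mathrm{d}\bar{W}_t = \mathrm{d}\bar{\bar{W}}_t + \gamma\,\mathrm{d}t$.

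Substituting this into Equation~(\ref{general_DSR}) and collecting the drift terms yields
\begin{align*}
\mathrm{d}r_t &= \hat{\theta}_r\hat{m}_r\,\mathrm{d}t - \hat{\theta}_r\sqrt{r_t}\,\mathrm{d}t + \gamma\hat{\sigma}_r\sqrt{r_t}\,\mathrm{d}t + \hat{\sigma}_r\sqrt{r_t}\,\mathrm{d}\bar{\bar{W}}_t \\
&= \hat{\theta}_r\hat{m}_r\,\mathrm{d}t - (\hat{\theta}_r - \gamma\hat{\sigma}_r)\sqrt{r_t}\,\mathrm{d}t + \hat{\sigma}_r\sqrt{r_t}\,\mathrm{d}\bar{\bar{W}}_t.
\end{align*}
I would then match this against the target Longstaff form $\tilde{\theta}_r(\tilde{m}_r - \sqrt{r_t})\,\mathrm{d}t + \hat{\sigma}_r\sqrt{r_t}\,\mathrm{d}\bar{\bar{W}}_t$, reading off $\tilde{\theta}_r = \hat{\theta}_r - \gamma\hat{\sigma}_r$ from the coefficient of $\sqrt{r_t}$, and then $\tilde{m}_r = \hat{\theta}_r\hat{m}_r/\tilde{\theta}_r = \hat{m}_r\hat{\theta}_r/(\hat{\theta}_r - \gamma\hat{\sigma}_r)$ from the constant part of the drift. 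This reproduces the formulas claimed in the theorem.

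It remains to check the consistency relation $\tilde{m}_r = \hat{\sigma}_r^{\,2}/(4\tilde{\theta}_r)$. This is a one-line substitution: plugging in the original Longstaff constraint $\hat{m}_r = \hat{\sigma}_r^{\,2}/(4\hat{\theta}_r)$ into the expression for $\tilde{m}_r$ gives
\begin{align*}
\tilde{m}_r = \frac{\hat{\theta}_r}{\tilde{\theta}_r}\cdot\frac{\hat{\sigma}_r^{\,2}}{4\hat{\theta}_r} = \frac{\hat{\sigma}_r^{\,2}}{4\tilde{\theta}_r},
\end{align*}
which is exactly the Longstaff mean-reversion parameter constraint under $\bar{\bar{\mathbb{P}}}$. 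There is no real obstacle here — the argument is essentially bookkeeping once Girsanov is invoked. The only tacit assumption worth flagging is that one needs $\tilde{\theta}_r > 0$, i.e. $\gamma < \hat{\theta}_r/\hat{\sigma}_r$, for the transformed process to remain a well-posed Longstaff model with positive mean-reversion speed; I would mention this as a mild restriction on the admissible kernels $\gamma$.
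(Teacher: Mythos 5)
Your proof is correct and follows essentially the same route as the paper's: substitute $\mathrm{d}\bar{W}_t = \mathrm{d}\bar{\bar{W}}_t + \gamma\,\mathrm{d}t$ into the SDE, regroup the drift to read off $\tilde{\theta}_r$ and $\tilde{m}_r$, and verify the constraint $\tilde{m}_r = \hat{\sigma}_r^{\,2}/(4\tilde{\theta}_r)$ by direct substitution. Your additional remarks (Novikov's condition for the constant kernel, and the implicit requirement $\tilde{\theta}_r>0$, i.e.\ $\gamma<\hat{\theta}_r/\hat{\sigma}_r$) are sensible refinements that the paper leaves unstated, but they do not alter the argument.
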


\begin{proof}
\noindent Under $\bar{\bar{\mathbb{P}}}$, the dynamics of the Longstaff model can be expressed as
\begin{align}
\mathrm{d} r_t &= \hat{\theta}_r \left(\hat{m}_r - \sqrt{r_t}\right)\mathrm{d}t + \hat{\sigma}_r \sqrt{r_t}(\mathrm{d}\bar{\bar{W}}_t + \gamma\mathrm{d}t ). \label{eqn:newdsr}
\end{align}
\noindent After a little algebra, Equation (\ref{eqn:newdsr}) can be expressed in the form of the Longstaff interest-rate model:
\begin{align}
\mathrm{d} r_t &= (\hat{\theta}_r - \gamma \hat{\sigma}_r) \left(\frac{\hat{m}_r \hat{\theta}_r }{\hat{\theta}_r -  \gamma \hat{\sigma}_r} - \sqrt{r_t}\right)\mathrm{d}t + \sigma_r \sqrt{r_t}\mathrm{d}{\bar{\bar{{W}}}}_t.  \nonumber
\end{align}
\noindent Moreover, it is easily verified that $\tilde{m}_r = \nicefrac{{(\hat{\sigma}_r)^2}}{4\tilde{\theta}_r}$.
\end{proof}

It must also be noted that the Longstaff model (also called Double Square Root Model) as shown in Equation (\ref{general_DSR}) admits a closed-form solution to the price, $P(r,s, \hat{\theta}_r, \hat{\sigma}_r)$ ($s \geq 0$), of a zero-coupon bond paying $1$-unit of currency at maturity in $s$ years, without an explicit boundary condition at $r_t = 0$ (see \citet{longstaff1989nonlinear}, \citet{beaglehole1992corrections} and \citet{lo2016pricing}).
Indeed, following classical arguments based on the Feynman-Kac formula it is the solution to the partial differential equation specified by:
\begin{align}
\frac{\hat{\sigma}_r}{4}r \frac{\partial^2 P}{\partial r^2}+ \left(\frac{\hat{\sigma}_r}{4} - \hat{\theta}_r \sqrt{r} \right) \frac{\partial P}{\partial r} - r P - \frac{\partial P}{\partial s} &= 0 \label{eqn_pde}
\end{align}
\noindent with initial condition at $s=0$ being one. Solving Equation (\ref{eqn_pde}) by the separation of variables technique yields the bond-pricing function,
\begin{align}
P(r, s,\hat{\theta}_r, \hat{\sigma}_r) &= A^{\text{DSR}}(s) \exp(B^{\text{DSR}}(s)r + C^{\text{DSR}}(s) \sqrt{r}  ), \label{ZCBprice}
\end{align}
\noindent where
\begin{align}
A^{\text{DSR}}(s) &= \left( \frac{2}{1+e^{\psi s}} \right)^{1/2} \exp \left( c_1 + c_2 s + \frac{c_3}{1+e^{\psi s}} \right), \nonumber\\
B^{\text{DSR}}(s) &= \frac{- \psi}{(\hat{\sigma}_r)^2} + \frac{2\psi}{(\hat{\sigma}_r)^2 \left(1+e^{\psi s} \right)},\nonumber\\
C^{\text{DSR}}(s) &= \frac{2\hat{\theta}_r \left(1 - e^{\nicefrac{\psi s}{2}} \right)^2}{ (\hat{\sigma}_r)^2\left(1 + e^{{\psi s}} \right)} \quad \text{and} \nonumber\\
\psi &= \sqrt{2 (\hat{\sigma}_r)^2}; \;
c_1 = \frac{(\hat{\theta}_r)^2}{\psi(\hat{\sigma}_r)^2}\nonumber; \;
c_2 = \frac{\psi}{4} - \frac{(\hat{\theta}_r)^2}{\psi^2}\nonumber; \;
c_3 = \frac{-4 (\hat{\theta}_r)^2 }{\psi^3}\nonumber.
\end{align}

\noindent \textbf{Share price}\\
\noindent We choose the share price process in a similar fashion to, amongst others,  \citet{cox2004valuation},  \citet{jaimungal2006catastrophe}, \citet{lin2009pricing} and \citet{wang2016catastrophe}. Notice that our share price process comprises two components: $S_t^\mathcal{C}$ and $S_t^\mathcal{F}$, the former being the component driven by catastrophe-risk variables and the latter the component driven by financial markets risk variables.  More specifically, when catastrophic events affecting an issuer occur, share prices can be expected to decrease since these large claims must be paid. This is accounted for in the share price process, hence the negative dependency of the process on $L_t$. Under $\mathbb{P}$, $S _{0}$ is the initial share price, $\mu_S$ the long-run mean of $Y_t$ and $\sigma_S$ the instantaneous volatility of $Y_t$. The constant $\alpha > 0$ represents the effect of the catastrophic losses on the logarithm of the share price. The greater the value of $\alpha$, the more serious the effect of the catastrophe losses through the term $\alpha L_t$. Moreover, as in \citet{wang2016catastrophe}, the mean number of claims $\int_0^t \lambda_u \mathrm{d}u$ is included to compensate positively (to some extent) for the presence of downward jumps in the share price. The constant $\kappa > 0$ governs the manifestation of this effect, and is selected on the basis of Lemma 1 below. Note that $W_t^2$ is standard Brownian motion under $\mathbb{P}$.  \\

\noindent \textbf{Aggregate loss}\\
\noindent We follow a classical approach to modelling aggregate loss process, in that we employ a time-inhomogeneous compound Poisson process to govern the behaviour of the IL CocoCat's underlying index. \citet{em} state that such a process is a suitable candidate to model catastrophic losses for catastrophe-related derivatives. We also choose such a process since it can capture over-dispersion in the catastrophe arrivals.\\

\noindent \textbf{Correlations}\\
\noindent We assume that the interest-rate and share price processes are dependent: this is captured by a correlation coefficient of $\rho$.

\subsection{Instrument operation}
\noindent From the time of issue, the IL CocoCat holder will receive (at contractually-specified constant intervals $\Delta$) floating coupon payments based on the $\Delta$-year LIBOR rate, $R(t_{i-1}, t_{i-1}, t_{i-1} + \Delta)$, plus a spread $c$. Hence, the floating payment received at each coupon-date is $R(t_{i-1}, t_{i-1}, t_{i-1} + \Delta) + c$ per unit nominal. At time $T$, the principal of the bond, $Z>0$, is received back by the investor, unless the IL CocoCat is triggered earlier.

During the term of the IL CocoCat, the issuer will monitor the performance of the underlying index, and will record the cumulative losses giving rise to the process $L_t$. If the trigger event is to occur, that is $L_{\tau} > D$ for some $\tau \in [0,T]$ (if it exists), then the IL CAT bond-type leg of the IL CocoCat terminates, and the investor receives a share in the common equity of the firm, at a conversion price equal to $K_P$. That is, the investor will recover $\nicefrac{\zeta Z}{K_P}$ units of shares of the issuer, at a value of $\left(\nicefrac{\zeta Z}{K_P}\right) S_{\tau}$ for $Z$ nominal invested in the IL CocoCat.

\subsection{Selection of conversion price $K_P$}
\label{section_selection}
\noindent As we mentioned before, there are three general cases one can consider for the conversion price, and we consider each of them. For the case when $K_P$ is assumed to be a real-valued function of the share price, we suppose that it takes on the form form $K_P \coloneqq S_\tau^\nu$, for $\nu \in (0,1]$. Taking such a functional of $S_\tau$ allows for flexibility in the design of the IL CocoCat, and also analytical expressions for the price of the IL CocoCat. Firstly, it allows both the investor and the issuer to take into account their views on the impact of catastrophe-related losses on the issuer's share price. For example, if the investor believes that the market does not satisfactorily capture the impact of large catastrophic losses on its assessment of the share's value, then $\nu$ should be set equal to a value less than $1$ so that the investor purchases the share at a value cheaper than market value. Secondly, it can allow both the investor and the issuer to account for their views on the future market performance of the issuer's share. For example, if the issuer believes the share price will rise (independent of the catastrophic losses), then $\nu$ could be set equal to a value less than $1$, so that the investor does not gain a large portion of share ownership. Finally, note that if $\nu = 1$, then the conversion price is set at the share price at time-of-conversion.

\section{Analytical  risk-neutral pricing of index-linked CocoCat}
\label{section_valuation}

\noindent We now price the index-linked CocoCat within the context of our model, commencing with a generic conversion price $K_P$, at the issue date $t_0 \coloneqq 0$ under our risk-neutral probability measure $\mathbb{Q}$.
All notation is assumed to follow that in Section \ref{subsec:3.1}, and the assumptions as well as results therein are also assumed to hold. As specified in Section \ref{subsec:3.1},  \begin{equation}\label{R}R(t,t_{i-1},t_{i-1} + \Delta)\end{equation} is the forward LIBOR at time $t$ for the interval $[t_{i-1},t_{i-1}+ \Delta]$. In particular, $R(t,t,t + \Delta)$ is the LIBOR at time $t$: since $\Delta$ is assumed to be constant, we denote the LIBOR process by $\{R_t, t \geq 0 \}$.
\begin{equation}\label{B}
B(0,t) \coloneqq \exp\left(-\int_0^t r_u \mathrm{d}u  \right)\end{equation}
 is the discounted riskless bank account associated with the progressively measurable process $\{r_t, t \geq 0 \}$, and \begin{equation}\label{P}P(0,t) \coloneqq \mathbb{E}^{\mathbb{Q}} \left[ B(0,T) | \mathcal{F}_t \right],\qquad \forall t \in [0,T].\end{equation}

Now, under $\mathbb{Q}$, it is possible to find an expression for the price of the CocoCat at issue under expectation. This is so important that we formalise it as a main Fact.

\begin{framed}
\begin{fact}
The issue-date, (or time-zero) risk-neutral price, $V_0$, of an IL CocoCat is
\begin{align}
V_0 &= \mathbb{E}^\mathbb{Q}[ I_1 + I_2 + I_3 ], \label{eq:BIG}
\end{align}
where
\begin{align*}
I_1 &:=  \sum_{i=1}^N  \big(R_{t_{i-1}} + c    \big) \Delta Z\mathbb{I}_{\{\tau > t_i\}}B(0, t_i); \\
I_2 &:= \frac{\zeta Z}{K_P} S_{\tau}\mathbb{I}_{\{\tau \leq T\}}B(0, \tau);\\
I_3 &:= Z\mathbb{I}_{\{\tau > T\}} B(0, T).
\end{align*}

\end{fact}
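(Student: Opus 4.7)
The plan is to apply the first fundamental theorem of asset pricing in its standard form: the time-zero value of a European-style contingent claim equals the $\mathbb{Q}$-expectation of the sum of its cash flows, each discounted to time zero by the bank-account factor $B(0,\cdot)$ defined in (\ref{B}). Since the payoffs of the IL CocoCat are completely specified in Section \ref{section_valuation} (Instrument operation), the proof is essentially a careful bookkeeping of those cash flows, partitioned according to whether and when the loss index $L$ crosses the threshold $D$.

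Concretely, I would first verify that $\tau$ is a stopping time with respect to $(\mathcal{C}_t)_{t\ge 0}$, which is immediate because $L$ is adapted to $(\mathcal{C}_t)_{t\ge 0}$ and $\tau$ is the first passage time of a right-continuous process above a constant level. I would then enumerate the three disjoint streams of cash flows. At each coupon date $t_i\in\mathbb{T}$, the holder receives the floating-plus-spread coupon $(R_{t_{i-1}}+c)\,\Delta\,Z$ provided the bond has not yet triggered, i.e.\ on $\{\tau>t_i\}$; discounting each such coupon by $B(0,t_i)$ and summing over $i=1,\dots,N$ produces $I_1$. If the trigger fires within the term, i.e.\ on $\{\tau\le T\}$, the holder receives $\zeta Z/K_P$ shares valued at $S_\tau$ at the random time $\tau$; discounting by $B(0,\tau)$ yields $I_2$. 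On the complementary event $\{\tau>T\}$, the principal $Z$ is returned at maturity, giving $I_3$ after discounting by $B(0,T)$.

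Adding these three contributions, taking $\mathbb{Q}$-expectation, and appealing to linearity delivers (\ref{eq:BIG}). The integrability required for each term to have a finite expectation is straightforward: the coupon stream is bounded in expectation by the finite-intensity assumption $\int_0^T\lambda_s\,\mathrm{d}s<\infty$ together with standard estimates on the Longstaff short-rate, the equity leg is integrable because $S^{\mathcal{F}}$ is a geometric diffusion and $S^{\mathcal{C}}$ is bounded above by the deterministic compensator, and $I_3$ is manifestly integrable via (\ref{P}).

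In short, there is no genuine analytical obstacle here: the Fact is a formalisation of the instrument's contractual cash-flow description, and the argument is essentially definitional once one has established $\mathbb{Q}$ as the pricing measure in Section \ref{subsec:3.1}. The only point that warrants explicit mention is that the coupon indicator $\{\tau>t_i\}$, the conversion indicator $\{\tau\le T\}$, and the survival indicator $\{\tau>T\}$ exhaust all possibilities for the sample path of $L$ relative to the level $D$, so no cash flow is counted twice and none is omitted.
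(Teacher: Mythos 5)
Your proposal is correct and matches the paper's treatment: the paper offers no formal proof of this Fact, presenting it as an immediate consequence of risk-neutral valuation applied to the instrument's contractually specified cash flows, which is precisely the discount-and-enumerate bookkeeping you carry out. Your added remarks on the stopping-time property of $\tau$, the mutual exclusivity of the indicator events, and the integrability of each leg simply make explicit what the paper leaves implicit.
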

\end{framed}

Notice that the expectation in the Fact comprises three terms. $I_1$ represents the coupon payments (linked to LIBOR) inclusive of the spread, while $I_3$ represents the capital repaid at maturity should no default occur prior. $I_2$ represents the recovery upon conversion-to-equity. Notice that if $K_P$ is set to be equal to $S_{\tau}$, we consequently obtain the pricing formula for a specific type of CAT bond which pays out $\zeta Z$ immediately upon the time of trigger (i.e. time $\tau$). We emphasise this feature of our model, which lends it to a broader suite of applications. Some CAT bond issues are in practice of this nature so our valuation framework may find potential applicability in this instance. However, in our framework below we shall consider a specific function of $S_\tau$, which shall lead to three cases for the conversion price.

\noindent Now we show how our model changes under the martingale measure $\mathbb{Q}$. Under $\mathbb{Q}$, the discounted share price process $\left\{\exp\left( -\int_{0}^t r_u \mathrm{d}u \right) S_t, t \geq 0 \right\}$ must, by definition of a risk-neutral measure, be a martingale with respect to the filtration $\mathcal{G}_t$. This is gives the following theorem.

\begin{theorem} \label{prop:1}
Let
\begin{equation}\label{kappa}
\kappa = \frac{1}{\alpha}\left(1 - (\mathcal{L}f_X)(\alpha) \right),
\end{equation}
where $(\mathcal{L}f_X)(\alpha):=\int_0^\infty e^{-\alpha y}f_X(y)\mathrm{d}y$ is the Laplace transform  of $f_X$, the density function of each severity component $X$ over the positive support of $X$ evaluated at $\alpha$.
Then there exists the risk-neutral measure $\mathbb{Q}=\mathbb{Q}_F\otimes\mathbb{P}_C$ and the catastrophe-risk and financial markets risk variables under this measure
are captured by the following system of equations:
\begin{align}
\mathrm{d} r_t &= \theta_r (m_r - \sqrt{r_t})\mathrm{d}t + \sigma_r \sqrt{r_t}\mathrm{d}\tilde{W}^1_t, \label{eq:9} \\
S_t &= S_t^\mathcal{C} S_t^\mathcal{F}, \label{eq:10} \\
S_t^\mathcal{C} &= \exp \left( -\alpha L_t + \alpha \kappa \int_{0}^t \lambda_u \mathrm{d}u \right) , \label{eq:15} \\
{S}_t^\mathcal{F} &= S_{0} \exp \left\{ {Y}_t\right\} , \label{eq:16} \\
\mathrm{d} {Y}_t &= r_t Y_t\mathrm{d}t + \sigma_S Y_t\mathrm{d}\tilde{W}^2_t, \label{eq:11} \\
\mathrm{d} \langle \tilde{W}^1_t, \tilde{W}^2_t  \rangle &= \rho \mathrm{d}t,\label{eq:12} \\
L_t &= \sum_{k=1}^{N_t} X_k, \label{eq:13}
\end{align}
 \noindent where $\theta_r$ and $m_r$ are the risk-neutral parameters for the interest-rate process
 given explicitly in Equations \eqref{new1} and \eqref{new2} and
 $\tilde{W}^1_t$ and $\tilde{W}^2_t$ are two Brownian motions under the measure $\mathbb{Q}_F$.
\end{theorem}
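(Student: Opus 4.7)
The plan is to construct $\mathbb{Q}$ as the product $\mathbb{Q}_F \otimes \mathbb{P}_C$, exploiting the factorisation of the probability space established by Assumption \ref{ass_1} and Proposition \ref{prop_q}. The risk-neutral condition to satisfy is that $B(0,t) S_t = B(0,t) S_t^\mathcal{C} S_t^\mathcal{F}$ is a $\mathbb{Q}$-martingale with respect to $\mathcal{G}_t$. Since $S_t^\mathcal{C}$ is $\mathcal{C}_t$-measurable and $B(0,t)S_t^\mathcal{F}$ is $\mathcal{F}_t$-measurable, the factorisation in Proposition \ref{prop_split} together with Lemma \ref{lemma_usefulq} reduces the problem to two independent martingale conditions: one for $S_t^\mathcal{C}$ under $\mathbb{P}_C$, and one for $B(0,t)S_t^\mathcal{F}$ under $\mathbb{Q}_F$. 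By Assumption \ref{ass_2} we take $\mathbb{Q}_C = \mathbb{P}_C$, so $L_t$ retains its compound Poisson dynamics \eqref{eq:13}.

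To pin down $\kappa$, I would apply the standard compound-Poisson Laplace transform identity
\[
\mathbb{E}^{\mathbb{P}_C}\!\left[e^{-\alpha L_t}\right] \;=\; \exp\!\left( \int_0^t \lambda_u \bigl[(\mathcal{L}f_X)(\alpha) - 1\bigr]\, \mathrm{d}u \right).
\]
The mean-one requirement $\mathbb{E}^{\mathbb{P}_C}[S_t^\mathcal{C}] = 1$ then rearranges to $\alpha \kappa = 1 - (\mathcal{L}f_X)(\alpha)$, giving \eqref{kappa}. Upgrading from mean-one to the full martingale property follows from the independent increments of $N$ together with the iid severities, applied through the tower property.

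Next, I would construct $\mathbb{Q}_F$ via Girsanov's theorem, but with a careful orthogonal split to accommodate the constant-kernel hypothesis of Theorem \ref{prop_DSR}. Writing $W^2 = \rho W^1 + \sqrt{1-\rho^2}\, W^3$ with $W^3 \perp W^1$ under $\mathbb{P}_F$, I would apply a joint Girsanov transform with a constant kernel $\gamma$ on $W^1$ (the market price of interest-rate risk, assumed calibrated) and an adapted kernel $\gamma^{(3)}_t$ on $W^3$. Theorem \ref{prop_DSR} applied to the $W^1$-driven component delivers \eqref{eq:9} with the Longstaff form intact and with the risk-neutral parameters $\theta_r, m_r$ given in \eqref{new1}--\eqref{new2}. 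Since $W^3$ does not enter the interest-rate SDE, no further adjustment to $r_t$ is induced. The kernel $\gamma^{(3)}_t$ is then chosen so that the aggregate drift of $S_t^\mathcal{F}$ under $\mathbb{Q}_F$ becomes $r_t$, producing \eqref{eq:11}; the relation $\tilde W^2 = \rho \tilde W^1 + \sqrt{1-\rho^2}\,\tilde W^3$ preserves the correlation structure \eqref{eq:12}.

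The main obstacle I anticipate is establishing that the joint Radon-Nikod\'ym density process is a true martingale, since $\gamma^{(3)}_t$ inherits a stochastic dependence on $r_t$ through the risk-premium adjustment and is therefore not bounded a priori. I would verify Novikov's condition on every finite horizon by combining the nonnegativity of $r_t$ with the standard $L^p$ moment bounds for the square-root-type Longstaff diffusion. Once the density is validated, assembling \eqref{eq:9}--\eqref{eq:13}, together with the invariance of $L_t$ under the transition $\mathbb{P}_C \to \mathbb{Q}_C$, yields the claimed risk-neutral dynamics.
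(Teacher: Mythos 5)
Your proposal follows essentially the same route as the paper: orthogonalise $W^1, W^2$, apply a Girsanov transform with a constant kernel on the interest-rate Brownian motion (so that Theorem \ref{prop_DSR} preserves the Longstaff form, yielding \eqref{new1}--\eqref{new2}) and a state-dependent kernel on the orthogonal component chosen to make the drift of $S^{\mathcal{F}}_t$ equal to $r_t$, fix $\kappa$ via the compound-Poisson Laplace transform and the mean-one/martingale requirement on $S^{\mathcal{C}}_t$, and assemble the two pieces through the product-measure structure and Corollary \ref{prop_smaller}. Your explicit concern about verifying that the Girsanov density is a true martingale (the kernel depending on $r_u$) is a refinement the paper passes over in silence, but it does not alter the approach.
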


\begin{proof}
Equations (\ref{eq:15}) and (\ref{eq:13}) are both an immediate consequence of Assumption 2, in that the processes retain their distributional forms as well as parameters when moving from  $\mathbb{P}$ to $\mathbb{Q}$. Now, we consider how to find Equations (\ref{eq:9}), (\ref{eq:16}), (\ref{eq:11}) and (\ref{eq:12}) from Equations (\ref{eq:1}), (\ref{eq:8}), (\ref{eq:3}) and (\ref{eq:4}) respectively.

In the first step we prove that the discounted stock price
$\left\{\exp\left( -\int_{0}^t r_u \mathrm{d}u \right) S_t^{\mathcal{F}}, t \geq 0 \right\}$
is a $\hat{\mathcal{F}_t}$-martingale under the appropriate chosen market martingale-measure, $\mathbb{Q}_F$.
Classical arguments based on It\^{o}'s formula show that this requirement is equivalent to
Equation \eqref{eq:11}. We show now how to choose the measure $\mathbb{Q}_F$ to obtain this equation out 
of Equation \eqref{eq:3}. Define
\begin{align}
B_t^1 &= W_t^1 \quad \text{and} \nonumber \\
B_t^2 &= \frac{1}{\sqrt{1-\rho^2}}W_t^2 - \frac{\rho}{\sqrt{1-\rho^2}}W_t^1. \nonumber
\end{align}
By L\'evy's Theorem (see \citet[Chapter 3]{karatzas2012brownian}), $B_t^1$ and $B_t^2$ are two Brownian motions respectively. Moreover, the covariance between $B_t^1$ and $B_t^2$ is zero, so the two Brownian motions are uncorrelated.
Let
\begin{align}
\gamma^1&:=\rho\sigma_S
\quad \text{and} \label{mpr_1}\\
\gamma_u^2 &:= \sigma_S\sqrt{1-\rho^2} +\beta_u.
\label{mpr_2}
\end{align}
for some $\beta_u$ which will be specified later.

We define the risk neutral measure $\mathbb{Q}_F$ on the financial market using inverse of Girsanov martingale, which is also a martingale:
\begin{align}
\left.\frac{\mathrm{d}\mathbb{Q}_F}{\mathrm{d}\mathbb{P}_F}\right\vert_{\hat{\mathcal{F}}_t}=\eta(t) \coloneqq \exp \left(-\frac{1}{2} \int_0^t [(\gamma^1)^2 + (\gamma_u^2)^2] \mathrm{d}u - \int_0^t \gamma^1 \mathrm{d}B_u^1 - \int_0^t \gamma_u^2 \mathrm{d}B_u^2\right). \nonumber
\end{align}

Now by the multidimensional Girsanov Theorem (see \citet[Chapter 3]{karatzas2012brownian})
the processes
\begin{align}
\tilde{B}_t^1 &:= B_t^1 + \int_0^t \gamma^1 \mathrm{d}u, \quad \text{and} \nonumber \\
\tilde{B}_t^2 &:= B_t^2 + \int_0^t \gamma_u^2 \mathrm{d}u, \nonumber
\end{align}
are standard (uncorrelated) Brownian motions under the measure $\mathbb{Q}_F$. Then, by a further application of L\'evy's Theorem, we can define two new correlated Brownian motions, under the measure $\mathbb{Q}_F$, such that
\begin{align}
\tilde{W}_t^1 &:= \tilde{B}_t^1=W_t^1+\int_0^t \gamma^1 \mathrm{d}u, \label{pierwszy}\\
\tilde{W}_t^2 &:= \rho \tilde{B}_t^1 + \sqrt{1 - \rho^2}\tilde{B}_t^2=W_t^2+ \int_0^t \left(\rho \gamma^1 + \sqrt{1 - \rho^2}\gamma_u^2\right) \mathrm{d}u,
 \label{drugi}\\
\mathrm{d} \langle \tilde{W}^1_t, \tilde{W}^2_t  \rangle &= \rho \mathrm{d}t. \nonumber
\end{align}
Now by choosing
\begin{equation}\label{beta}
\beta_u = \frac{\mu_S-r_u-\sigma_S^2}{\sigma_S\sqrt{1-\rho^2}}
\end{equation}
we obtain $\mu_s- \sigma_S\left(\rho \gamma^1 + \sqrt{1 - \rho^2}\gamma_u^2\right)=r_u$.
So, inserting this into Equation \eqref{eq:3} and using \eqref{drugi}, we obtain Equation \eqref{eq:11}.

From Equation \eqref{pierwszy} and Theorem \ref{prop:1} it follows that the Longstaff interest-rate model is preserved, that is, Equation \eqref{eq:9} holds true.
In this case, the new parameters are given by
\begin{align}
\theta_r &:=\bar{\theta}_r+\sigma_r\gamma^1, \quad \text{and}\label{new1}\\
m_r &:= \frac{\sigma^2_r}{4 \theta_r}.\label{new2}
\end{align}

\noindent In the second step, we prove that the chosen $\kappa$ satisfies $\mathbb{E}^{\mathbb{P}_C}\left[ S_t^\mathcal{C} | \hat{\mathcal{C}}_s  \right] = S_s^\mathcal{C}$
for $s < t$.
In the context of Assumption \ref{ass_2}, we hence require that, for $s < t$,
\begin{align}
\mathbb{E}^{\mathbb{P}_C} \left[ \exp \left( -\alpha L_t + \alpha \kappa \int_{0}^t \lambda_u \mathrm{d}u \right) | \hat{\mathcal{C}}_s  \right] &= \exp \left( -\alpha L_s + \alpha \kappa \int_{0}^s \lambda_u \mathrm{d}u \right), \label{poisson_first}
\end{align}
which can be rewritten as
\begin{align}
\mathbb{E}^{\mathbb{P}_C} \left[ \exp \left( -\alpha (L_t - L_s) + \alpha \kappa \int_{s}^t \lambda_u \mathrm{d}u \right) | \hat{\mathcal{C}}_s  \right] &= \mathbb{E}^{\mathbb{P}_C} \left[ \exp \left( -\alpha (L_t - L_s) + \alpha \kappa \int_{s}^t \lambda_u \mathrm{d}u \right)\right]=1.  \label{choicekappa}
\end{align}
Now consider $\mathbb{E}^{\mathbb{P}_C} \left[ \exp \left( -\alpha (L_t - L_s) \right) \right]$. This can be simplified as follows.
\begin{align}
\mathbb{E}^{\mathbb{P}_C} \left[ \exp \left( -\alpha (L_t - L_s) \right) \right] &= \mathbb{E}^{\mathbb{P}_C} \left[ \mathbb{E}^{\mathbb{P}_C} \left[ \exp \left( -\alpha (L_t - L_s)\right) | N_t - N_s  \right] \right] \nonumber \\
 &=  \mathbb{E}^{\mathbb{P}_C} \left[ \exp \left( -\alpha \sum_{k=1}^{N_t - N_s} X_k \right) | N_t - N_s  \right]\nonumber\\
 &= \mathbb{E}^{\mathbb{P}_C} \left[ \{(\mathcal{L}f_X)(\alpha)\}^{N_t - N_s}  \right] \nonumber\\
 &= G_{N_t - N_s} \left( (\mathcal{L}f_X)(\alpha) \right) \nonumber\\
 &=  \exp \left\{ \left[ (\mathcal{L}f_X)(\alpha) - 1\right] \int_s^t \lambda_u \mathrm{d}u \right\},\\ \label{poisson_last}
\end{align}
where $G_{N_t - N_s}$ is the probability generating function of the Poisson random variable with mean $\int_s^t \lambda_u \mathrm{d}u$.
From Equation \eqref{choicekappa} we have thus $\kappa = \frac{1}{\alpha}\left(1 - (\mathcal{L}f_X)(\alpha) \right)$.

In the last step of the proof we consider the discounted share price process, $\left\{\exp\left( -\int_{0}^t r_u \mathrm{d}u \right) S_t, t \geq 0 \right\}$. Note that by design, its expectation for every $t>0$ is finite. Using Corollary \ref{prop_smaller}, we have, for $s < t$,

\begin{align}
\mathbb{E}^\mathbb{Q} \left[S_t^\mathcal{C} S_t^\mathcal{F} \exp \left( -\int_{0}^t r_u \mathrm{d}u  \right)  | \mathcal{G}_s  \right] &= \mathbb{E}^{\mathbb{Q}} \left[ \mathbb{E}^{\mathbb{Q}} \left[ S_t^\mathcal{C} S_t^\mathcal{F} \exp \left( -\int_{0}^t r_u \mathrm{d}u  \right)   | \mathcal{G}_s \vee {\mathcal{F}}_t\right] | \mathcal{G}_s \right] \nonumber\\
&= \mathbb{E}^{\mathbb{Q}} \left[ S_t^\mathcal{F} \exp \left( -\int_{0}^t r_u \mathrm{d}u  \right) \mathbb{E}^{\mathbb{Q}} \left[ S_t^\mathcal{C} | \mathcal{G}_s\right] | \mathcal{G}_s  \right] \nonumber\\
&= S_s^\mathcal{C} \mathbb{E}^{\mathbb{Q}} \left[ S_t^\mathcal{F} \exp \left( -\int_{0}^t r_u \mathrm{d}u  \right) | \mathcal{G}_s \right] \nonumber\\
&= S_s^\mathcal{C} S_s^\mathcal{F}  \exp \left( -\int_{0}^s r_u \mathrm{d}u  \right), \nonumber
\end{align}
where the last line follows from the first step of the proof.
\end{proof}

\noindent We now evaluate the three terms in Equation (\ref{eq:BIG}) separately.

\subsection{Coupon payments}
\label{coup_pmts}
\noindent We consider $\mathbb{E}^\mathbb{Q}[I_1]$, which will be split into the first LIBOR-referencing coupon, and the remaining subsequent ones. One avenue to use in simplifying this term would be to make an approximation to the forward LIBOR rates, and suppose that they are always a constant spread above the riskless rate (see \citet{jarrow2010simple}, for instance). During the periods of financial stability surrounding the 2007/2008 financial crisis, this assumption held to a considerable extent in practice (taking the OIS rate to be the riskless rate). However, in times of crises, empirical work showed that this spread scales up significantly and the assumption becomes questionable \citep{hull2013libor}. We do not use this approximate approach in our analyses but point out this drawback in light of the fact that it has been used in theoretical CAT bond pricing (with the view of obtaining closed-form solutions).

\begin{enumerate}
\item[(i)] The first LIBOR-referencing coupon (which is known at the outset) can be found by risk-neutral valuation formula:
\begin{align}
\mathbb{E}^{\mathbb{Q}} \bigg[ \big( R_{0} + c \big)\Delta Z \mathbb{I}_{\{\tau > t_1\}}B(0, t_1) \bigg], \nonumber
\end{align}
which is, under Assumption 2, equal to
\begin{align}
 \big( R_{0} + c \big) P(r_0,t_1, \theta_r, \sigma_r) \mathbb{P}\left( L_{t_1} < D \right)\Delta Z, \label{eq_term1}
\end{align}
where $R_t$ and $P(0,t)$ are defined in Equations \eqref{R} and \eqref{P}, respectively, $r_0$ is the initial instantaneous interest rate and
\begin{align}
\mathbb{P}\left( L_{t_i} < D \right) &= \exp \left( -\int_0^{t_i} \lambda_u \mathrm{d}u \right) \sum_{n=0}^\infty
 \frac{\left(\int_0^{t_i} \lambda_u \mathrm{d}u \right)^n }{n!}F_X^{n\ast} (D). \label{eqn_cpprob}
\end{align}
where $F_X^{n\ast} (D)$ denotes the $n-$fold convolution of single loss distribution function $F_X$ with itself, evaluated at the positive argument $D$.
\item[(ii)] For the second to $n^{\text{th}}$ LIBOR-referencing coupon, we change measure to the respective forward measure. For each $i \in \{2, ... n \}$, we use the $t_i$ forward measure $\mathbb{Q}^{t_i}$, defined to be the forward measure for the num\'eraire process $P(0, t_i)$; see \citep{bjork2009arbitrage}. Thus the value of the $i^{\text{th}}$ coupon payment at the issue-date is,
\begin{align}
\mathbb{E}^{\mathbb{Q}} \left[ \big(R_{t_{i-1}} + c    \big) \Delta Z \mathbb{I}_{\{\tau > t_i\}}B(0,t_i) \right] &= Zc \mathbb{P} (L_{t_i} < D)\Delta + Z \mathbb{P} (L_{t_i} < D) \mathbb{E}^{\mathbb{Q}} \left[ R_{t_{i-1}}B(0,t_i) \right]\Delta, \label{eq:23}
\end{align}
with the respective probabilities given by Equation (\ref{eqn_cpprob}). Consider $\mathbb{E}^{\mathbb{Q}} \left[ R_{t_{i-1}}B(0,t_i) \right]\Delta$. By changing to the $t_i$ forward measure (see \citet{bjork2009arbitrage}), recalling that the forward LIBOR is a $\mathbb{Q}^{t_i}$ martingale (see \citet{bjork2009arbitrage}) and noting the definition of forward LIBOR (see \citet{brigo2007interest}):
\begin{align}
\mathbb{E}^{\mathbb{Q}} \left[ R_{t_{i-1}}B(0,t_i) \right]\Delta &= P(0, t_i) \mathbb{E}^{\mathbb{Q}^{t_i}} \left[  R_{t_{i-1}}  \right] \Delta \nonumber \\
&= P(0, t_i)R(0, t_{i-1}, t_{i-1} + \Delta) \Delta \\
&= P(0, t_i) \frac{1}{\Delta}\left( \frac{P(0, t_{i-1})}{P(0, t_{i})} -1   \right) \Delta \nonumber \\
&= P(0, t_{i-1}) - P(0, t_i). \label{eq:27}
\end{align}
Inserting Equation (\ref{eq:27}) into Equation (\ref{eq:23}), and adopting our notation for the zero-coupon bond price under the Longstaff models gives the required value at the issue date of the remaining coupon payments, i.e.:

\begin{align}
\mathbb{E}^{\mathbb{Q}} \left[ \sum_{i=2}^N \big(R_{t_{i-1}} + c    \big) \Delta Z \mathbb{I}_{\{\tau > t_i\}}B(0,t_i) \right] &= Z\sum_{i=2}^N \mathbb{P} (L_{t_i} < D) \big[ c \Delta + P(r_0,t_{i-1}, \theta_r, \sigma_r) \nonumber \\  &\quad - P(r_0,t_{i}, \theta_r, \sigma_r) \big].  \label{eq_term2}
\end{align}
\end{enumerate}

\subsection{Redemption amount}
\noindent We now consider $\mathbb{E}^\mathbb{Q}[I_3]$. By analogous reasoning to Section \ref{coup_pmts}, we obtain
\begin{align}
\mathbb{E}^\mathbb{Q} \bigg[Z\mathbb{I}_{\{\tau > T\}} B(0,T) \bigg] &= Z P(r_0,T, \theta_r, \sigma_r) \mathbb{P} (L_T \geq D). \label{eq_term3}
\end{align}

\subsection{Conversion feature}
\label{sec:conversion_feature}
\noindent Finally, we consider $\mathbb{E}^\mathbb{Q}[I_3]$ for a particular type of conversion price. Recall that $K_P$ can either be set as a constant, as the share price at time-of-conversion or be specified as the function $K_P \coloneqq S_\tau^\nu$, for $\nu \in (0,1]$ as specified in Section \ref{section_selection}. The restriction of $\nu$ to this interval is necessary for the Laplace transform of $f_X$ to not be infinite. Notice that if $\nu = 1$, then the conversion price is set equal to the share price.
Our analysis below is general and we simply consider either case of
\begin{enumerate}
\item[4.3.1] $K_P \coloneqq S_\tau^\nu$, or
\item[4.3.2] $K_P \coloneqq K$.
\end{enumerate}
We analyse each in turn below. \\

\noindent {\bf Case 1.} In this case, we are interested in simplifying
\begin{align}
\mathbb{E}^\mathbb{Q} [{S_{\tau}^{1-\nu}}\mathbb{I}_{\{\tau \leq T\}}B(0, \tau) ] \nonumber
\end{align}
\noindent which can be expanded as
\begin{align}
& \mathbb{E}^\mathbb{Q} \bigg[\exp \left( -\alpha (1-\nu) L_\tau + \int_0^\tau (1-\nu)\left( \alpha \kappa \lambda_u - \frac{\sigma_S^2}{2} \right)\mathrm{d}u +  \sigma_S (1-\nu) \tilde{W}_\tau^2 - \nu \int_0^\tau r_u \mathrm{d}u \right) \nonumber \\ &\times S_0 \mathbb{I}_{\{\tau \leq T\}} \bigg]. \label{eqn:next2}
\end{align}

\noindent To evaluate Equation (\ref{eqn:next2}). we will use a specific change of measure. Since $N_t$ in the process $L_t$ is a time-inhomogeneous Poisson process with the intensity $\lambda_t$, then the process
$X_t^\# \coloneqq (t,L_t)$ is a particular case of the so-called piecewise deterministic process (see \citet[Section 5.2]{PalmowskiRolski} for further details). Indeed, in this case one has to consider the empty active boundary, $\Gamma$, take the external state index of this process to be equal to one, and take the state space, $\mathcal{O}_1$, to be $\mathbb{R}^2$ - that
is, $x^\# \coloneqq (t,y)\in \mathbb{R}^2$ is a value of the process $X_t^\#$. Moreover, the differential operator of $X_t^\#$ is specified by $\chi f(t,x^\#) \coloneqq \frac{\partial}{\partial t}f(t,x^\#)$, the transition kernel of the process satisfies $Q(x^\#,\mathrm{d}y)=F_X^\mathbb{Q}(\mathrm{d}y)$ (where $F_X^\mathbb{Q}$ is the severity distribution of the process $L_t$ under $\mathbb{Q}$) and its jump intensity, $\lambda(t,y)$, equals $\lambda_t$.
Now taking $h(y)=e^{-\alpha(1-\nu)y}$ in \citet[Equation (1.1)]{PalmowskiRolski} as a good function and applying it to the generator,
\[\mathcal{A}f(t,x)= \frac{\partial}{\partial t}f(t,x)+\lambda_t\left(\int_0^\infty f(t,x+y)\;F_X(\mathrm{d}y)-f(t,x)\right), \]
of the process $L_t$ (see p. 776 of \citet{PalmowskiRolski})
produces the following exponential martingale:
\begin{align}
\bar{\eta}^{(\nu)} (t)\coloneqq \exp \left( -\alpha (1-\nu) L_t + \varphi (\alpha (1-\nu), t)\right), \label{measure_changeNB}
\end{align}
where
\begin{align}
\varphi (\alpha (1-\nu), t):=[1 - (\mathcal{L}f_X)(\alpha (1 - \nu))]\int_0^t \lambda_u\;\mathrm{d}u. \label{eqn_relation}
\end{align}
and $(\mathcal{L}f_X)(\alpha(1-\nu))$ is the Laplace transform of $f_X$, for the non-negative support of the random variable $X$, evaluated at the argument $\alpha(1-\nu)$.
Following \citet[Theorem 5.3]{PalmowskiRolski} we can now define a new probability measure $\mathbb{P}^{(\nu)}$ via the Radon-Nikodym derivative:
\begin{equation}\label{good}\left.\frac{\mathrm{d}\mathbb{P}^{(\nu)}}{\mathrm{d}\mathbb{P}_C}\right\vert_{\hat{\mathcal{C}}_t}=\bar{\eta}^{(\nu)}(t),\end{equation}
on which our process $L_t$ remains a compound Poisson process but with altered severity distribution and intensity:
\begin{align}
\lambda^{(\nu)}_t &= (\mathcal{L}f_X)(\alpha(1-\nu))\lambda_t , \quad \text{and} \label{eq_intensitychangeNB}\\
F_X^{(\nu)}(\mathrm{d}x) &= \frac{e^{-\alpha(1-\nu) x}F_X^{\mathbb{Q}}(\mathrm{d}x)}{(\mathcal{L}f_X)(\alpha(1-\nu))}. \label{eq_severitychangeNB}
\end{align}
Note that when $\lambda_t=\lambda$ (a constant), then many things simplify. For example,
\begin{align}
\varphi (\alpha (1-\nu), t)&= \lambda \varphi (\alpha (1-\nu))t, \label{eqn_relation2}
\end{align}
\noindent where
\begin{align}
\varphi (\alpha (1-\nu)) &\coloneqq \lambda [1 - (\mathcal{L}f_X)(\alpha (1 - \nu))]. \label{eqn_relation3}
\end{align}

\noindent By applying the measure change specified by Equation \eqref{good} and applying Assumption \ref{ass_2}, we can rewrite Equation (\ref{eqn:next2}) as
\begin{align}
&\mathbb{E}^{\mathbb{Q}_F\otimes \mathbb{P}^{(\nu)}} \bigg[\exp \left( \int_0^\tau (1-\nu)\left( \alpha \kappa \lambda_u - \frac{\sigma_S^2}{2} \right)\mathrm{d}u - \varphi(\alpha(1-\nu), \tau)  +  \sigma_S (1-\nu) \tilde{W}_\tau^2 - \nu \int_0^\tau r_u \mathrm{d}u \right) \nonumber \\& \times S_0 \mathbb{I}_{\{\tau \leq T\}}\bigg] \nonumber \\
&= S_0\int_0^T \; \mathbb{P}^{(\nu)}_\tau (\tau \in \mathrm{d}s) \bigg\{\exp \left(  \int_0^s (1-\nu)\left( \alpha \kappa \lambda_u - \frac{\sigma_S^2}{2} \right)\mathrm{d}u - \varphi(\alpha(1-\nu),s)  \right) \nonumber\\
& \quad \times \mathbb{E}^{\mathbb{Q}_F} \left[ \exp \left( \sigma_S (1 - \nu) \tilde{W}_s^2 - \nu \int_0^s r_u \mathrm{d}u \right)  \right]\bigg\},  \label{eq:end}
\end{align}
\noindent where the last line follows by the independence of  $\tilde{W}_t^1$ and $\tilde{W}_t^2$ from the measure change, and $\mathbb{P}^{(\nu)}_\tau$ is the density function of $\tau$ under $\mathbb{P}^{(\nu)}$. Notice that Equation (\ref{eq:end}) is a product of three terms, the first the distribution of $\tau$ under $\mathbb{P}^{(\nu)}$, the second an exponential term and the third an expectation, under $\mathbb{Q}_F$, of two correlated $\mathbb{Q}$-Brownian motions. We now present the following helpful lemma. Before reading further, recall that in our model we had two correlated standard Brownian motions under $\mathbb{Q}$, $\tilde{W}_t^1$ and $\tilde{W}_t^2$, with correlation coefficient $\rho$. The purpose of this lemma is to remove $\tilde{W}_s^2$ from the expectation in Equation (\ref{eq:end}).  \\

\begin{lemma}
Consider a new probability measure, $\bar{\mathbb{Q}_F}$, given by
the Radon-Nikodym derivative specified by
\begin{align}
\left.\frac{\mathrm{d}\bar{\mathbb{Q}_F}}{\mathrm{d}\mathbb{Q}_F}\right\vert_{\hat{\mathcal{F}}_t}:=\eta^{\ast} (t):= \exp \left((1-\nu)\sigma_S \tilde{W}_t^2 - \frac{1}{2} (1-\nu)^2 \sigma_S^2t  \right) \nonumber
\end{align}
\noindent Then, under the probability measure $\bar{\mathbb{Q}}$ the process $\tilde{\tilde{W}}_t^1 \coloneqq \tilde{W}_t^1 -  \rho \sigma_S (1-\nu)t$ is a standard Brownian motion.
\end{lemma}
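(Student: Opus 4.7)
The statement is essentially the multidimensional Girsanov theorem applied to two correlated Brownian motions, with $\eta^{\ast}(t)$ recognised as a Dol\'eans-Dade exponential. The plan is to (i) confirm $\eta^\ast$ is a true $\mathbb{Q}_F$-martingale, so $\bar{\mathbb{Q}}_F$ is a well-defined probability measure; (ii) identify the drift adjustment that Girsanov imposes on $\tilde{W}^1$ through its correlation with $\tilde{W}^2$; and (iii) invoke L\'evy's characterisation to conclude that the adjusted process is standard Brownian motion.

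First I would observe that $\eta^\ast(t) = \mathcal{E}(M)_t$ is the stochastic exponential of the continuous local martingale $M_t := (1-\nu)\sigma_S \tilde{W}_t^2$. Because $M$ has deterministic quadratic variation $(1-\nu)^2 \sigma_S^2 t$, Novikov's condition
\[
\mathbb{E}^{\mathbb{Q}_F}\!\left[\exp\!\left(\tfrac{1}{2}(1-\nu)^2 \sigma_S^2 t\right)\right] < \infty
\]
is trivially satisfied on every finite horizon, so $\eta^\ast$ is a genuine $\mathbb{Q}_F$-martingale and $\bar{\mathbb{Q}}_F$ is a well-defined equivalent probability measure on $\hat{\mathcal{F}}_t$.

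Next I would apply the Girsanov transform in its general form: for any continuous $\mathbb{Q}_F$-local martingale $N$, the process $N_t - \langle N, M\rangle_t$ is a continuous $\bar{\mathbb{Q}}_F$-local martingale. Taking $N_t := \tilde{W}_t^1$ and using $\mathrm{d}\langle \tilde{W}^1, \tilde{W}^2\rangle_t = \rho\,\mathrm{d}t$ from \eqref{eq:12}, one gets
\[
\langle N, M\rangle_t = (1-\nu)\sigma_S\,\langle \tilde{W}^1, \tilde{W}^2\rangle_t = \rho\,\sigma_S(1-\nu)\,t,
\]
so $\tilde{\tilde{W}}_t^1 = \tilde{W}_t^1 - \rho\sigma_S(1-\nu) t$ is a continuous $\bar{\mathbb{Q}}_F$-local martingale. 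Its quadratic variation is unchanged by the deterministic drift, hence $\langle \tilde{\tilde{W}}^1\rangle_t = t$, and L\'evy's characterisation (as invoked earlier in the proof of Theorem \ref{prop:1}) yields that $\tilde{\tilde{W}}^1$ is a standard Brownian motion under $\bar{\mathbb{Q}}_F$.

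The only step requiring any care is the identification of the cross-variation $\langle \tilde{W}^1, \tilde{W}^2\rangle_t = \rho t$ under $\mathbb{Q}_F$; this is directly available from \eqref{eq:12}, established in Theorem \ref{prop:1}, so there is no genuine obstacle. Equivalently, one could decompose $\tilde{W}^1 = \rho \tilde{W}^2 + \sqrt{1-\rho^2}\,\tilde{W}^{\perp}$ with $\tilde{W}^{\perp} \perp \tilde{W}^2$ under $\mathbb{Q}_F$, apply Girsanov componentwise (the $\tilde{W}^{\perp}$ component is unaffected, while $\tilde{W}^2$ acquires drift $(1-\nu)\sigma_S t$), and reassemble to reach the same conclusion.
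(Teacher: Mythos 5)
Your proof is correct, but it follows a genuinely different route from the paper's. The paper argues by direct computation: it writes $\mathbb{E}^{\bar{\mathbb{Q}}_F}[e^{\zeta \tilde{W}_t^1}]$ as a $\mathbb{Q}_F$-expectation via the density $\eta^\ast(t)$, evaluates it using the moment generating function of the bivariate normal vector $(\tilde{W}_t^1,\tilde{W}_t^2)$, and reads off from the resulting expression $\exp\bigl(\tfrac{1}{2}\zeta^2 t + \rho\sigma_S(1-\nu)\zeta t\bigr)$ that $\tilde{W}_t^1$ is normal with mean $\rho\sigma_S(1-\nu)t$ and variance $t$ under the new measure. You instead recognise $\eta^\ast$ as the stochastic exponential $\mathcal{E}\bigl((1-\nu)\sigma_S\tilde{W}^2\bigr)$, check Novikov, and apply the general Girsanov theorem together with the cross-variation $\langle\tilde{W}^1,\tilde{W}^2\rangle_t=\rho t$ and L\'evy's characterisation. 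The comparison is worth noting: the paper's MGF computation is more elementary but, strictly speaking, only identifies the one-dimensional marginal law of $\tilde{W}_t^1$ at each fixed $t$ — it does not by itself verify independence and stationarity of increments, so it proves less than the lemma asserts. Your argument establishes the full process-level statement (continuous local martingale with quadratic variation $t$, hence a standard Brownian motion), and it also makes explicit that $\bar{\mathbb{Q}}_F$ is a well-defined equivalent probability measure, a point the paper leaves implicit. Both routes hinge on the same correlation input $\mathrm{d}\langle\tilde{W}^1,\tilde{W}^2\rangle_t=\rho\,\mathrm{d}t$; yours simply packages it as a cross-variation rather than as an off-diagonal covariance entry.
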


\begin{proof}
\noindent Note that by a change of measure from $\bar{\mathbb{Q}}$ to $\mathbb{Q}$, for $\zeta>0$,
\begin{align}
\mathbb{E}^{\bar{\mathbb{Q}}_F} \left[ e^{\zeta \tilde{W}_t^1} \right] &= \mathbb{E}^{\mathbb{Q}_F} \left[ \exp\left(\zeta \tilde{W}_t^1 + (1-\nu)\sigma_S \tilde{W}_t^2 - \frac{1}{2} (1-\nu)^2 \sigma_S^2t \right) \right] \nonumber\\
&= \exp \left( -\frac{1}{2}(1-\nu)^2 \sigma_S^2t  \right) \mathbb{E}^{\mathbb{Q}_F}  \left[ (\zeta,(1-\nu)\sigma_S)\cdot (\tilde{W}_t^1,\tilde{W}_t^2) \right]. \label{eq:end2}
\end{align}
\noindent 
Since the distribution of $(\tilde{W}_t^1,\tilde{W}_t^2)$ is bivariate normal, we can show that Equation (\ref{eq:end2}) is equal to
\begin{align*}
& \exp \left( -\frac{1}{2}(1-\nu)^2 \sigma_S^2t  + \frac{1}{2} \left\{\zeta^2t + 2 \rho \sigma_S (1-\nu)\zeta t + (1-\nu)^2\sigma_S^2t  \right\} \right) \\
&= \exp\left( \frac{1}{2}\zeta^2t + \rho \sigma_S (1-\nu)\zeta t \right).
\end{align*}

\end{proof}

\noindent In consequence, we are able to proceed with simplifying
\begin{align}
\mathbb{E}^{\mathbb{Q}_F} \left[ \exp \left( \sigma_S (1 - \nu) \tilde{W}_s^2 - \nu \int_0^s r_u \mathrm{d}u \right)  \right]  \label{eq:bmexp}
\end{align}
from Equation (\ref{eq:end}). Now, change measure from $\mathbb{Q}_F$ to $\bar{\mathbb{Q}}_F$ to obtain that Equation (\ref{eq:bmexp}) is equal to
\begin{align}
\exp\left( \frac{1}{2} (1-\nu)^2 \sigma_S^2 s \right) \mathbb{E}^{\bar{\mathbb{Q}}_F} \left[ \exp \left( - \nu \int_0^s r_u \mathrm{d}u \right)  \right] \label{eq:finaleq}
\end{align}
\noindent where $\{r_t, t \geq 0\}$ is the interest-rate process now under the measure $\bar{\mathbb{Q}}_F$. Theorem \ref{prop_DSR} reminds us that the Longstaff interest-rate model specified in (\ref{eq:9}), under $\mathbb{Q}_F$, remains a Longstaff interest-rate model under $\bar{\mathbb{Q}}_F$ with the two parameters respectively given by
\begin{align*}
{\theta}_r^\ast = \theta_r - \sigma_r \rho \sigma_S (1 - \nu) \; ; \; {m}_r^\ast = \frac{m_r \theta_r}{\theta_r - \sigma_r \rho \sigma_S (1-\nu)}.
\end{align*}

\noindent It can also be shown under the measure $\bar{\mathbb{Q}}$, by a simple application of It\^o's Lemma, that the process $\{\tilde{r}_t, t \geq 0 \}$, where $\tilde{r}(t) \coloneqq \nu r(t) $ is a Longstaff interest-rate process with $\bar{\mathbb{Q}}$-dynamics given by
\begin{align}
\mathrm{d} \tilde{r}_t &= {\theta}_r^\circ \left({m}_r^\circ - \sqrt{\tilde{r}_t}\right)\mathrm{d}t + {\sigma}^\circ_r \sqrt{\tilde{r}_t}\mathrm{d}\tilde{\tilde{W}}^1_t, \nonumber
\end{align}
\noindent where
\begin{align}
\theta_r^\circ &= \sqrt{\nu}{\theta}_r^\ast \nonumber\; ; \;
m_r^\circ = \sqrt{\nu}{m}_r^\ast \; ; \;
\sigma_r^\circ = \sqrt{\nu}\sigma_r, \nonumber
\end{align}
\noindent and furthermore that the property that $m_r^\circ = \nicefrac{\left({\sigma}_r^\circ \right)^2}{4\theta_r^\circ}$ remains. Thus, we can rewrite Equation (\ref{eq:finaleq}) as
\begin{align}
\exp\left( \frac{1}{2} (1-\nu)^2 \sigma_S^2 s \right) \mathbb{E}^{\bar{\mathbb{Q}}} \left[ \exp \left( - \int_0^s \tilde{r}_u \mathrm{d}u \right)  \right] \label{eq:finaleq2}
\end{align}
and we have, in consequence, a solution to the partial differential equation, specified by (\ref{eqn_pde}), multiplied by a constant. Using Equation (\ref{ZCBprice}), we can simplify Equation (\ref{eq:finaleq2}) and obtain that it equals to
\begin{align}
\exp\left( \frac{1}{2} (1-\nu)^2 \sigma_S^2 s \right) P(r_0, s, \theta_r^\circ, \sigma_r^\circ). \label{eq:finaleq3}
\end{align}
Hence, finally, Equation (\ref{eq:end}) can be expressed as an integral involving an infinite sum, that is
\begin{align}
& S_0\int_0^T \; \mathbb{P}^{(\nu)}_\tau (\tau \in \mathrm{d}s) \exp \left(\int_0^s (1-\nu)\left( \alpha \kappa \lambda_u - \frac{\sigma_S^2}{2} \right)\mathrm{d}u - \varphi(\alpha(1-\nu), s) + \frac{1}{2} (1-\nu)^2 \sigma_S^2s \right) \nonumber\\
& \quad \times P(r_0, s, \lambda, \theta_r^\circ, \sigma_r^\circ) \label{eqn:highlyintuitive2}\\
&= S_0 \int_0^T A(s) \; P(r_0, s, \theta_r^\circ, \sigma_r^\circ) \; B(s)  \; \mathrm{d}s, \label{eqn_finalprice2}
\end{align}
\noindent where
\begin{align*}
A(s) &= \exp{\left( -(\sigma_S^2/2)\nu (1 - \nu)s + (1- \nu) [\varphi (\alpha, s) -  \varphi (\alpha(1-\nu),s)] - \int_0^s\lambda^{(\nu)}_u \mathrm{d}u \right)} \\
B(s) &= \lambda^{(\nu)}_s \sum_{n=0}^\infty\left[ \frac{\left(\int_0^s\lambda^{(\nu)}_u \mathrm{d}u\right)^{n-1}}{(n-1)!} - \frac{\left(\int_0^s\lambda^{(\nu)}_u \mathrm{d}u\right)^{n}}{n!} \right] F_X^{(\nu)n\ast} (D)
\end{align*}
and where $F_X^{(\nu)n\ast} (D)$ denotes the $n-$fold convolution of $F_X^{(\nu)}$ with itself, evaluated at the argument $D$, and $\varphi( \alpha, s) \coloneqq \varphi( \alpha(1- \nu), s)|_{\nu =0}$. Recall that $F_X^{(\nu)}$ and $\lambda^{(\nu)}_t$ are given in \eqref{eq_severitychangeNB} and \eqref{eq_intensitychangeNB}, respectively
and $P$ in \eqref{ZCBprice}.

We are now in a position to state the time-zero risk-neutral price of an IL CocoCat in analytical form. We formalise this all in Theorem \ref{theorem_3}.

\begin{theorem} \label{theorem_3}
In an analytical form, the time-zero risk-neutral price, $V_0$, of an IL CocoCat with conversion price equal to $S_\tau^\nu$ for $\nu \in (0, 1]$, and assuming the dynamics given in Equations (\ref{eq:9}) to (\ref{eq:13}), is given by
\begin{align}
V_0 &= Z \left(I_1^E + I_2^E + I_3^E\right),
\end{align}
where
\begin{align*}
I_1^E &=  \big( R_{0} + c \big) P(r_0,t_1, \theta_r, \sigma_r) \mathbb{P}\left( L_{t_1} < D \right)\Delta  + \sum_{i=2}^N \mathbb{P} (L_{t_i} < D) \left[ c \Delta + P(r_0,t_{i-1}, \theta_r, \sigma_r) - P(r_0,t_{i}, \theta_r, \sigma_r)  \right]  ;\\
I_2^E &= S_0 \zeta \int_0^T A(s) \; P(r_0, s, \tilde{\theta}_r, \tilde{\sigma}_r) \; B(s)  \; \mathrm{d}s    ;\\
I_3^E &= P(r_0, T, \tilde{\theta}_r, \tilde{\sigma}_r)  \mathbb{P} (L_T \geq D)
\end{align*}
and $I_1^E$ is identified in Equations \eqref{eq_term1} and \eqref{eq_term2},
$I_2^E$ in Equation \eqref{eqn_finalprice2} and
$I_3^E$ in Equation \eqref{eq_term3}.
\end{theorem}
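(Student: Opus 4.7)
The plan is to invoke the Fact, split $V_0 = \mathbb{E}^{\mathbb{Q}}[I_1] + \mathbb{E}^{\mathbb{Q}}[I_2] + \mathbb{E}^{\mathbb{Q}}[I_3]$ by linearity, and then evaluate each of the three expectations separately by exploiting the product structure $\mathbb{Q} = \mathbb{Q}_F \otimes \mathbb{P}_C$ established in Theorem \ref{prop:1}. At every stage the key lever is Proposition \ref{prop_q} (and the conditional version, Corollary \ref{prop_smaller}), which lets us factorise expectations of products of an $\mathcal{F}_\infty$-measurable and a $\mathcal{C}_\infty$-measurable random variable. In particular the indicator $\mathbb{I}_{\{\tau > t\}}$ is $\mathcal{C}_\infty$-measurable, so it always factors out as a probability $\mathbb{P}(L_t < D)$, which will then be expanded by conditioning on $N_t$ to produce the convolution series \eqref{eqn_cpprob}.

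First I would handle $\mathbb{E}^{\mathbb{Q}}[I_3]$, which is the quickest: factorisation gives $Z\,\mathbb{P}(L_T \geq D)\,\mathbb{E}^{\mathbb{Q}_F}[B(0,T)]$, and by definition of $P(r_0,T,\theta_r,\sigma_r)$ in \eqref{P} and \eqref{ZCBprice} this is exactly $I_3^E$ after relabelling. Next, for $\mathbb{E}^{\mathbb{Q}}[I_1]$, I would peel off the $i=1$ coupon, which is deterministic in the rate argument, yielding the first summand of $I_1^E$ directly via \eqref{eq_term1}. For the remaining coupons $i=2,\dots,N$, I would change to the $t_i$-forward measure $\mathbb{Q}^{t_i}$ with num\'eraire $P(0,t_i)$, use the martingale property of the forward LIBOR under $\mathbb{Q}^{t_i}$, and then apply the telescoping identity \eqref{eq:27}. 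This reproduces exactly the sum in $I_1^E$.

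The main obstacle is $\mathbb{E}^{\mathbb{Q}}[I_2]$, which involves the stopping time $\tau$ together with both $L_\tau$ (entering through $S_\tau^{\mathcal{C}}$) and the financial factor $\tilde{W}^2_\tau$ together with the discount $B(0,\tau)$. I would first expand $S_\tau^{1-\nu}B(0,\tau)$ explicitly using \eqref{eq:10}--\eqref{eq:11} so that the integrand becomes of the form \eqref{eqn:next2}. The crucial step is the exponential tilting \eqref{good} on the catastrophe space: $\bar{\eta}^{(\nu)}(t)$ is a martingale by the piecewise deterministic argument of \citet{PalmowskiRolski}, and applying it converts the loss factor $e^{-\alpha(1-\nu)L_\tau}$ into the change-of-measure term $\varphi(\alpha(1-\nu),\tau)$ while leaving $L$ a compound Poisson process with the tilted intensity \eqref{eq_intensitychangeNB} and severity \eqref{eq_severitychangeNB}. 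After this tilt, the remaining financial expectation involves $e^{\sigma_S(1-\nu)\tilde{W}^2_s - \nu \int_0^s r_u\,\mathrm{d}u}$, which I would tackle by a second, Girsanov-type change to $\bar{\mathbb{Q}}_F$ via Lemma 2. This removes the $\tilde{W}^2$ exponential at the cost of shifting $\tilde{W}^1$ by a constant drift, and by Theorem \ref{prop_DSR} the Longstaff form of $r$ is preserved with updated parameters $(\theta_r^\ast,m_r^\ast)$. A final It\^o rescaling to $\tilde r_t := \nu r_t$ produces another Longstaff process with parameters $(\theta_r^\circ,\sigma_r^\circ)$, so the remaining expectation collapses to the zero-coupon bond price \eqref{ZCBprice}.

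Finally I would assemble the pieces. Integrating against the density $\mathbb{P}^{(\nu)}_\tau(\mathrm{d}s)$ of $\tau$ under the tilted measure yields the representation \eqref{eqn_finalprice2}, and unpacking that density as an infinite sum over $n=N_s$ with $F_X^{(\nu)n\ast}(D)$ gives the explicit $A(s)B(s)$ form. Multiplying by $\zeta$ gives $I_2^E$. Summing the three contributions and factoring $Z$ produces $V_0 = Z(I_1^E + I_2^E + I_3^E)$. The most delicate bookkeeping is keeping track of the three parameter transitions $(\theta_r,\sigma_r) \mapsto (\theta_r^\ast,\sigma_r) \mapsto (\theta_r^\circ,\sigma_r^\circ)$, so I would do that explicitly once and quote it throughout; everything else is a routine assembly of results already proved in the excerpt.
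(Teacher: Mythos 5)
Your proposal is correct and follows essentially the same route as the paper: invoke the Fact, factorise via the product-measure structure so the trigger indicator becomes $\mathbb{P}(L_{t_i}<D)$, evaluate the coupons by peeling off the first one and using the $t_i$-forward measure with the telescoping identity \eqref{eq:27}, and handle the conversion term by the exponential tilt \eqref{good}, the Girsanov change to $\bar{\mathbb{Q}}_F$, preservation of the Longstaff model (Theorem \ref{prop_DSR}), and the rescaling $\tilde r_t=\nu r_t$ to reduce the remaining expectation to the bond-price formula \eqref{ZCBprice}. The paper's ``proof'' of Theorem \ref{theorem_3} is precisely the assembly of Sections 4.1--4.3, which is what you reconstruct.
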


\noindent {\bf Case 2.} In the final part of this section, we consider the interesting case pertaining to when $K_P$ is a constant\footnote{It is not possible to use Theorem \ref{theorem_3} to deduce the result, since it is impossible to analytically evaluate $P$ in Equation (\ref{eqn_finalprice2}).} of $K$. Therefore, to analyze $\mathbb{E}^\mathbb{Q}[I_3]$ we will only concern ourselves with simplifying
\begin{align}
\mathbb{E}^\mathbb{Q} [S_{\tau }\mathbb{I}_{\{\tau \leq T\}}B(0,\tau) ] \nonumber
\end{align}
\noindent which can be expanded as
\begin{align}
&\mathbb{E}^\mathbb{Q} \bigg[S_0 \exp \left(  -\alpha L_\tau + \alpha \kappa \int_0^\tau \lambda_u \mathrm{d}u + \int_0^\tau r_u \mathrm{d}u - \frac{\sigma_S^2}{2}\tau + \sigma_S \tilde{W}_\tau^2 \right)  \mathbb{I}_{\{\tau \leq T\}} \exp \left(- \int_0^\tau r_r \mathrm{d}u \right) \bigg] \nonumber \\
&= \mathbb{E}^\mathbb{Q} \bigg[S_0 \mathbb{I}_{\{\tau \leq T\}} \exp \left( -\alpha L_\tau + \int_0^\tau \left( \alpha \kappa \lambda_u - \frac{\sigma_S^2}{2} \right)\mathrm{d}u +  \sigma_S \tilde{W}_\tau^2\right) \bigg]. \label{eqn:four}
\end{align}

\noindent Since $L_t$ is a time-inhomogeneous compound Poisson process, then the measure change considered in Section \ref{sec:conversion_feature} part (i) can be applied with $\nu=0$. We denote $\eta (t)=\eta^{(0)}(t)$ for the density process \eqref{measure_changeNB}.
Note that $\varphi (\alpha, t) = \varphi(\alpha (1-\nu), t)|_{\nu = 0}$. \\

\noindent Upon consideration of this measure change, notice that Equation (\ref{eqn:four}) can be rewritten as
\begin{align}
&\mathbb{E}^\mathbb{Q} \bigg[S_0 \mathbb{I}_{\{\tau \leq T\}} \exp \left( -\alpha L_\tau + \varphi (\alpha, \tau) - \varphi (\alpha, \tau) + \int_0^\tau \left( \alpha \kappa \lambda_u - \frac{\sigma_S^2}{2} \right)\mathrm{d}u +  \sigma_S \tilde{W}_\tau^2\right) \bigg] \nonumber \\
&= \mathbb{E}^{\mathbb{Q}_F\otimes\mathbb{P}^{(0)}} \bigg[S_0 \mathbb{I}_{\{\tau \leq T\}} \exp \left( \int_0^\tau \left( \alpha \kappa \lambda_u - \frac{\sigma_S^2}{2} \right)\mathrm{d}u -  \varphi (\alpha, \tau)  +  \sigma_S \tilde{W}_\tau^2\right) \bigg] \nonumber \\
&= S_0\int_0^T \; \mathbb{P}^{(0)}_\tau (\tau \in \mathrm{d}s) \exp \left( \int_0^s \left( \alpha \kappa \lambda_u - \frac{\sigma_S^2}{2} \right)\mathrm{d}u -  \varphi (\alpha, s) \right)  \mathbb{E}^{\mathbb{Q}_F} [ e^{\sigma_S \tilde{W}_s^2}  ], \label{eq_next}
\end{align}
\noindent where the last line follows because $\tilde{W}_s^2$ is independent of the measure change for all $s \in [0,T]$, and $\mathbb{P}^{(0)}_\tau$ is the density function of $\tau$ under $\mathbb{P}^{(0)}$.  By considering the moment generating function of $\tilde{W}_t^2$, we obtain that Equation (\ref{eq_next}) is equal to
\begin{align}
&S_0\int_0^T \; \mathbb{P}^{(0)}_\tau (\tau \in \mathrm{d}s) \exp \left( \alpha \kappa \int_0^s \lambda_u\mathrm{d}u -  \varphi (\alpha,s) \right) \nonumber \\
&= S_0\int_0^T \; \mathbb{P}^{(0)}_\tau (\tau \in \mathrm{d}s) \nonumber \\
&= S_0 \mathbb{P}^{(0)}_\tau (\tau \leq T) \label{highly_intuitive}\\
&= S_0 \left[1 - \exp\left(-\int_0^T \lambda_u^{(0)}\mathrm{d}u\right)\sum_{n=0}^{\infty}\frac{\left(\int_0^T \lambda_u^{(0)}\mathrm{d}u \right)^n}{n!} F_X^{(0)n\ast} (D)  \right], \label{eqn_finalprice}
\end{align}
\noindent where $F_X^{(0)n\ast} (D)$ denotes the $n-$fold convolution of $F_X^{(0)}$ with itself, evaluated at the argument $D$. Notice that the simplification of the expression, $\mathbb{E}^\mathbb{Q} [S_{\tau }\mathbb{I}_{\{\tau \leq T\}}B(0,\tau) ]$, as given by Equation (\ref{highly_intuitive}) is highly intuitive. The discounted value of the share price at the time of trigger is simply the existing share price, $S_0$, multiplied by the probability of trigger under a probability measure that explicitly adjusts for the aggregate loss process.

We can now give the time-zero risk-neutral price of an IL CocoCat with constant conversion price in analytical form. We present this in Theorem \ref{theorem_4}.

\begin{theorem} \label{theorem_4}
In an analytical form, the time-zero risk-neutral price, $V_0$, of an IL CocoCat with constant conversion price equal to $K$, and assuming the dynamics given in Equations (\ref{eq:9}) to (\ref{eq:13}), is given by
\begin{align}
V_0 &= Z \left(I_1^E + I_2^E+ I_3^E\right).
\end{align}
where $I_1^E$ is identified in Equations \eqref{eq_term1} and \eqref{eq_term2},
$I_2^E$ in Equation \eqref{eqn_finalprice} (multiplied by $\nicefrac{\zeta}{k}$) and
$I_3^E$ in Equation \eqref{eq_term3}.
\end{theorem}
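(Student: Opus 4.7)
The plan is to start from the Fact, which expresses $V_0 = \mathbb{E}^{\mathbb{Q}}[I_1 + I_2 + I_3]$, invoke linearity of expectation, and then assemble the three pieces from what has already been done in the excerpt. Setting $K_P \equiv K$ (a constant), the term $I_2$ simplifies to $\frac{\zeta Z}{K} S_\tau \mathbb{I}_{\{\tau \leq T\}} B(0,\tau)$, so the constant $\zeta Z/K$ can be factored out of the expectation. Thus the only genuine computation is that of $\mathbb{E}^{\mathbb{Q}}[S_\tau \mathbb{I}_{\{\tau \leq T\}} B(0,\tau)]$, everything else being immediate citation.

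For $\mathbb{E}^{\mathbb{Q}}[I_1]$ and $\mathbb{E}^{\mathbb{Q}}[I_3]$, I would simply appeal to Section \ref{coup_pmts} and Section 4.2 respectively; neither computation depends on the conversion price $K_P$, so they transfer verbatim and produce $Z\, I_1^E$ and $Z\, I_3^E$ in the forms given by Equations \eqref{eq_term1}, \eqref{eq_term2} and \eqref{eq_term3}. The only subtlety worth flagging is the use of Proposition \ref{prop_q} (or Corollary \ref{prop_smaller}) to separate the interest-rate expectation from the catastrophe-related indicator $\mathbb{I}_{\{\tau > t_i\}}$, together with the forward-measure change for the LIBOR reset term $\mathbb{E}^{\mathbb{Q}}[R_{t_{i-1}} B(0,t_i)]$; both steps have already been carried out in Section \ref{coup_pmts}.

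For the conversion term, I would follow the derivation given just before the theorem statement under ``Case 2''. Concretely: expand $S_\tau = S_\tau^{\mathcal{C}} S_\tau^{\mathcal{F}}$ using \eqref{eq:15}--\eqref{eq:11} to absorb the discount factor $B(0,\tau)$ into the $S_\tau^{\mathcal{F}}$ drift; then apply the exponential change of measure from $\mathbb{P}_C$ to $\mathbb{P}^{(0)}$ associated with the $\mathbb{P}_C$-martingale $\eta^{(0)}(t) = \exp(-\alpha L_t + \varphi(\alpha,t))$, so that the factor $\exp(-\alpha L_\tau)$ is replaced by $\exp(-\varphi(\alpha,\tau))$ while $L_t$ becomes, under $\mathbb{P}^{(0)}$, a time-inhomogeneous compound Poisson process with modified intensity and severity \eqref{eq_intensitychangeNB}--\eqref{eq_severitychangeNB} (at $\nu=0$). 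Independence of $\tilde W^2$ from the catastrophe dynamics (Assumption \ref{ass_1}, reflected via the product measure) then gives $\mathbb{E}^{\mathbb{Q}_F}[e^{\sigma_S \tilde W_s^2}] = e^{\sigma_S^2 s/2}$, and the critical cancellation occurs: the choice of $\kappa$ in \eqref{kappa} together with $\varphi(\alpha,s) = [1-(\mathcal{L}f_X)(\alpha)]\int_0^s \lambda_u\,\mathrm{d}u = \alpha\kappa \int_0^s \lambda_u\,\mathrm{d}u$ makes every deterministic exponential factor equal to one, leaving the highly intuitive $S_0\, \mathbb{P}^{(0)}(\tau \leq T)$. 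Finally I would expand this probability using the compound Poisson representation under $\mathbb{P}^{(0)}$ to obtain the series in \eqref{eqn_finalprice}.

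The main obstacle, if any, is the cancellation step: verifying that the three deterministic exponentials produced by the Brownian contribution, the compensator in $S_\tau^{\mathcal{C}}$, and the Radon--Nikodym density $\eta^{(0)}$ combine to identically one requires exactly the calibration of $\kappa$ given in Theorem \ref{prop:1}, so the argument relies essentially on the consistency between the risk-neutral construction and the exponential change of measure. Everything else is routine bookkeeping and direct invocation of the Case 2 derivation.
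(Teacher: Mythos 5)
Your proposal is correct and follows essentially the same route as the paper: linearity of expectation applied to the Fact, direct citation of Sections \ref{coup_pmts} and 4.2 for the coupon and redemption terms (which do not depend on $K_P$), and for the conversion term the Case 2 computation — factoring out the constant $\nicefrac{\zeta Z}{K}$, applying the exponential change of measure $\eta^{(0)}$ at $\nu=0$, taking the Gaussian moment generating function, and observing that the calibration $\alpha\kappa\int_0^s\lambda_u\,\mathrm{d}u=\varphi(\alpha,s)$ collapses the deterministic exponential to one, leaving $S_0\,\mathbb{P}^{(0)}(\tau\leq T)$ expanded as the convolution series in Equation \eqref{eqn_finalprice}. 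You correctly identify the $\kappa$-cancellation as the one nontrivial step; nothing is missing.
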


\subsection{Remarks on Theorems \ref{theorem_3} and \ref{theorem_4}}
\noindent We now provide a brief motivation behind the use of Equations Theorems \ref{theorem_3} and \ref{theorem_4} for calculating IL CocoCat prices. Compared to the use of Monte Carlo simulation for approximating Equation (\ref{eq:BIG}) directly, the use of our Theorems is more accurate. Firstly, if the convolutions of the losses under the measure $\mathbb{P}^{(\nu)}$ and $\mathbb{P}^{(0)}$ are known in closed form, we can approximate the price of the IL CocoCat (and in particular the value of the conversion feature) directly without the need for Monte Carlo simulation. Hence, without any simulation we will be able to compute Equation (\ref{eqn_finalprice2}) by truncating the summation at a computationally-suitable upper limit, and by discretising the integral. Moreover, without any simulation we will be able to compute Equation (\ref{eqn_finalprice}) by truncating the summation at a computationally-suitable upper limit). Such a case can indeed arise if we assume that, under the probability measure $\mathbb{P}$, the losses follow a gamma distribution\footnote{The use of Equations (\ref{eqn:highlyintuitive2}) and (\ref{highly_intuitive}) requires first an exponential tilting of the loss random variable, and thereafter an $n$-fold convolution. Now, an exponentially-tilted gamma distributed random variable is again gamma distributed, and so too is its convolution. Moreover, the Laplace transform of a gamma random variable exists in closed-form.}.

Secondly, for more heavy-tailed distributional assumptions of the loss random variables, it is necessary to use Monte Carlo simulation to numerically evaluate the integrals in Equations (\ref{eqn:highlyintuitive2}) and (\ref{highly_intuitive}). That is, one has to simulate the empirical distribution of the stopping times $\tau$'s. This is still more accurate, and in fact faster, than evaluating Equation (\ref{eq:BIG}) directly by Monte Carlo since in the former case only one process, $L_t$, has to be simulated compared to the latter, where it is necessary to simulate three processes, namely $L_t$, $r_t$ and $S_t$.

\section{Empirical illustration}
\label{section_empirics}

\noindent In this section we present numerical experiments as a first foray into the price behaviour of the IL CocoCat. Gaining an understanding of the IL CocoCat price behaviour, for varying parameters, is crucial in the design stage of such an instrument. Moreover, it could be instrumental in preparing illustrations for pitch books which ILS structurers could use to market new ILS issue types to potential issuers.

In our simulations we pay particular attention to the conversion feature as this will be the item of most interest when issuing the instrument. Also, since exact closed-form solutions are not available for the underlying loss severity distributions we choose, we use Monte Carlo simulation based on importance sampling, for the loss process. $100,000$ simulations are used in each respective instance after applying the importance sampling.

In order to obtain numerical values for the IL CocoCat prices, we need to specify a base set of parameters. Such parameters are specified in Table \ref{Table_params}.

\begin{table}[H]
  \center
      \caption{Selected parameter values for the IL CocoCat price numerical illustration.}
  \label{Table_params}
    \begin{tabular}{l|l|c}
    \toprule
          \multicolumn{2}{l}{\textbf{\small{Compound Poisson loss process parameters}}} & \textbf{\small{Values}} \\
    \midrule
		\small{$\lambda_t \qquad$} & \small{Intensity of catastrophe loss process} & \small{Equation (\ref{eq:6})}\\
		\small{$c_b$} & \small{Shape parameter of Burr severity distribution} & \small{$1.57$}\\
		\small{$k_b$} & \small{Shape parameter of Burr severity distribution} & \small{$0.7$}\\
		\small{$\zeta_b$} & \small{Scale parameter of Burr severity distribution} & \small{$9.53 \times 10^7$}\\
    \midrule
     \multicolumn{3}{l}{\textbf{\small{Interest-rate process parameters under $\mathbb{Q}$}}}  \\
    \midrule
    \small{$r_0$} & \small{Initial instantaneous interest rate} & \small{$0.02$}\\
    \small{$\theta_r$} & \small{Model parameter} & \small{$0.02$}\\
    \small{$\sigma_r$} & \small{Instantaneous volatility} & \small{$0.1$}\\
    \small{$m_r$} & \small{Model parameter} & \small{$1.125 \times 10^{-3}$}\\
    \midrule
     \multicolumn{3}{l}{\textbf{\small{Issuer's share price process parameters under $\mathbb{P}$}}}  \\
        \midrule
    \small{$S_0$} & \small{Initial share price} & \small{$10$}\\
    \small{$\rho$} & \small{Correlation coefficient of share and interest-rate processes} & \small{$-0.5$}\\
    \small{$\alpha$} & \small{Effect of losses on log share price} & \small{$5.81 \times 10^{-11}$}\\
        \midrule
     \multicolumn{3}{l}{\textbf{\small{IL CocoCat parameters}}}  \\
        \midrule
        \small{$K_P$} & \small{Constant conversion price} & \small{Varies}\\
        \small{$\nu$} & \small{Power parameter for conversion price} & \small{Varies}\\
        \small{$\zeta$} & \small{Conversion fraction} & \small{$0.2$}\\
        \small{$\Delta$} & \small{Tenor: time between coupon payments} & \small{$0.25$}\\
        \small{$c$} & \small{Constant spread (CAT risk premium)} & \small{$0.1$}\\
        \small{$Z$} & \small{Nominal amount} & \small{$1$}\\
        \small{$T$} & \small{Term} & \small{Varies}\\
        \small{$D$} & \small{Threshold level} & \small{Varies}\\
    \bottomrule
    \end{tabular}
\end{table}

Firstly, it is necessary to select parameters for the compound Poisson process underlying the IL CocoCat. We do so by using the following intensity, which was fitted to PCS data from 1985 to 2011 by \citet{giuricich}:
\begin{align}
\lambda(t) = 24.93 + 0.03t + 5.61 \sin\{2\pi (t+7.07)\} + 0.30 \exp \left\{ \cos \left( \frac{2 \pi t}{4.76} \right)\right\}. \label{eq:6}
\end{align}

However, we also point out the following interesting, potential application of the work by \citet{braun}. Suppose that we only modelled an index consisting of losses from earthquakes and/or hurricanes\footnote{With a more detailed breakdown of PCS data, it is indeed possible to extract this information.}. If we did not assume that for the catastrophe-risk variables the real-world and risk-neutral probability measures coincided\footnote{That is, if we did not use Assumption \ref{ass_2} but made the slightly weaker assumption that a random variable retains its distributional characteristics when moving from the real-world to the risk-neutral probability measure.}, and rather used a risk-neutral measure for the catastrophe-risk variables, then we could use Braun's implied intensities backed out from catastrophe swap transactions. In doing so, we could ensure that the CocoCat is consistently priced with other instruments in the catastrophe risk markets.

Moreover, we want our process to be based on a heavy-tailed underlying severity distribution, so that low-frequency and high-severity disasters are indeed accounted for. Evidence for heavy-tailed underlying distributional properties in catastrophe-related economic and insured losses has been found by \citet{levi1991statistical, burnecki2000property, milidonis2008tax, braun, mm} and \citet{giuricich}. In all endeavours to assess which heavy-tailed distribution fits such data, it appears from a review of the literature on catastrophe-related ILS that the Burr distribution consistently comes out best. In this section, we indeed use the Burr type XII distribution with probability density function specified by
\begin{align*}
f(x, \zeta_b, c_b, k_b) = \frac{\frac{k_bc_b}{\zeta_b}\left( \frac{x}{\zeta_b} \right)^{c_b-1}}{\left( 1 + \left( \frac{x}{\zeta_b} \right)^{c_b} \right)^{k_b+1}} \quad \zeta_b, c_b, k_b > 0 \quad \text{and} \quad x > 0,
\end{align*}
\noindent where $c_b$ and $k_b$ are the shape parameters and $\zeta_b$ is the scale parameter. For illustrative purposes, we consider the Burr distribution  fitted  by \citet{giuricich} and take $c_b=1.57$, $k_b=0.7$ and $\zeta_b = 9.53 \times 10^7$. In passing, note that it would be possible to add risk margins to our various parameters of the loss process derived from real-world data. This is in line with some of the traditional actuarial approaches (see for example the recent work by \citet{chang2017integrated} for a brief discussion on this point, and also \citet{braun} for a brief mention), but as always, such a choice is subjective and specific to the modeller. Finally, it must be noted that severity distributions (and also intensity functions) fitted to both real-world and synthetically-simulated data from sophisticated natural catastrophe models can also be employed in our context. In discussion with industry practitioners it would appear that this is the approach used to price natural catastrophe-related instruments in practice.

Secondly, we consider the interest-rate parameters. We reiterate that simulation of the interest-rate process is not necessary given the simplification of the IL CocoCat pricing formulae in Theorems \ref{theorem_3} and \ref{theorem_4}. We let the risk-neutral parameters, $\theta_r$ and $\sigma_r$, equal $0.2$ and $0.03$ respectively. Moreover, we specify the correlation coefficient, $\rho$, to be $-0.5$. All of these parameters are in line with \citet{lo2013valuation} and \citet{wang2016catastrophe}, as well as previous literature concerning the modelling of interest-rates in the context of insurance (see, for example, \citet{duan2002maximum} and \citet{chang2011valuation}).

Thirdly, we consider the share price process' parameters. We set the initial share price, $S_0$, to be $10$. The effect of the catastrophic losses on the logarithm of the share price, $\alpha$, is found in a similar fashion to \citet{jaimungal2006catastrophe}. We let $\alpha$ represent the percentage drop in the log share price per unit of expected loss, that is
\begin{align*}
\alpha &= \frac{\delta}{E^{\mathbb{P}}[X_k]},
\end{align*}
\noindent and we consider the case, as in \citet{jaimungal2006catastrophe}, where $\delta = 0.02$. In consequence, $\alpha = 5.81 \times 10^{-11}$. Despite $\alpha$ being very small in size its effect is still prevalent since it is multiplied by $L_t$ in Equation (\ref{eq:15}), which is relatively much larger than $\alpha$.

Finally, we give thought to the various parameters for the IL CocoCat itself. We set the contractually-specified conversion fraction, $\zeta$, equal to $20$\%, which is in line with previous literature on CocoCats (see \citet{georgiopoulos2016valuation}), and we let the tenor be 3 months (in line with \citet{jarrow2010simple}). For illustrative purposes we let the IL CocoCat spread be $10$\%, and set the nominal value of the bond, $Z$, equal to 1. Also, we will let certain parameters vary, namely the term ($T$), the threshold level ($D$) and $\nu$, in order to assess how the IL CocoCat time-zero price changes with changes in these parameters.

We now comment on how one can estimate numerical values for Equations (\ref{eqn:highlyintuitive2}) and (\ref{highly_intuitive}) via Monte Carlo simulation under the measures $\mathbb{P}^{(\nu)}$ and $\mathbb{P}^{(0)}$ respectively. In order to evaluate these integrals, it is necessary to develop empirical distributions for the stopping time $\tau$ under the respective measures by simulating the paths for the process $L_t$ by considering the value of $L_t$ for each $t<T$. To simulate paths for $L_t$ under the measures $\mathbb{P}^{(\nu)}$ and $\mathbb{P}^{(0)}$, it is necessary to know the intensity and severity distributions under these measures and the necessary links are provided by Equations (\ref{eq_intensitychangeNB}) and (\ref{eq_severitychangeNB}) respectively. Equation (\ref{eq_intensitychangeNB}) is easy to find given that we numerically know the Laplace transform of the severity random variable. However, coping with Equation (\ref{eq_severitychangeNB}) is not immediately obvious for Burr-distributed severity random variables. This is primarily because the exponentially-tilted Burr distribution does not have a density that is known and computable. So, to simulate losses from Equation (\ref{eq_severitychangeNB}), we employed the acceptance-rejection algorithm \cite{von195113}. We point out that such a simulation technique can be used in any setting where the measure change is specified by Equation (\ref{eq_severitychangeNB}). Note that such a simulation technique\footnote{Note that exponential tilting is a case in point.} is common in the sphere of simulating random variables, under other probability measures, from those random variables under a given, complex measure change  \cite{asmussen2007stochastic}. We proceed as follows: if $f(x, \zeta_b, c_b, k_b)$ is the pdf of the Burr-distributed losses under $\mathbb{Q}$ (which is invertible and known), then we can generate from the density $f^{(\nu)}(x, \zeta_b, c_b, k_b)$ by generating first from $f(x, \zeta_b, c_b, k_b)$ directly\footnote{To simulate efficiently from the heavy-tailed density $f(x, \zeta_b, c_b, k_b)$, we use importance sampling (see \citet{giuricich}).}, and then applying the acceptance-rejection algorithm with the constant $c_R \coloneqq [(\mathcal{L}f_X)(\alpha(1-\nu))]^{-1}$.  Under the measure $\mathbb{P^{(\nu)}}$, the simulation algorithm is summarised in the following pseudo-code. The algorithm for $\mathbb{P}^{(0)}$ is analogous. \\

\hrule
\vspace{2mm}
\noindent \textbf{Key steps of algorithm to generate $L_t$}
\vspace{2mm}
\hrule
\begin{enumerate}
\item[\textbf{1.}] For the interval $t_{i-1}$ to $t_i$, generate a Poisson realisation, $n_i$, with intensity $\int_{t_{i-1}}^{t_i} \lambda_u \mathrm{d}u$.
\item[\textbf{2.}] Generate $n_i$ realisations from $f^{(\nu)}(x, \zeta_b, c_b, k_b)$ by the acceptance-rejection algorithm:
\begin{enumerate}
\item[\textbf{(i)}] Generate $U$ from $\mathcal{U}(0,1)$ and independent $X$ with density $f(x, \zeta_b, c_b, k_b)$.
\item[\textbf{(ii)}] If $U < \frac{f^{(\nu)}(X, \zeta_b, c_b, k_b)}{c_Rf(X, \zeta_b, c_b, k_b)}$ then accept $X$ else return to step (i).
\item[\textbf{(ii)}] Continue until $n_i$ realisations from $f^{(\nu)}(x, \zeta_b, c_b, k_b)$ are drawn.
\end{enumerate}
\item[\textbf{3.}] Continue until time $t_N$, consecutively adding the accepted realisations.
\end{enumerate}
\hrule
\vspace{5mm}

Before presenting the numerical results, we now give a brief overview of the remainder of Section \ref{section_empirics}. The aspect of the IL CocoCat which will matter most will be the conversion feature (i.e. $I_2$ from Theorem \ref{theorem_3}), so we will focus on the analysis of different types thereof. Two broadly different conversion prices are considered for the conversion feature and are each evaluated by Monte Carlo simulation. Probabilities relating to the loss process in Equations (\ref{eq_term1}), (\ref{eq_term2}) and (\ref{eq_term3}) (i.e. $\mathbb{P}(L_{t_i} < D) = \mathbb{P}(\tau >= t_i)$)  are also evaluated by simulating the distribution of the stopping times via Monte Carlo. However, the remaining terms relating to the interest-rate process in Equations (\ref{eq_term1}), (\ref{eq_term2}) and (\ref{eq_term3}) as well as in Equation (\ref{eqn:highlyintuitive2}) are evaluated via the closed-form solutions available. So, Section \ref{sec:p1} considers the case when the IL CocoCat has a constant conversion price of $K_P$, while Section \ref{sec:p2} looks at the case when the conversion price is a function of the share price, that is $K_P = S_\tau^\nu$ for $\nu \in [0,1)$. The latter includes the special case when $K_P = S_\tau$, which is of interest since the conversion price is set to the share price at time-of-conversion. Finally, Section \ref{sec:p3} compares the price behaviour of the IL CocoCat across three cases: when $K_P$ is a constant, $K_P = S_\tau$ and $K_P = S_\tau^{0.5}$.

\subsection{Case 1: $K_P$ is a constant}
\label{sec:p1}
\noindent We begin by studying the behaviour of the IL CocoCat price in the context of a changing constant conversion price ($K_P$) and threshold level ($D$). This is illustrated in Figure \ref{fig_KP} panel (a). Note that, by the design of Equations (\ref{eq:9}) to (\ref{eq:11}), the interest-rate process does not affect the value of this IL CocoCat's conversion feature, $(\zeta /K_P)\mathbb{E}^\mathbb{Q} [ S_{\tau}Z\mathbb{I}_{\{\tau \leq T\}}B(0,\tau)]$, at all.

\begin{figure}
\begin{center}
\includegraphics[width=16.5cm, height=7cm]{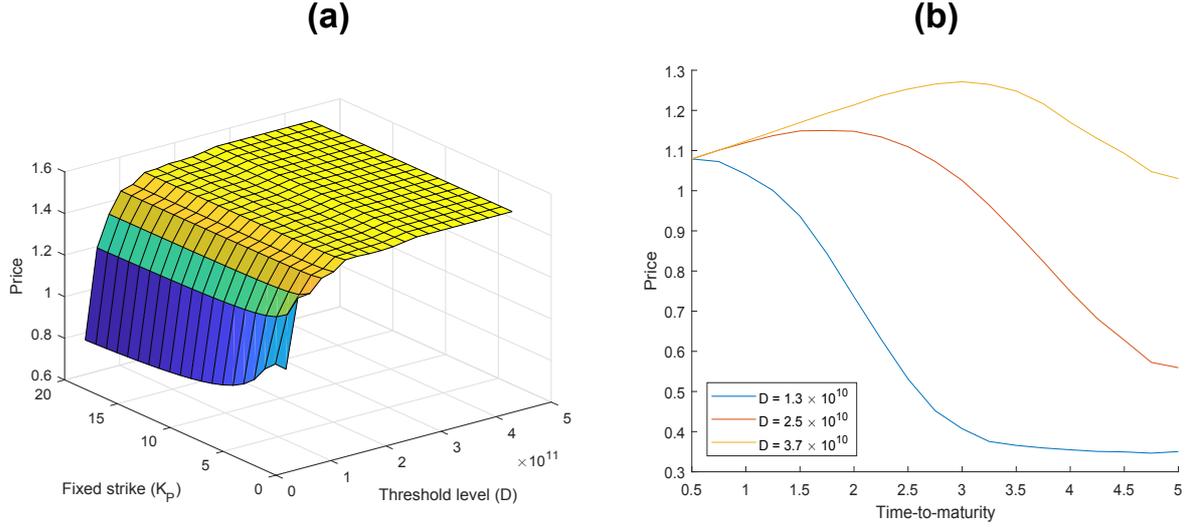}
\caption{(a) Price of IL CocoCat as a function of conversion price ($K_P$) and threshold level ($D$), calculated using $100,000$  Monte Carlo simulations of the loss process, $L_t$. (b) Price of IL CocoCat as a function of term ($T$) for three different threshold levels, calculated using $100,000$  Monte Carlo simulations of the loss process, $L_t$, taking $K_P = 8$.} \label{fig_KP}
\end{center}
\end{figure}

From Figure \ref{fig_KP} panel (a), it is clear that for all constant strike prices, the IL CocoCat prices level out around a value of approximately $1.5$ per unit nominal as the threshold level increases. Intuitively this makes sense as chances of trigger at such high threshold levels become smaller and smaller, making the conversion feature less valuable and the coupon and redemption payments more valuable. In fact, for these high threshold levels the IL CocoCat appears to behave like a corporate bond (ignoring credit risk). However, for lower threshold levels the conversion feature which comprises a partial write down of the principal invested, is more valuable compared to the redemption (or principal) amount and so IL CocoCat prices are lower.

It is also interesting to study the IL CocoCat price behaviour over different terms, the numerical results of which are shown in Figure \ref{fig_KP} panel (b). It shows that the longer the term, the lower the price of the IL CocoCat. This is because for longer terms, there is a greater probability of trigger and in consequence there is a higher risk of an investor losing $1-\zeta$ per unit nominal. But as expected, this decline in IL CocoCat price is slower for those with higher threshold levels.

\subsection{Case 2: $K_P = S_\tau^\nu$}
\label{sec:p2}
\noindent We now study the case when the conversion price is a function of the share price at time-of-conversion, in the context of a changing value of $\nu$ and threshold level $D$. Figure \ref{fig_KP_nu} panel (a) shows this.  But also, note that the changing interest-rate does have an impact on the price of an IL CocoCat structured like this, so we endeavour to study this effect. The results of this investigation are shown in Figure \ref{fig_KP_nu} panel (b), whereby we analyse the effect of altering the two interest-rate process parameters, $\sigma_r$ and $\theta_r$, on the IL CocoCat price for different terms.

\begin{figure}
\begin{center}
\includegraphics[width=16.5cm, height=7cm]{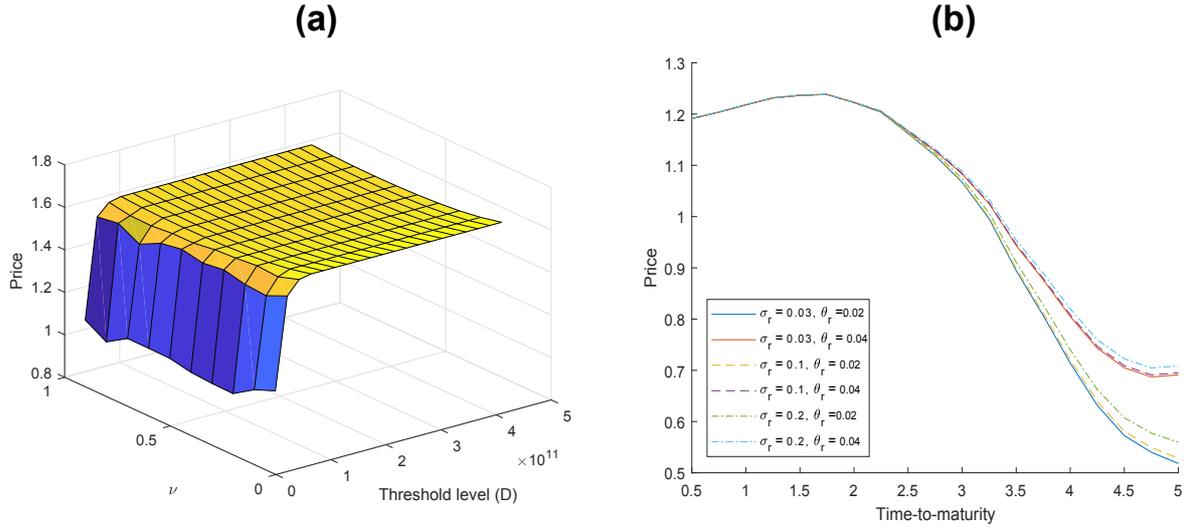}
\caption{(a) Price of IL CocoCat as a function of $\nu$ and threshold level ($D$), calculated using $100,000$  Monte Carlo simulations of the loss process, $L_t$. (b) Price of IL CocoCat as a function of term ($T$) for three different interest-rate volatilities and two different values for the parameter $\theta_r$, calculated using $100,000$  Monte Carlo simulations of the loss process, $L_t$, taking $\nu = 0.5$.} \label{fig_KP_nu}
\end{center}
\end{figure}

Consider Figure \ref{fig_KP_nu} panel (a). As a function of $\nu$ and threshold level, similar behaviour is noted to the case when the conversion feature is set at a constant level (as in Case I). Given that the conversion fraction $\zeta$ is relatively small at $0.2$, the results seen make intuitive sense as the conversion fraction has a small effect on the overall price of the IL CocoCat. The time-zero value of the coupons and principal amount  have a markedly greater impact on the price.

If $\zeta$ is increased in size, of which it can be, then the IL CocoCat price behaviour is observed to change and the time-zero value of the conversion price has a visible effect on increasing the price of the IL CocoCat. In fact, from our numerical simulations illustrated in Figure \ref{fig_KP_zeta}, it is clear that from values for $\zeta$ of $0.8$ and higher, higher time-zero prices for the IL CocoCat can arise. However, we do remark that such a high value for $\zeta$ is potentially unrealistic in practice, since not much of the IL CocoCat will be available for use by the issuer as capital relief. Moreover, on the basis of other numerical analyses which we performed it was interesting to note that the lower the conversion price (i.e. the smaller $\nu$ is), the greater the effect of increasing $\zeta$ on the time-zero IL CocoCat price.

\begin{figure}
\begin{center}
\includegraphics[width=9cm, height=6.5cm]{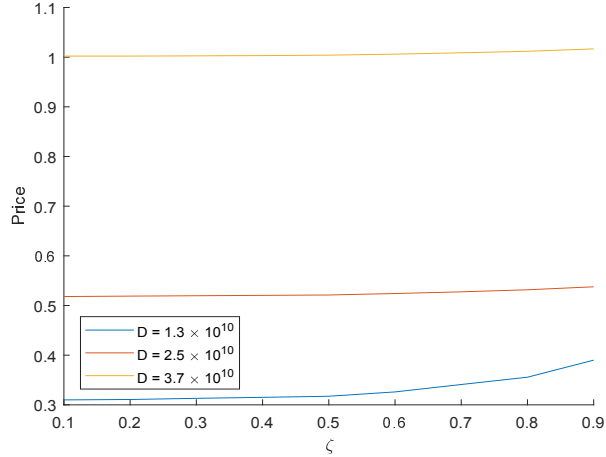}
\caption{(a) Price of IL CocoCat as a function of $\zeta$ and three different threshold levels ($D$), calculated using $100,000$  Monte Carlo simulations of the loss process, $L_t$, and taking $\nu = 0.5$.} \label{fig_KP_zeta}
\end{center}
\end{figure}

Furthermore, we studied the effect of changing interest-rates on the price of the IL CocoCat, by analysing the effects of changing instantaneous volatility and the parameter $\theta_r$ in Figure \ref{fig_KP_nu} panel (b). Only for medium to long terms, did changing interest-rates have a visible effect, with the parameter $\theta_r$ being the greatest influence leading to this change. Note that at longer terms, the different interest-rates had a large impact on the time-zero values of the coupons and principal amount, hence leading to the different values observed for the IL CocoCat time-zero price for these terms.

We close this section with the following remark. From our numerical simulations, it seems evident that the IL CocoCat price is most sensitive to the threshold level, $D$. This is particularly the case for lower threshold levels, but for higher threshold levels (i.e. those in excess of $10^{10}$), it does not make much of a difference. The latter case is true since the probabilities of trigger for these higher threshold levels are extremely small and do not differ to a noticeable extent for small changes in the threshold level.

\subsection{Comparison of three conversion prices}
\label{sec:p3}
\noindent Finally, we compare 5-year IL CocoCat time-zero prices for three different conversion prices: when $K_P$ is a constant at a value of $8$ (denoted by $V_0^1$), $K_P = S_\tau$ (denoted by $V_0^2$) and  $K_P = S_\tau^{0.5}$ (denoted by $V_0^3$). Table \ref{Table_comparison_price} lists the time-zero prices of IL CocoCat for the different conversion prices, $V_0^1$, $V_0^2$ and $V_0^3$, for different threshold levels, $D$. We focus our comparison mainly on lower threshold levels since the differences in price, per threshold level, were visible to at least three significant figures. As we remarked in Section \ref{sec:p2}, for higher threshold levels price differences across different threshold levels were exceptionally small.

\begin{table}[H]
  \center
      \caption{Time-zero IL CocoCat prices for different threshold levels, across the three conversion price structures, $V_0^1$, $V_0^2$ and $V_0^3$, for a term of 5 years and $S_0 = 10$.}
  \label{Table_comparison_price}
    \begin{tabular}{l|c|c|c}
    \toprule
          \small{\textbf{$D$}} & \small{\textbf{$V_0^1$}} & \small{\textbf{$V_0^2$}} & \small{\textbf{$V_0^3$}} \\
    \midrule
\small{\textbf{$1.3 \times 10^{10} \qquad$}} & \small{\textbf{$\quad 0.345 \quad $}} & \small{\textbf{$\quad 0.310 \quad$}} & \small{\textbf{$\quad 0.331 \quad$}} \\
\small{\textbf{$1.8 \times 10^{10}$}} & \small{\textbf{$0.423$}} & \small{\textbf{$0.379$}} & \small{\textbf{$0.391$}} \\
\small{\textbf{$2.3 \times 10^{10}$}} & \small{\textbf{$0.523$}} & \small{\textbf{$0.460$}} & \small{\textbf{$0.487$}} \\
\small{\textbf{$2.9 \times 10^{10}$}} & \small{\textbf{$0.691$}} & \small{\textbf{$0.653$}} & \small{\textbf{$0.676$}} \\
\small{\textbf{$3.4 \times 10^{10}$}} & \small{\textbf{$0.952$}} & \small{\textbf{$0.915$}} & \small{\textbf{$0.948$}} \\
\small{\textbf{$4.0 \times 10^{10}$}} & \small{\textbf{$1.263$}} & \small{\textbf{$1.271$}} & \small{\textbf{$1.292$}} \\
\small{\textbf{$9.5 \times 10^{10}$}} & \small{\textbf{$1.507$}} & \small{\textbf{$1.508$}} & \small{\textbf{$1.509$}} \\
\small{\textbf{$2.5 \times 10^{11}$}} & \small{\textbf{$1.579$}} & \small{\textbf{$1.579$}} & \small{\textbf{$1.579$}} \\
\small{\textbf{$3.5 \times 10^{11}$}} & \small{\textbf{$1.579$}} & \small{\textbf{$1.579$}} & \small{\textbf{$1.579$}} \\
    \bottomrule
    \end{tabular}
\end{table}

For each of $V_0^1$, $V_0^2$ and $V_0^3$, it is still the case that the price levels out for very high threshold levels. This effect was also detected in Section \ref{sec:p1} and \ref{sec:p2}: the conversion price (and hence feature) has little effect on the IL CocoCat time-zero price. However, Table \ref{Table_comparison_price} shows that -- within the constrains of our simulation exercise -- $V_0^2$ is always less than or equal to $V_0^3$, which is expected since the conversion option is more valuable due to the conversion price being lower than the share price at time-of-conversion. However, it is interesting to analyse the relationship between $V_0^1$ and $V_0^2$. At lower threshold levels, it is clear from Table \ref{Table_comparison_price} that $V_0^1$ exceeds $V_0^2$. For these low threshold levels, it is likely that conversion will occur more quickly (than conversion for higher threshold levels), and moreover such small losses and the effect of changing interest-rates over a shorter term will not greatly impact the initial share price of $10$. Hence, at conversion time an investor in the IL CocoCat with a constant conversion price of $8$ will, in all likelihood, purchase the share for a value less than its current value at conversion time, leading to a higher value for the conversion feature. However, for higher threshold levels (i.e. $4.0 \times 10^{10}$ and higher), $V_0^2$ exceeds $V_0^1$.  At such high threshold levels, the large insured losses now have a more marked impact on the share price, leading to a potentially depressed share price at conversion time, which may well be lower than the fixed conversion price of $8$. Additionally it takes more time for the loss process to reach the higher threshold level which allows for further uncertainty in the interest-rate movements. Overall, this interest-rate uncertainty coupled with the possibility of depressed share prices will, in all likelihood, lead to a lower value for the conversion feature.

We end by emphasising that it is up to the issuer to select an appropriate threshold level for the contractually-specified conversion price. The threshold level should not be set too high (which is the option which will be preferred by the investor) so conversion never happens. But also, the conversion price should not be set too low, so that the probability of trigger is high (despite the lower IL CocoCat price), which is a situation that may not be favoured by the investor.

\section{Conclusions}
\label{section_conclusions}
\noindent In this research, we formalised the design of a CocoCat, which is a type of ILS similar to the traditional Coco bonds issued by banks. We linked it to the already existing Coco bond literature, and also briefly reviewed an existing CocoCat issued by SwissRe in 2013. Moreover, we motivated why there is a potential need for insurers and reinsurers to issue CocoCats. Amongst other reasons, CocoCats allow their issuers to tap into a broader pool of financing offered by the capital markets.

Subsequently, we went on to price a special type of Cococat, namely an IL CocoCat linked to the PCS loss index. The Longstaff model was chosen for the interest-rate dynamics. 
We were able to find intuitive, analytical expressions for the price via, firstly, our assumption of independence between catastrophe-risk and financial markets risk variables. Secondly, an exponential change of measure allowed us to separately deal with financial markets as well as catastrophe-risk variables, and a Girsanov-like transformation allowed us to synthetically remove a Brownian motion from the expectation containing two correlated Brownian motions. Finally, we arrived at an analytical expression for the conversion feature (and hence the price) which only required simulation of the loss process in order to empirically estimate the distribution of the time-of-trigger of the equity conversion feature. We did note that Monte Carlo simulation could be used to estimate the value of the conversion feature of the IL CocoCat directly. However, our simplification to an analytical formula had more in its favour, since only one process had to be simulated, namely the insured loss index, while the interest-rate and stock price processes did not have to be simulated.

Lastly, we presented a numerical analysis into the prices of the IL CocoCat, and we believe such an analysis is crucial in the design stage of the instrument and for use in pitch books. The prices we obtained in our analyses conformed to intuition: the higher the threshold level of the IL CocoCat, the greater the price. But for exceptionally high threshold levels, the IL CocoCat's price did not vary much which may be a limitation of the instrument. We also found evidence suggesting that the IL CocoCat price behaviour was quite sensitive to three design aspects: the threshold level, interest-rates and the conversion fraction.

Since the CocoCat has attracted little scholarly attention to date, further research into it is clearly called for. It would be interesting to look into other types of CocoCats, such as ones based on parametric triggers, and study their price behaviour. 
It would also be interesting to consider other design features such as converting the recovery upon trigger into the equity of an entity other than the issuer. 

\section*{Acknowledgments}
\noindent We are extremely grateful for the input of and insights offered by Peter Ouwehand as well as David Taylor (The African Institute for Financial Markets and Risk Management, University of Cape Town), Melusi Mavuso (Department of Statistical Sciences, University of Cape Town) and Jonas Becker (Willis Towers Watson, Munich). But most of all, we thank Mike Ward (GIBS, University of the Witwatersrand) for spurring on the idea.
This work was partially supported by Polish National Science Centre Grant  No.~2016/23/B/HS4/00566.

\vskip 0.5cm
\noindent{\bf References}\\
\setlength\bibsep{1pt}
\bibliographystyle{elsarticle-harv}
\bibliography{BibliographySample}

\begin{thebibliography}{66}
\expandafter\ifx\csname natexlab\endcsname\relax\def\natexlab#1{#1}\fi
\expandafter\ifx\csname url\endcsname\relax
  \def\url#1{\texttt{#1}}\fi
\expandafter\ifx\csname urlprefix\endcsname\relax\def\urlprefix{URL }\fi

\bibitem[{Ahn et~al.(2002)Ahn, Dittmar, and Gallant}]{ahn2002quadratic}
Ahn, D.-H., Dittmar, R.~F., Gallant, A.~R., 2002. Quadratic term structure
  models: Theory and evidence. The Review of financial studies 15~(1),
  243--288.

\bibitem[{Asmussen and Glynn(2007)}]{asmussen2007stochastic}
Asmussen, S., Glynn, P.~W., 2007. Stochastic simulation: algorithms and
  analysis. Vol.~57. Springer, Berlin.

\bibitem[{Bakshi and Madan(2002)}]{bakshi2002average}
Bakshi, G., Madan, D., 2002. Average rate claims with emphasis on catastrophe
  loss options. Journal of Financial and Quantitative Analysis 37~(01),
  93--115.

\bibitem[{Baryshnikov et~al.(1998)Baryshnikov, Mayo, and Taylor}]{taylor}
Baryshnikov, Y., Mayo, A., Taylor, D., 1998. {Pricing of CAT bonds. Working
  paper. URL:
  \url{http://citeseerx.ist.psu.edu/viewdoc/download?doi=10.1.1.202.9296&rep=rep1&type=pdf}}.

\bibitem[{Beaglehole and Tenney(1992)}]{beaglehole1992corrections}
Beaglehole, D., Tenney, M., 1992. Corrections and additions to ‘a nonlinear
  equilibrium model of the term structure of interest rates’. Journal of
  Financial Economics 32~(3), 345--353.

\bibitem[{Besson et~al.(2009)Besson, Dacorogna, De~Martin, Kastenholz, and
  Moller}]{besson2009much}
Besson, J.-L., Dacorogna, M.~M., De~Martin, P., Kastenholz, M., Moller, M.,
  2009. How much capital does a reinsurance need? The Geneva Papers on Risk and
  Insurance. Issues and Practice 34~(2), 159--174.

\bibitem[{Bishop et~al.(2009)Bishop, Liu, Murray, and
  Solomonia}]{bishop2009contingent}
Bishop, D., Liu, E.~Z., Murray, P., Solomonia, T., 2009. {Contingent
  Convertible Bonds: Comments on the December 2009 Consultative
  Document-Strengthening the Resilience of the Banking Sector}. Archived
  Advisory Document — Comments to the Bank of International Settlements.

\bibitem[{Bj{\"o}rk(2009)}]{bjork2009arbitrage}
Bj{\"o}rk, T., 2009. {Arbitrage Theory in Continuous Time}. Oxford University
  Press, Oxford.

\bibitem[{Braun(2011)}]{braun}
Braun, A., 2011. Pricing catastrophe swaps: A contingent claims approach.
  Insurance: Mathematics and Economics 49~(3), 520--536.

\bibitem[{Braun({2016})}]{braun2}
Braun, A., {2016}. {Pricing in the primary market for CAT bonds: new empirical
  evidence}. The Journal of Risk and Insurance 83~(4), 811--847.

\bibitem[{Brigo and Mercurio(2007)}]{brigo2007interest}
Brigo, D., Mercurio, F., 2007. Interest rate models-theory and practice: with
  smile, inflation and credit. Springer, Berlin.

\bibitem[{Burnecki et~al.({(2000)})Burnecki, Kukla, and
  Weron}]{burnecki2000property}
Burnecki, K., Kukla, G., Weron, R., {(2000)}. Property insurance loss
  distributions. Physica A 287~(1), 269--278.

\bibitem[{Carayannopoulos and Perez(2015)}]{carayannopoulos2015diversification}
Carayannopoulos, P., Perez, M.~F., 2015. Diversification through catastrophe
  bonds: lessons from the subprime financial crisis. The Geneva Papers on Risk
  and Insurance-Issues and Practice 40~(1), 1--28.

\bibitem[{Chang et~al.(2011)Chang, Lin, and Yu}]{chang2011valuation}
Chang, C.-C., Lin, S.-K., Yu, M.-T., 2011. Valuation of catastrophe equity puts
  with markov-modulated poisson processes. Journal of Risk and Insurance
  78~(2), 447--473.

\bibitem[{Chang and Chang(2017)}]{chang2017integrated}
Chang, C.~W., Chang, J.~S., 2017. {An Integrated Approach to Pricing
  Catastrophe Reinsurance}. Risks 5~(3), 51.

\bibitem[{Cox et~al.(1985)Cox, Ingersoll~Jr, and Ross}]{cir}
Cox, J.~C., Ingersoll~Jr, J.~E., Ross, S.~A., 1985. A theory of the term
  structure of interest rates. Econometrica, 385--407.

\bibitem[{Cox et~al.(2004)Cox, Fairchild, and Pedersen}]{cox2004valuation}
Cox, S.~H., Fairchild, J.~R., Pedersen, H.~W., 2004. Valuation of structured
  risk management products. Insurance: Mathematics and Economics 34~(2),
  259--272.

\bibitem[{Cox and Pedersen({(2000)})}]{cp}
Cox, S.~H., Pedersen, H.~W., {(2000)}. Catastrophe risk bonds. North American
  Actuarial Journal 4~(4), 56--82.

\bibitem[{Cummins(2008)}]{cummins2008cat}
Cummins, J.~D., 2008. Cat bonds and other risk-linked securities: State of the
  market and recent developments. Risk Management and Insurance Review 11~(1),
  23--47.

\bibitem[{Cummins and Geman({1995})}]{cg}
Cummins, J.~D., Geman, H., {1995}. Pricing catastrophe insurance futures and
  call spreads: An arbitrage approach. J. Fixed Income 4~(4), 46--57.

\bibitem[{Cummins and Weiss(2009)}]{cummins2009convergence}
Cummins, J.~D., Weiss, M.~A., 2009. Convergence of insurance and financial
  markets: Hybrid and securitized risk-transfer solutions. Journal of Risk and
  Insurance 76~(3), 493--545.

\bibitem[{Dassios and Jang({2003})}]{dj}
Dassios, A., Jang, J., {2003}. {Pricing of catastrophe reinsurance and
  derivatives using the Cox process with shot noise intensity}. Financ. Stoch.
  7~(1), 73--95.

\bibitem[{De~Martino et~al.(2010)De~Martino, Libertucci, Marangoni, and
  Quagliariello}]{de2010countercyclical}
De~Martino, G., Libertucci, M., Marangoni, M., Quagliariello, M., 2010.
  Countercyclical contingent capital (ccc): possible use and ideal design.
  \textit{Working Paper, Bank of Italy: Banking and Finance Supervision Area.}

\bibitem[{Delbaen and Haezendonck(1989)}]{delbaen1989martingale}
Delbaen, F., Haezendonck, J., 1989. A martingale approach to premium
  calculation principles in an arbitrage free market. Insurance: Mathematics
  and Economics 8~(4), 269--277.

\bibitem[{Duan and Simonato(2002)}]{duan2002maximum}
Duan, J.-C., Simonato, J.-G., 2002. Maximum likelihood estimation of deposit
  insurance value with interest rate risk. Journal of Empirical Finance 9~(1),
  109--132.

\bibitem[{Durbin(2001)}]{durbin2001managing}
Durbin, D., 2001. Managing natural catastrophe risks: The structure and
  dynamics of reinsurance. The Geneva Papers on Risk and Insurance. Issues and
  Practice 26~(2), 297--309.

\bibitem[{Embrechts(2000)}]{embrechts2000actuarial}
Embrechts, P., 2000. Actuarial versus financial pricing of insurance. The
  Journal of Risk Finance 1~(4), 17--26.

\bibitem[{Embrechts and Meister(1997)}]{em}
Embrechts, P., Meister, S., 1997. {Pricing insurance derivatives, the case of
  CAT-futures}. In: Proceedings of the 1995 Bowles Symposium on Securitization
  of Risk, Georgia State University Atlanta, Society of Actuaries, Monograph
  M-FI97-1. pp. 15--26.

\bibitem[{Flannery(2016)}]{flannery2016stabilizing}
Flannery, M.~J., 2016. Stabilizing large financial institutions with contingent
  capital certificates. Quarterly Journal of Finance 6~(02), 1650006.

\bibitem[{Gatzert et~al.(2014)Gatzert, Pokutta, and
  Vogl}]{gatzert2014convergence}
Gatzert, N., Pokutta, S., Vogl, N., 2014. {Convergence of Capital and Insurance
  Markets: Consistent Pricing of Index-Linked Catastrophic Loss Instruments}.
  Tech. rep., Friendrich-Alexander University Erlangen-Nurnberg (FAU).

\bibitem[{Georgiopoulos(2016)}]{georgiopoulos2016valuation}
Georgiopoulos, N., 2016. The valuation of contingent convertible catastrophe
  debt under simple solvency and liquidity covenants. Journal of Risk
  (Forthcoming) 18~(6).

\bibitem[{Giuricich and Burnecki(2017)}]{giuricich}
Giuricich, M., Burnecki, K., 2017. {Modelling of left-truncated heavy-tailed
  data with application to catastrophe bond pricing}. Available at SSRN
  2973419.

\bibitem[{G{\"u}rtler et~al.(2016)G{\"u}rtler, Hibbeln, and
  Winkelvos}]{gurtler2016impact}
G{\"u}rtler, M., Hibbeln, M., Winkelvos, C., 2016. {The impact of the financial
  crisis and natural catastrophes on CAT bonds}. Journal of Risk and Insurance
  83~(3), 579--612.

\bibitem[{Hagendorff et~al.(2015)Hagendorff, Hagendorff, and
  Keasey}]{hagendorff2015impact}
Hagendorff, B., Hagendorff, J., Keasey, K., 2015. {The Impact of
  Mega-Catastrophes on Insurers: An Exposure-Based Analysis of the US
  Homeowners’ Insurance Market}. Risk Analysis 35~(1), 157--173.

\bibitem[{Harrison and Pliska(1981)}]{harrison1981martingales}
Harrison, J.~M., Pliska, S.~R., 1981. Martingales and stochastic integrals in
  the theory of continuous trading. Stochastic processes and their applications
  11~(3), 215--260.

\bibitem[{Haslip and Kaishev(2010)}]{haslip2010pricing}
Haslip, G.~G., Kaishev, V.~K., 2010. Pricing of reinsurance contracts in the
  presence of catastrophe bonds. Astin Bulletin 40~(01), 307--329.

\bibitem[{Hoyt and McCullough({(1999)})}]{hoyt}
Hoyt, R.~E., McCullough, K.~A., {(1999)}. Catastrophe insurance options: Are
  they zero-beta assets? Journal of Insurance Issues, 147--163.

\bibitem[{Hull and White(1990)}]{hull1990pricing}
Hull, J., White, A., 1990. Pricing interest-rate-derivative securities. The
  Review of Financial Studies 3~(4), 573--592.

\bibitem[{Hull and White(1993)}]{hull1993one}
Hull, J., White, A., 1993. One-factor interest-rate models and the valuation of
  interest-rate derivative securities. Journal of financial and quantitative
  analysis 28~(2), 235--254.

\bibitem[{Hull and White(2013)}]{hull2013libor}
Hull, J.~C., White, A., 2013. Libor vs. ois: The derivatives discounting
  dilemma. Working paper. Available at SSRN 2211800.

\bibitem[{Jaimungal and Wang(2006)}]{jaimungal2006catastrophe}
Jaimungal, S., Wang, T., 2006. Catastrophe options with stochastic interest
  rates and compound poisson losses. Insurance: Mathematics and Economics
  38~(3), 469--483.

\bibitem[{Jarrow(2010)}]{jarrow2010simple}
Jarrow, R.~A., 2010. {A simple robust model for CAT bond valuation}. Finance
  Research Letters 7~(2), 72--79.

\bibitem[{Karatzas and Shreve(2012)}]{karatzas2012brownian}
Karatzas, I., Shreve, S., 2012. Brownian motion and stochastic calculus. Vol.
  113. Springer, Berlin.

\bibitem[{Lee and Yu({(2002)})}]{ly}
Lee, J.~P., Yu, M.~T., {(2002)}. Pricing default-risky cat bonds with moral
  hazard and basis risk. Journal of Risk and Insurance, 25--44.

\bibitem[{Lee and Yu({(2007)})}]{ly2}
Lee, J.~P., Yu, M.~T., {(2007)}. Valuation of catastrophe reinsurance with
  catastrophe bonds. Insurance: Mathematics and Economics 41~(2), 264--278.

\bibitem[{Levi and Partrat(1991)}]{levi1991statistical}
Levi, C., Partrat, C., 1991. Statistical analysis of natural events in the
  united states. ASTIN Bulletin 21~(2), 253--276.

\bibitem[{Lin and Wang(2009)}]{lin2009pricing}
Lin, X.~S., Wang, T., 2009. Pricing perpetual american catastrophe put options:
  A penalty function approach. Insurance: Mathematics and Economics 44~(2),
  287--295.

\bibitem[{Lo and Hui(2016)}]{lo2016pricing}
Lo, C.-F., Hui, C.-H., 2016. Pricing corporate bonds with interest rates
  following double square-root process. International Journal of Financial
  Engineering 3~(03), 1650015.

\bibitem[{Lo et~al.(2013)Lo, Lee, and Yu}]{lo2013valuation}
Lo, C.-L., Lee, J.-P., Yu, M.-T., 2013. Valuation of insurers’ contingent
  capital with counterparty risk and price endogeneity. Journal of Banking \&
  Finance 37~(12), 5025--5035.

\bibitem[{Longstaff(1989)}]{longstaff1989nonlinear}
Longstaff, F.~A., 1989. A nonlinear general equilibrium model of the term
  structure of interest rates. Journal of financial economics 23~(2), 195--224.

\bibitem[{Ma and Ma(2013)}]{mm}
Ma, Z.-G., Ma, C.-Q., 2013. Pricing catastrophe risk bonds: A mixed
  approximation method. Insurance: Mathematics and Economics 52~(2), 243--254.

\bibitem[{Merton({1976})}]{merton}
Merton, R.~C., {1976}. Option pricing when underlying stock returns are
  discontinuous. Journal of Financial Economics 3~(1), 125--144.

\bibitem[{Milidonis and Grace(2008)}]{milidonis2008tax}
Milidonis, A., Grace, M.~F., 2008. Tax-deductible pre-event catastrophe loss
  reserves: The case of florida1. ASTIN Bulletin 38~(1), 13--51.

\bibitem[{M{\"u}ller and Grandi(2000)}]{muller2000weather}
M{\"u}ller, A., Grandi, M., 2000. Weather derivatives: a risk management tool
  for weather-sensitive industries. The Geneva Papers on Risk and Insurance.
  Issues and Practice 25~(2), 273--287.

\bibitem[{Nowak and Romaniuk(2013)}]{nr}
Nowak, P., Romaniuk, M., 2013. Pricing and simulations of catastrophe bonds.
  Insurance: Mathematics and Economics 52~(1), 18--28.

\bibitem[{Palmowski and Rolski(2002)}]{PalmowskiRolski}
Palmowski, Z., Rolski, T., 2002. {A technique for exponential change of measure
  for Markov processes}. Bernoulli 8~(6), 767--785.

\bibitem[{Papachristou(2009)}]{ppd}
Papachristou, D., 2009. {Statistical Analysis of the Spreads of Catastrophe
  Bonds at the Time of Issue}. \textit{Working Paper, presented at the 39th
  ASTIN Colloquium.}

\bibitem[{Pazarbasioglu et~al.(2011)Pazarbasioglu, Zhou, Lesl\'e, and
  Moor}]{pazarbasioglu2011}
Pazarbasioglu, C., Zhou, J., Lesl\'e, V.~L., Moor, M., 2011. Contingent
  capital: Economic rationale and design features. International Monetary Fund:
  Monetary and Capital Markets Department, Staff Discussion Note.

\bibitem[{R{\"u}dlinger(2015)}]{rudlinger2015contingent}
R{\"u}dlinger, M., 2015. Contingent convertible bonds: An empirical analysis of
  drivers and announcement effect. Ph.D. thesis, University of St. Gallen.

\bibitem[{Spiegeleer and Schoutens(2012)}]{spiegeleer2012pricing}
Spiegeleer, J.~D., Schoutens, W., 2012. Pricing contingent convertibles: A
  derivatives approach. The Journal of Derivatives 20~(2), 27--36.

\bibitem[{SwissRe(2009)}]{swissReport2009}
SwissRe, 2009. The role of indices in transferring insurance risk to the
  capital markets. Sigma 4/2009. Zurich, Switzerland.

\bibitem[{SwissRe(2013)}]{swissReport2013}
SwissRe, 2013. {Annual Report: Contingent capital instruments.} URL:
  \url{http://reports.swissre.com/2013/financial-report/financial-statements/notes-to-the-group-financial-statements/6-debt-and-contingent-capital-instruments.html}.

\bibitem[{Vasicek({1977})}]{vasicek}
Vasicek, O., {1977}. An equilibrium characterization of the term structure. J.
  Financ. Econ. 5~(2), 177--188.

\bibitem[{Vaugirard(2003)}]{vaugirard2003pricing}
Vaugirard, V.~E., 2003. Pricing catastrophe bonds by an arbitrage approach. The
  Quarterly Review of Economics and Finance 43~(1), 119--132.

\bibitem[{Von~Neumann(1951)}]{von195113}
Von~Neumann, J., 1951. {Various Techniques Used in Connection With Random
  Digits}. Applied Mathematical Series. National Bureau of Standards,
  Washington DC. 12, 36--38.

\bibitem[{Wang(2016)}]{wang2016catastrophe}
Wang, X., 2016. Catastrophe equity put options with target variance. Insurance:
  Mathematics and Economics 71, 79--86.

\end{thebibliography}

\end{document}